\documentclass[letterpaper, 11pt]{amsart}
\usepackage{macros}
\usepackage{fullpage}
\usepackage{notations}

\usepackage[foot]{amsaddr}
\newboolean{doubleblind}
\setboolean{doubleblind}{False}
\title{An FPRAS for Antiferromagnetic Ising Models on Random Regular Bipartite Graphs}
\ifdoubleblind
    \author{Author(s)}
\else
    \author{Zhidan Li, Kuan Yang}
    \address[Zhidan Li]{School of Computer Science, Shanghai Jiao Tong University, Shanghai, China. \textnormal{Email: \url{yueEnTeRnAL@sjtu.edu.cn}}.}
    \address[Kuan Yang]{John Hopcroft Center for Computer Science, Shanghai Jiao Tong University, Shanghai, China. \textnormal{Email: \url{kuan.yang@sjtu.edu.cn}}.}
\fi

\begin{document}

\begin{abstract}
We design randomized approximation schemes for the partition function of antiferromagnetic Ising models with uniform external field on random regular bipartite graphs.
Our algorithm generalizes the approach of Kocurek, Oveis Gharan and Tjowasi (arXiv, 2026) for hard-core models on the same random graph model beyond the uniqueness threshold.
We show that, as long as $\lambda$ is upper bounded by a constant and $\lambda(1 - \beta) \lesssim \Delta^{-1/2}$, an efficient randomized algorithm approximates the partition function with high probability.
The algorithm first truncates configurations that are large on either side of the bipartition and then samples from Gibbs distributions conditioned on fixed sizes on one or both sides.
To choose an optimal truncation bound, we establish concentration properties of the Gibbs distribution on random regular bipartite graphs. Then we apply high-dimensional expansion and prove trickle-down theorems to obtain fast samplers for the conditioned distributions.
\end{abstract}

\maketitle

\section{Introduction} \label{sec:introduction}
The \emph{(Lenz-)Ising model} is one of the most fundamental models in statistical physics, serving as a canonical framework for studying phase transitions, Gibbs measures, and spin correlations.
From the algorithmic perspective, approximating the partition function of the Ising model is a central problem in approximate counting, with applications ranging from statistical inference to sampling from Gibbs distributions.
%Approximating its partition function is a fundamental algorithmic problem in statistical physics, and also a canonical test case for understanding the boundary between tractability and computational hardness of statistical physics systems.
We study the Ising model in its subgraph representation, in which configurations are subsets of vertices. This representation is equivalent to the usual spin formulation on regular graphs, and is particularly convenient for bipartite graphs because the configuration can be naturally decomposed according to the two sides of the bipartition. Given a graph $G = (V, E)$ and two parameters $\lambda > 0$ and $\beta \ge 0$, the Ising model with uniform external field $\lambda$ and edge interaction $\beta$ on $G$ consists of a finite state space (or configuration space) $\Omega = 2^{V(G)}$ and a weight function $w = w_{G; \lambda, \beta} : \Omega \to \mathbb{R}$ inducing a partition function $\Z_G(\lambda, \beta)$ and its Gibbs distribution $\mu = \mu_{G; \lambda, \beta}$.
A configuration $S$ in $\Omega$ is a subset of vertices $V$, and its weight is defined by
\begin{align} \label{eq:weight-function}
    w(S) = \lambda^{\abs{S}} \beta^{\abs{E(G[S])}}
\end{align}
where $G[S]$ is the induced subgraph of $G$ by $S$.
The \emph{partition function} of the model is the total weight of the system:
\begin{align*}
    \Z_{G}(\lambda, \beta) = \sum_{S \subseteq V(G)} w(S),
\end{align*}
which is the normalizing factor of the induced Gibbs distribution $\mu_{G; \lambda, \beta}$ on $\Omega$:
\begin{align*}
    \forall S \subseteq V(G), \quad \mu(S) = \frac{w(S)}{\Z_{G}(\lambda, \beta)}.
\end{align*}
When $\beta > 1$, the model is \emph{ferromagnetic}; when $\beta < 1$, it is \emph{antiferromagnetic}.
When the underlying graph $G$ is regular, this subgraph representation is equivalent, up to a change of parameters, to the usual spin-system formulation; see~\Cref{sec:model-equivalence}.

In this work, we focus on antiferromagnetic Ising models.
It is a significant problem to estimate their partition functions, because the partition function determines the free energy and  approximate counting is closely related to sampling from Gibbs distributions and estimating thermodynamic quantities.
Unfortunately, exact computation is $\numberP$-hard even on triangle-free graphs, since counting independent sets ($\#\mathbf{IS}$), which can be viewed as a special case  ($\lambda = 1$ and $\beta = 0$) of anti-ferromagnetic Ising models, is $\numberP$-hard ~\cite{Greenhill00}.
This motivates the search for efficient approximation schemes for $\Z_{G}(\lambda, \beta)$.

In the study of approximate counting, the uniqueness threshold of the Gibbs measure on infinite $\Delta$-regular trees has been shown to characterize the algorithmic threshold for many important classes of spin systems.
On the algorithmic side, seminal works ~\cite{Weitz06,LLY13,CCYZ25,CJMYZ26} show that, in the uniqueness regime or at the threshold, there exists a (randomized) polynomial-time algorithm to approximate $\Z_G(\lambda, \beta)$.
In contrast, Sly and Sun~\cite{SS12}, and Galanis, \v{S}tefankovi\v{c} and Vigoda~\cite{GSV16} show that unless $\NP = \RP$, there is no $\FPRAS$ for approximating the partition functions for general bounded-degree graphs in the non-uniqueness regime. This dichotomy suggests that the tree uniqueness threshold captures an important boundary in the algorithmic complexity of spin systems.
However, for bipartite graphs, the situation is different. The problem of (approximately) counting independent sets on a bipartite graph ($\numberBIS$) is widely regarded as an intermediate approximate counting problem between $\P$ and $\NP$; currently no computational hardness results are known, nor is there an approximation algorithm that works for general cases. In addition, Cai et al.~\cite{CGGGJSV16} shows that approximating the partition function of an antiferromagnetic two-spin system in the non-uniqueness regime is $\numberBIS$-hard.

Despite the lack of approximation schemes on general graphs in the tree non-uniqueness threshold, recent studies has shown that random bipartite graph models provide an important intermediate setting between trees and worst-case graphs. They retain nontrivial global dependence structure responsible for phase transitions, while their expansion properties allow additional algorithmic techniques. Understanding spin systems on such graphs may reveal whether the tree uniqueness threshold remains an algorithmic barrier in typical instances. For example, approximation schemes beyond the tree uniqueness threshold have recently been obtained for the hard-core model on random regular bipartite graphs: 
Kocurek, Oveis Gharan and Tjowasi~\cite{KOGT26} give an $\FPRAS$ whenever $\lambda = O(\Delta^{-1/2})$. Together with~\cite{JKP20}, these works give approximation schemes for the hard-core model for all fugacities on random regular bipartite graphs.
Meanwhile, for antiferromagnetic Ising models, Geisler et al.~\cite{GKSW26} show that, despite torpid mixing of Glauber dynamics, an $\FPTAS$ exists for $\Z_G(\lambda, \beta)$ on a family of regular bipartite expanders that includes almost every regular bipartite graph, in a regime far from the uniqueness threshold.
These algorithmic results motivate the following question:
\begin{center}
\emph{Does the antiferromagnetic Ising model on a random regular bipartite graph admit an efficient deterministic or randomized approximation scheme for all uniform external fields and all edge interactions, with high probability?}
\end{center}

We partially resolve this question by establishing an $\FPRAS$ in a weak-interaction regime that extends beyond the tree uniqueness threshold. Our main result is the following theorem.
\begin{theorem} \label{thm:random-regular-bipartite-Ising-FPRAS}
    There exist a positive integer $\Delta_0 \ge 3$ and two positive constants $\lambda_0, C > 0$ such that for every positive integer $\Delta \ge \Delta_0$, real numbers $\lambda > 0$ and $\beta \in [0, 1]$ such that
    \begin{align*}
        \lambdaUpperBound, \quad \ParameterCondition,
    \end{align*}
    the following holds with high probability\footnote{In this paper, when we say that an event occurs `with high probability', we mean that it occurs with probability \(1-o(1)\).} over the random $\Delta$-regular bipartite $2n$-vertex graph $G$ for all sufficiently large $n$.
    There is an $\FPRAS$ for the partition function of Ising models on $G$ with uniform external field $\lambda$ at edge interaction $\beta$.
\end{theorem}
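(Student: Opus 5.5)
We follow the strategy in the abstract, adapting~\cite{KOGT26} from the hard-core model to the antiferromagnetic Ising model. Write the bipartition as $V = L \cup R$ with $\abs{L} = \abs{R} = n$. For $S \subseteq V$ put $S_L = S \cap L$ and $S_R = S \cap R$. The partition function then splits as
\begin{align*}
    \Z_G(\lambda,\beta) = \sum_{a,b} \Z_{a,b}, \qquad \Z_{a,b} = \sum_{\abs{S_L}=a,\ \abs{S_R}=b} \lambda^{a+b}\beta^{e(S_L,S_R)} .
\end{align*}
We fix a truncation threshold $k = \Theta(n/\sqrt{\Delta})$ (to be optimized) and split $\Z_G$ into three parts: configurations with both sides small ($a,b \le k$), configurations with $a > k$, and configurations with $b > k$. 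On a random regular bipartite graph, expansion (via an expander mixing lemma bound on $e(S_L,S_R)$ for large sets) shows that $a > k$ and $b > k$ cannot both hold without paying a factor $\beta^{\Omega(ab\Delta/n)}$. By the assumption $\lambda(1-\beta) \lesssim \Delta^{-1/2}$, this factor makes that part a negligible $e^{-\Omega(n)}$ fraction of $\Z_G$. So it suffices to approximate the ``small'' part and the two ``one side large'' parts.

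\textbf{Step 1 (concentration).} We show that with high probability over $G$ the Gibbs distribution is concentrated: both the restricted measures and $\mu$ itself put all but an $e^{-\Omega(n)}$ (or $1/\mathrm{poly}$) fraction of mass near a few values of $(a,b)$. The proof is a first-moment computation of $\E_G[\Z_{a,b}]$ in the configuration model. This expectation is a product of $\binom{n}{a}\binom{n}{b}\lambda^{a+b}$ with a hypergeometric-type edge count factor $\E[\beta^{e(S_L,S_R)}] \approx (1-(1-\beta)b/n)^{\Delta a}$. Combining it with a second-moment or small-subgraph-conditioning argument gives $\Z_{a,b}$ concentrated around its mean for the relevant $(a,b)$. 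Optimizing over $(a,b)$ yields the truncation bound $k$. It also lets us estimate ratios such as $\Z_{a,b}/\Z_{a,b+1}$ by self-reducibility, so we only need approximate samplers for measures conditioned on fixed $a$ (with $b$ free), or on fixed $(a,b)$.

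\textbf{Step 2 (samplers for conditioned distributions).} For fixed $a$ with $b$ free, we integrate out the $R$ side. Summing over $S_R$ gives the weight $\lambda^a \prod_{v \in R}(1+\lambda\beta^{\deg_{S_L}(v)})$ on $S_L$, a $k$-homogeneous distribution. We prove spectral independence, or local-to-global spectral expansion, for this distribution (and for the doubly-conditioned one) using a trickle-down theorem. The base case is the links of codimension two. There the local walk matrix has off-diagonal entries governed by correlations $\beta^{\abs{N(u)\cap N(u')}}$-type terms, whose size is controlled by $\lambda(1-\beta)$ times codegrees. With high probability these codegrees are $O(1)$ in a random regular bipartite graph, which gives second eigenvalue $O(\lambda(1-\beta)\sqrt{\Delta}/k)$-ish per link. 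The trickle-down then propagates a bound of the form $O(1/k)$ to all levels, as long as $\lambda(1-\beta)\sqrt{\Delta} \le C$. This yields polynomial mixing of the down-up walk, and from it the samplers.

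\textbf{Step 3 (assembly).} We telescope over sizes, using annealing in $a$ and $b$ and the samplers, to estimate each piece within $1\pm\varepsilon$. We also check that the truncation error is at most an $\varepsilon$ fraction. The main obstacle is Step 2: establishing the trickle-down base case uniformly over all links. Links conditioned on adversarial pinned sets could make local neighborhoods atypically dense. I would first try to restrict to ``typical'' pinnings via the concentration from Step 1, and use a robust (approximate) trickle-down that tolerates a small fraction of bad links.
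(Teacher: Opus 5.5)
Your overall architecture matches the paper's: truncate, integrate out $R$ to get the one-side weight $\lambda^{a}\prod_{v\in R}(1+\lambda\beta^{d_{S_L}(v)})$, use two-side slices for small configurations, and run trickle-down from codegree and spectral bounds. The truncation step, however, has a genuine gap.

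\textbf{The negligibility claim is false.} You claim that configurations with $a>k$ and $b>k$ are negligible for $k=\Theta(n/\sqrt{\Delta})$, and you credit the hypothesis $\lambda(1-\beta)\lesssim\Delta^{-1/2}$. That hypothesis is an \emph{upper} bound on the penalty $\beta^{e(S_L,S_R)}$, so it can only make both-sides-large configurations heavier, never negligible.

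Concretely, take $\lambda=\lambda_0$ and $1-\beta=1/\Delta$; this satisfies $\lambda(1-\beta)\sqrt{\Delta}=\lambda_0/\sqrt{\Delta}\le C$. For every $d\le\Delta$ we have $\beta^{d}\ge(1-1/\Delta)^{\Delta}\ge 1/4$. Hence each vertex is occupied with conditional probability at least $\lambda/(4+\lambda)$, whatever the other side is. So $|S\cap L|$ and $|S\cap R|$ both dominate $\mathrm{Bin}(n,\lambda/(4+\lambda))$. Both exceed $\lambda n/(2(4+\lambda))\gg n/\sqrt{\Delta}$ with probability $1-e^{-\Omega(n)}$. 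The part you discard therefore carries essentially all of $Z_G$. Here expander mixing only yields a factor $\beta^{ab\Delta/n}\approx e^{-ab/n}$, which loses to the entropy $\binom{n}{a}\binom{n}{b}\lambda^{a+b}$.

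\textbf{What is missing: a $\beta$- and $\lambda$-dependent threshold.} The threshold must meet two competing demands.

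When $(1-\beta)\Delta=O(1)$, it must reach the cutoff $\rho n$ with $\rho=(1+\eta)\lambda/(1+\lambda)$, using two-side slices only. This cutoff is valid on every bipartite graph, because $|S\cap L|$ is dominated by $\mathrm{Bin}(n,\lambda/(1+\lambda))$.

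In the opposite regime, e.g.\ $1-\beta=\Theta(1)$ and $\lambda\approx C/\sqrt{\Delta}$, the scale $n/\sqrt{\Delta}$ is too large for the two-side slices your plan needs. On a fixed $(\ell,r)$ slice, $\lambda$ cancels. The codimension-$2$ link is a bipartite walk with kernel $p_uq_v(1-(1-\beta)A_G(u,v))$. The available base-case bound needs roughly $W(A)/n\gtrsim(1-\beta)\sqrt{\Delta}\,|A|/n$ for every pinning. But $W(A)/n\approx e^{-\Theta(\sqrt{\Delta})}$ for typical $A$ with $|A|=\Theta(n/\sqrt{\Delta})$. So your heuristic $O(\lambda(1-\beta)\sqrt{\Delta}/k)$ is not the relevant quantity for the doubly conditioned complex.

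\textbf{How the paper reconciles the two demands.} It takes $\alpha=\min\{\alpha_0,\rho\}$, where $(1-(1-\beta)\alpha_0)^{\Delta}=K(1+3(1-\beta)\sqrt{\Delta})\alpha_0$.
\begin{itemize}
\item The resulting lower bound on $W(A)$ makes every two-side link with $\ell,r\le\alpha n$ expand.
\item The matching upper bound gives $\lambda W(A)\le|A|/2$ for $|A|=\alpha n$. Combined with conditional independence of $S\cap R$ given $S\cap L$, monotonicity of $W$, and a Chernoff bound (not expander mixing), this makes configurations with both sides above $\alpha n$ negligible.
\item The condition $\lambda(1-\beta)\sqrt{\Delta}\le C$ is exactly what makes these compatible, namely $K(\lambda+3\lambda(1-\beta)\sqrt{\Delta})$ small.
\item The condition enters the one-side base case squared, since $s\le\rho n=O(\lambda n)$ and the link bound is of order $\lambda(1-\beta)^2\lambda_2(A_G)^2/\min_x Z_A(x)$.
\end{itemize}

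\textbf{Adversarial pinnings.} The obstacle you flag needs neither typical pinnings nor a robust trickle-down. The paper bounds $W(A)$ from both sides simultaneously for \emph{every} $A$ with $|A|\le n/2$. It uses Freedman's inequality on a half-edge exposure martingale, with deviation of order $n\sqrt{(1-\beta)H(|A|/n)}$ so that it survives a union bound over $\binom{n}{|A|}$ sets, plus a KL bound on degree profiles. Together with codegree at most $2$ and $\lambda_2(A_G)\le 3\sqrt{\Delta}$, this makes every link good. Your Step 1 (second moment or small subgraph conditioning for $Z_{a,b}$) is not needed for the algorithm and would not give this worst-case control anyway.
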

We emphasize that the boundedness assumption on $\lambda$ is essential in our current analysis, since it plays a crucial role to establish the concentration properties of the configuration space. Thus, the theorem concerns a weak-interaction regime rather than the full parameter space.

\subsection{Overview of techniques}
The main difficulty is that Glauber dynamics may mix slowly beyond the uniqueness threshold. Rather than analyzing local dynamics on the full configuration space directly, we exploit the geometry of suitable conditioned configuration spaces, decompose the Gibbs distribution into a collection of structured slices and prove rapid mixing within each slice; details appear in~\Cref{sec:preliminaries,sec:random-regular-bipartite-Ising}.

\subsubsection*{\bf Concentration phenomenon}
A central geometric property of $\Omega$ is \emph{concentration}.
We establish two kinds of concentrations: (a) the one-side concentration inequality, and (b) the concentration for \emph{degree profiles}.
The one-side concentration inequality gives exponential decay for the number of selected vertices on either side, and the other concentration provides a demonstration of the degree profiles after fixing a vertex subset on one side of size at most half of that side.
Together, these estimates allow us to truncate large configurations and reduce the problem to small or balanced `slices' on both sides, while retaining an accurate approximation of $\Z_G(\lambda,\beta)$.

\subsubsection*{\bf High-dimensional expanders}
High-dimensional expanders provide an efficient framework to design and analyze $\FPAUS$es for counting problems.
Seminal works~\cite{AL20,Oppenheim18} develop tools for deriving rapid mixing of \emph{down-up walks} from `local' properties of the Gibbs distribution.

We apply this framework to analyze the down-up walks for the Gibbs distribution conditional on specific slices.
To establish local spectral expansion, we prove trickle-down theorems analogous to those in~\cite{KOGT26}, with modifications required for the soft constraints of the antiferromagnetic Ising model. However, extending these techniques is nontrivial. Unlike the hard-core model, which imposes a hard constraint forbidding adjacent occupied vertices, the antiferromagnetic Ising model allows configurations containing adjacent occupied vertices but receiving smaller weights. Consequently, the correlations inside the Gibbs distribution are more subtle, and the techniques developed for hard constraints cannot be directly applied.

\subsection{Related works and discussion}

Our result gives an $\FPRAS$ when both $\lambda$ and $\lambda(1-\beta)\sqrt{\Delta}$ are upper bounded by constants; this regime includes a portion of the tree non-uniqueness region.
In contrast, ~\cite{GKSW26} provides an $\FPTAS$ in a different non-uniqueness regime, when $\lambda$ is upper bounded and $\lambda(1 - \beta) = \Omega\ab((\ln{\Delta})^{3/2}/\Delta^{\min\set{1, 1 - \kappa/2}})$, via abstract polymer models and cluster expansion approach.
A gap remains between these parameter regimes.
It also remains open whether  the upper bound on the vertex fugacity $\lambda$ is inherent or merely a limitation of the current methods.

To analyze the configuration space, our analysis adapts the \emph{independent-set slices} of~\cite{KOGT26} to this soft-constrained setting.
The concentration estimates identify the truncation parameters, while high-dimensional expanders are used to prove rapid mixing of down-up walks on slices of prescribed sizes.
Similar local-to-global arguments have been used to sample fixed-size independent sets~\cite{AL20,JMPV23}, sample hard-core and antiferromagnetic two-spin systems on random regular graphs~\cite{CCCYZ25}, and approximately count the hard-core models on random regular bipartite graphs~\cite{KOGT26}.
%We present trickle-down theorems like~\cite{KOGT26} and derive an $\FPRAS$ by the rapid sampler for slices with desired size.

\subsection{Organization of this paper}
We introduce necessary preliminaries in~\Cref{sec:preliminaries}.
In~\Cref{sec:random-regular-bipartite-Ising}, we establish some properties of the Gibbs distribution of antiferromagnetic Ising models on random regular bipartite graphs.
Proofs and analysis of trickle-down theorems of high-dimensional expanders are provided in~\Cref{sec:trickle-theorems}.
Finally, we develop our approximation scheme in~\Cref{sec:algorithm-design}.

\subsection*{Statement on AI use}
We apply ChatGPT 5.6 Pro Sol to optimize the choice of parameters in our algorithm.
The concentration inequalities are provided by ChatGPT 5.6 Pro Sol after we give our drafts and related materials to it.
All calculations in this work are verified by us and we take full responsibility of this paper.
\section{Preliminaries} \label{sec:preliminaries}
\subsection{Notations}
We use $\e$ to denote the natural logarithm base and $\ln$ to denote the natural logarithm function $\log_{\e}$.
For a natural number $n$, we use $[n]$ to denote the set $\set{1, 2, \ldots, n}$.

For an operator/matrix $A$, we use $A^{\top}$ to denote its transpose.
The spectral norm of $A$ is defined by $\norm{A}_2 \defeq \max_{\norm{\vecx}_2 = \norm{\vecy}_2 = 1}\vecx^{\top} A \vecy$.
We use $\one_{S}$ to denote the all-one vector on $S$ and $\ID_S$ to denote the identity matrix on $S$ (when context is clear, we simply use $\one$ and $\ID$).
We use $\lambda_i(A)$ to denote the $i$-th largest eigenvalue of $A$ and $\lambda_{\max}(A), \lambda_{\min}(A)$ to denote the largest and smallest eigenvalue respectively.

For a random variable $X$, we denote by $\mathsf{Law}(X)$ its law.
The expectation of $X$ is denoted by $\E{X}$ and the variance of $X$ is denoted by $\Var{X}$.
We use $M_{X}(\cdot)$ to denote the moment generating function of $X$, \IE, $M_X(t) = \E{e^{tX}}$.
The entropy function $H(X)$ is defined as $H(X) \defeq -\sum_{x} \Pr{X = x} \ln{\Pr{X = x}}$ with convention $0\ln{0} = 0$.
With abuse of notations, we use $H(p)$ to denote the relative entropy function of a Bernoulli random variable with rate $p$.

For two probability distributions $P$ and $Q$ supported a finite state space $\Omega$, the Kullback-Leibler divergence $\KL{P}{Q}$ is defined by
\begin{align*}
    \KL{P}{Q} \defeq \sum_{x \in \Omega} P(x) \ln{\frac{P(x)}{Q(x)}},
\end{align*}
with convention $0\ln{(0/0)} = 0$ and $p\ln{(p/0)} = \infty$ for $p > 0$.
For two real numbers $p, q \in [0, 1]$, we simply use $\KL{p}{q}$ to denote the KL-divergence between two Bernoulli random variables with mean $p$ and $q$, \IE, $\KL{p}{q} \defeq \KL{\Ber{p}}{\Ber{q}}$.

The graph $G = (V(G), E(G))$ consists of a vertex set $V(G)$ and an edge set $E(G) \subseteq V(G) \times V(G)$.
The bipartite graph is denoted by $G = (V(G) = V_L(G) \cup V_R(G), E(G))$ where $V_L(G) \cap V_R(G) = \emptyset$ and $E(G) \subseteq V_L(G) \times V_R(G)$.
When the context is clear, we simply use $V, V_L, V_R$ and $E$ to denote the graph.
Given a graph $G = (V, E)$, let $E_G(S, T)$ be the number of edges between $S$ and $T$.
For a vertex subset $S \subseteq V$, let $G[S]$ be the subgraph of $G$ induced by $S$.
Moreover, we define the degree function $d_S(\cdot)$ on $V$ as $d_S(v) \defeq \abs{E_G(\set{v}, S)}$ the number of edges between $v$ and $S$.

\subsection{Useful inequalities}
We state some useful inequalities here.
\begin{lemma}[Markov inequality] \label{lem:Markov-inequality}
    For a non-negative random variable $X$ and $a > 0$, it holds that
    \begin{align*}
        \Pr{X \ge a} \le \frac{\E{X}}{a}.
    \end{align*}
    Moreover, if it always holds that $X \ge 1$, then $\Pr{X > 0} \le \E{X}$.
\end{lemma}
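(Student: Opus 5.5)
The plan is to prove the inequality directly from the definition of expectation, by comparing the random variable $X$ with a scaled indicator. For $a > 0$, consider the event $\set{X \ge a}$ and its indicator $\mathbf{1}[X \ge a]$. Since $X$ is non-negative, we have the pointwise inequality
\begin{align*}
    a \cdot \mathbf{1}[X \ge a] \le X.
\end{align*}
To check it, split into two cases. If $X \ge a$, the left-hand side equals $a$, which is at most $X$. Otherwise the left-hand side is $0$, which is at most $X$ because $X \ge 0$.

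Taking expectations of both sides and using monotonicity and linearity of expectation gives $a \Pr{X \ge a} \le \E{X}$. Dividing by $a > 0$ yields the first claim. No integrability issue arises: if $\E{X} = \infty$ the bound is trivial.

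For the second claim, the hypothesis must be read as ``$X \ge 1$ whenever $X > 0$.'' Read literally, $X \ge 1$ everywhere would force $\Pr{X > 0} = 1 \le \E{X}$, which is also true. Under the intended reading, $X$ is non-negative and takes values in $\set{0} \cup [1, \infty)$, for instance an integer-valued count. Then the events $\set{X > 0}$ and $\set{X \ge 1}$ coincide, so applying the first part with $a = 1$ gives
\begin{align*}
    \Pr{X > 0} = \Pr{X \ge 1} \le \E{X}.
\end{align*}
The same argument also settles the literal reading.

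There is no real obstacle in this proof. The only point needing care is interpreting the hypothesis of the second statement so that $\set{X > 0} = \set{X \ge 1}$, which is precisely the step that reduces it to the case $a = 1$.
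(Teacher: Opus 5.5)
Your proof is correct. The pointwise bound $a \cdot \mathbf{1}[X \ge a] \le X$ followed by taking expectations is the standard argument. Your reading of the second clause is also right: $X$ takes values in $\{0\} \cup [1,\infty)$, so $\{X > 0\} = \{X \ge 1\}$ and the case $a = 1$ applies, and the literal reading is covered as well. The paper gives no proof of this lemma and treats it as a standard fact, so there is no separate argument to compare against.
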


Using Markov inequality with the moment generating function, we can derive the following version of Chernoff bound (for details, refer~\cite{Gerbessiotis25}).
\begin{lemma}[Chernoff bound] \label{lem:Chernoff-bound}
    For a non-negative random variable $X$ and $a, t > 0$, it holds that
    \begin{align*}
        \Pr{X \ge a} \le M_X(t) \e^{-ta}.
    \end{align*}
    Moreover, if $X = \sum_{i = 1}^n X_i$ where each $X_i$ is an independent Bernoulli random variable with rate $p$, for every $p < \rho < 1$, it holds that
    \begin{align*}
        \Pr{X \ge \rho n} \le \exp(-n\KL{\rho}{p}) = \ab(\ab(\frac{p}{\rho})^{\rho} \ab(\frac{1 - p}{1 - \rho})^{1 - \rho})^n.
    \end{align*}
\end{lemma}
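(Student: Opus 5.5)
The first claim, $\Pr{X \ge a} \le M_X(t)\e^{-ta}$, follows by applying~\Cref{lem:Markov-inequality} to the non-negative random variable $\e^{tX}$. Since $t > 0$, the map $x \mapsto \e^{tx}$ is strictly increasing, so the events $\set{X \ge a}$ and $\set{\e^{tX} \ge \e^{ta}}$ coincide. Markov's inequality with threshold $\e^{ta} > 0$ then gives
\begin{align*}
    \Pr{X \ge a} = \Pr{\e^{tX} \ge \e^{ta}} \le \frac{\E{\e^{tX}}}{\e^{ta}} = M_X(t)\e^{-ta}.
\end{align*}

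For the binomial statement, I would apply the first part with $a = \rho n$ and use independence to factor the moment generating function:
\begin{align*}
    M_X(t) = \prod_{i=1}^n \E{\e^{tX_i}} = \ab(1 - p + p\e^t)^n .
\end{align*}
This yields, for every $t > 0$,
\begin{align*}
    \Pr{X \ge \rho n} \le \ab((1-p+p\e^t)\e^{-t\rho})^n .
\end{align*}
The next step is to optimize over $t$. Setting the derivative of $\ln(1-p+p\e^t) - t\rho$ to zero gives
\begin{align*}
    \frac{p\e^t}{1-p+p\e^t} = \rho, \qquad \text{so} \qquad \e^{t^*} = \frac{\rho(1-p)}{p(1-\rho)} .
\end{align*}
The hypothesis $p < \rho < 1$ guarantees $\e^{t^*} > 1$, so $t^* > 0$ is an admissible choice.

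Substituting $t^*$, we get $1 - p + p\e^{t^*} = \frac{1-p}{1-\rho}$, and hence
\begin{align*}
    (1-p+p\e^{t^*})\e^{-t^*\rho} = \frac{1-p}{1-\rho}\ab(\frac{p(1-\rho)}{\rho(1-p)})^{\rho} = \ab(\frac{p}{\rho})^{\rho}\ab(\frac{1-p}{1-\rho})^{1-\rho}.
\end{align*}
This expression equals $\exp(-\KL{\rho}{p})$ by the definition $\KL{\rho}{p} = \rho\ln\frac{\rho}{p} + (1-\rho)\ln\frac{1-\rho}{1-p}$. Raising it to the $n$-th power gives the stated bound.

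No step is a genuine obstacle. The only points needing care are confirming that the optimizer $t^*$ is positive, which is exactly where $\rho > p$ is used, and doing the algebraic simplification at $t^*$ correctly.
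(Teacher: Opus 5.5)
Your proof is correct and takes the route the paper indicates. The paper gives no proof of its own, only the remark that the bound follows from Markov's inequality applied to the moment generating function, plus a citation; you apply Markov to $e^{tX}$, factor $M_X(t)=(1-p+pe^t)^n$ by independence, and plug in the admissible $t^*>0$, and your algebra at $t^*$ checks out. The only degenerate case your formula for $t^*$ misses is $p=0$, where it divides by zero: there $X=0$ almost surely and the claimed right-hand side is $0$ (by the product formula, or the convention $\rho\ln(\rho/0)=\infty$), so the bound holds trivially since $\rho n>0$.
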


The Popoviciu's inequality on variances in~\cite{Popoviciu35} gives an upper bound on the variance with respect to the range of the random variable.
\begin{proposition}[Popoviciu's inequality on variances] \label{prop:Popoviciu-inequality}
    For a random variable $X$, it always holds that
    \[
        \Var{X} \le \frac{1}{4} \ab(M - m)^2 
    \]
    where $M$ is the possible maximum of $X$ and $m$ is the possible minimum of $X$.
\end{proposition}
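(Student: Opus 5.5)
The plan is to reduce Popoviciu's inequality to the elementary fact that the variance minimizes mean squared deviation over all constants. Write $\mu = \E{X}$, and let $c = (M+m)/2$ be the midpoint of the range. For every real $c$ we have the identity
\begin{align*}
    \E{(X - c)^2} = \Var{X} + (\mu - c)^2,
\end{align*}
obtained by expanding $X - c = (X - \mu) + (\mu - c)$ and noting that the cross term has expectation zero. Hence $\Var{X} \le \E{(X - c)^2}$ for every $c$, and in particular for the midpoint.

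With this choice of $c$, the assumption $m \le X \le M$ almost surely gives $\abs{X - c} \le (M - m)/2$ pointwise. Squaring and taking expectations yields
\begin{align*}
    \E{(X - c)^2} \le \frac{(M - m)^2}{4}.
\end{align*}
Combining the two displays gives the claim.

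The only point needing care is the hypotheses. If $X$ is unbounded, then $M - m = \infty$ and the statement is trivial. Otherwise $X$ is bounded, so its second moment is finite and the variance identity above is legitimate. No step here is a genuine obstacle; the sole idea is to compare against the midpoint $c = (M+m)/2$ rather than against the mean.
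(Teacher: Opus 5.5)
Your argument is correct and complete. The identity $\E{(X-c)^2} = \Var{X} + (\E{X}-c)^2$, combined with the pointwise bound $\abs{X-c} \le (M-m)/2$ at the midpoint $c = (M+m)/2$, is the standard proof. Your remark about the unbounded case handles the degenerate situation.

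The paper does not prove this proposition at all. It quotes it as a known fact with a citation to Popoviciu, so there is no competing route to compare with.

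It is worth noting that the paper actually applies the inequality in conditional form: it bounds $\E{(Z_i - Z_{i-1})^2 \mid \calF_{i-1}}$ in the Freedman argument of the appendix. Your proof transfers verbatim to that setting. Replace expectations with conditional expectations given $\calF_{i-1}$, and take $M, m$ to be the largest and smallest values of $f$ over the admissible next half-edges. The range bound $\max_{x,y}\abs{f(x)-f(y)}$ used there is then exactly the quantity your argument controls.
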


The following concentration inequality for martingales with bounded increments, which is a special version of Bernstein's inequality and known as \emph{Freedman's inequality}, plays a crucial role in our analysis.
For details, we recommend readers to refer~\cite{Freedman75,DvZ01}.
\begin{lemma}[Freedman's inequality] \label{lem:Freedman-inequality}
    Let $\set{(Z_i, \calF_i)}_{i = 0}^n$ be a martingale with bounded increments ($\abs{Z_i - Z_{i - 1}} \le B$ almost surely) and set
    \begin{align*}
        V_n = \sum_{i = 1}^n \E{(Z_i - Z_{i - 1})^2 \mid \calF_{i - 1}}.
    \end{align*}
    Then for every $t, v > 0$, it holds that
    \begin{align*}
        \Pr{\abs{Z_n - Z_0} \ge t, V_n \le v} \le 2\exp\ab(-\frac{t^2}{2(v + Bt/3)}).
    \end{align*}
\end{lemma}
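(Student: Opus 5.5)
The statement to be proved is Freedman's inequality, which the paper cites from~\cite{Freedman75,DvZ01}. I would prove it by the standard exponential supermartingale argument. Write $D_i = Z_i - Z_{i-1}$, so that $\E{D_i \mid \calF_{i-1}} = 0$ and $\abs{D_i} \le B$. It suffices to bound the upper tail $\Pr{Z_n - Z_0 \ge t, V_n \le v}$ by $\exp(-t^2/(2(v+Bt/3)))$. The lower tail then follows by applying the same bound to the martingale $-Z_i$, which has the same increment bound and the same $V_n$, and a union bound produces the factor $2$.

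\textbf{Step 1 (one-step bound).} Set $\phi(x) = (\e^x - 1 - x)/x^2$, which is increasing in $x$. For $\theta > 0$ we have $\e^{\theta D_i} = 1 + \theta D_i + \theta^2 D_i^2 \phi(\theta D_i) \le 1 + \theta D_i + \theta^2 D_i^2 \phi(\theta B)$. Taking conditional expectations gives
\[
\E{\e^{\theta D_i} \mid \calF_{i-1}} \le 1 + g(\theta)\, \E{D_i^2 \mid \calF_{i-1}} \le \exp\ab(g(\theta)\, \E{D_i^2 \mid \calF_{i-1}}),
\]
where $g(\theta) = (\e^{\theta B} - 1 - \theta B)/B^2$.

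\textbf{Step 2 (supermartingale and stopping).} Define $W_k = \exp(\theta (Z_k - Z_0) - g(\theta) V_k)$, where $V_k$ is the partial sum of conditional variances. Since $V_k$ is $\calF_{k-1}$-measurable, Step 1 shows that $(W_k)$ is a nonnegative supermartingale with $W_0 = 1$, so $\E{W_n} \le 1$. On the event $\set{Z_n - Z_0 \ge t, V_n \le v}$ we have $W_n \ge \exp(\theta t - g(\theta) v)$. Markov's inequality (\Cref{lem:Markov-inequality}) then yields
\[
\Pr{Z_n - Z_0 \ge t, V_n \le v} \le \exp\ab(-\theta t + g(\theta) v).
\]
No stopping time is needed, because we only look at time $n$ and use $V_n \le v$ directly.

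\textbf{Step 3 (optimization).} This is the only real computation. Use the estimate $\e^{x} - 1 - x \le \frac{x^2}{2(1 - x/3)}$ for $0 \le x < 3$, which follows from comparing Taylor coefficients, since $k! \ge 2\cdot 3^{k-2}$. It gives $g(\theta) \le \frac{\theta^2}{2(1 - \theta B/3)}$. Now choose $\theta = t/(v + Bt/3)$, for which $\theta B < 3$. Then $1 - \theta B/3 = v/(v+Bt/3)$, and so
\[
-\theta t + g(\theta) v \le -\frac{t^2}{v+Bt/3} + \frac{t^2}{2(v+Bt/3)} = -\frac{t^2}{2(v+Bt/3)}.
\]
This is the claimed exponent, and combining it with the lower tail gives the statement. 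I expect the main obstacle to be verifying the elementary inequality in Step 3 cleanly; the rest is routine.
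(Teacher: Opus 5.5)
Your proof is correct and is the standard exponential-supermartingale argument from the references the paper cites; the paper itself gives no proof and defers to Freedman and Dzhaparidze--van Zanten. Because the statement only concerns the event at time $n$, you are right to skip the optional-stopping step that Freedman uses to get the stronger maximal version over all $k \le n$.
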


\subsection{Markov chains and mixing time}
We state here some basic definitions and lemmas for Markov chains and for more details, please see~\cite{LP17}.

For a state space $\Omega$, a Markov chain $M$ consists of a transition kernel $P$ on $\Omega$ with its stationary distribution $\pi$.
For brevity, we sometimes use the transition matrix $P$ to denote the Markov chain $M$.
In this work, we only consider \emph{ergodic} and \emph{reversible} Markov chains (see details in~\cite{LP17}).
%, and for an ergodic Markov chain $M$, we say it is reversible with respect to $\pi$ if the detailed balanced equation holds:
%\begin{align} \label{eq:detailed-balanced-equation}
%    \forall \sigma, \tau \in \Omega, \quad \pi(\sigma) P(\sigma, \tau) = \pi(\tau) P(\tau, \sigma).
%\end{align}

We use standard notation for total variation distance and mixing time.
For two probability distributions $\mu, \nu$ on $\Omega$, the \emph{total variation distance} is defined as
\begin{align*}
    \TV{\mu}{\nu} \defeq \frac{1}{2} \sum_{x \in \Omega} \abs{\mu(x) - \nu(x)} = \max_{\Lambda \subseteq \Omega}\ \abs{\mu(\Lambda) - \nu(\Lambda)}.
\end{align*}
For a positive real $\eps \in (0, 1)$, define the \emph{mixing time} of $M$ with error $\eps$ as
\begin{align*}
    \tau_{\mix}(M, \eps) \defeq \inf\set{t \in \mathbb{N} \midc \forall x \in \Omega, \TV{P^t(x, \cdot)}{\pi} \le \eps}.
\end{align*}
It is a route way to bound the mixing time by the spectral gap of $P$.
For a random walk $P$, set $\lambda_*(P) = \max\set{\lambda_2(P), \abs{\lambda_{\min}(P)}}$.
The following fact is standard.
\begin{fact}
    For a Markov chain $P$ with respect to $\mu$, it holds that
    \[
        \tau_{\mix}(P, \eps) \le O\ab(\frac{1}{1 - \lambda_*(P)}\ab(\ln{\frac{1}{\mu_{\min}}} + \ln{\eps^{-1}}))
    \]
    where $\mu_{\min} \defeq \min_{x \in \Omega : \mu(x) > 0} \mu(x)$.
    Consequently, for its lazy version $\frac{P + I}{2}$, it holds that
    \[
        \tau_{\mix}\ab(\frac{P + I}{2}, \eps) \le O\ab(\frac{1}{1 - \lambda_2(P)}\ab(\ln{\frac{1}{\mu_{\min}}} + \ln{\eps^{-1}})).
    \]
\end{fact}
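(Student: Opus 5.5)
This is the standard spectral bound on mixing time, and I would prove it in three steps: a spectral expansion in $L^2(\mu)$, a conversion from $\chi^2$ to total variation, and a short argument for the lazy chain.

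First, use reversibility. Because $P$ is reversible with respect to $\mu$, it is self-adjoint on $L^2(\mu)$ with inner product $\langle f,g\rangle_\mu=\sum_x \mu(x)f(x)g(x)$. So $P$ has an orthonormal eigenbasis $f_1=\one, f_2,\dots$ with eigenvalues $1=\lambda_1\ge\lambda_2\ge\dots\ge\lambda_{\min}$. Ergodicity gives $\lambda_*(P)<1$; if $\lambda_*=1$ the bound is vacuous anyway.

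Next, fix a starting state $x$ and write $h_t(y)=P^t(x,y)/\mu(y)$. Reversibility gives $h_t=P^t h_0$ with $h_0=\one_{\{x\}}/\mu(x)$. Expanding $h_0-\one$ in the eigenbasis, which is orthogonal to $f_1=\one$, yields
\[
\norm{h_t-\one}_{\mu}^2\le \lambda_*(P)^{2t}\,\norm{h_0-\one}_\mu^2 = \lambda_*(P)^{2t}\Big(\tfrac{1}{\mu(x)}-1\Big)\le \frac{\lambda_*(P)^{2t}}{\mu_{\min}}.
\]
By Cauchy--Schwarz,
\[
2\,\TV{P^t(x,\cdot)}{\mu}=\norm{h_t-\one}_{L^1(\mu)}\le\norm{h_t-\one}_{L^2(\mu)}\le \lambda_*^{t}\mu_{\min}^{-1/2}.
\]
Using $\lambda_*^t\le \e^{-t(1-\lambda_*)}$, the right-hand side is at most $\eps$ once
\[
t\ge \frac{1}{1-\lambda_*}\Big(\tfrac12\ln\tfrac1{\mu_{\min}}+\ln\tfrac1\eps\Big).
\]
This gives the first claim. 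If the state space of $P$ is larger than the support of $\mu$, as with conditioned chains, the starting state should be taken in the support; this is implicit in the definition of $\mu_{\min}$.

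For the lazy chain $\frac{P+I}{2}$, the eigenvalues are $\frac{1+\lambda_i(P)}{2}\in[0,1]$, so they are all nonnegative. Hence $\lambda_*\bigl(\frac{P+I}{2}\bigr)=\frac{1+\lambda_2(P)}{2}$ and $1-\lambda_*=\frac{1-\lambda_2(P)}{2}$. Substituting this into the first bound loses only a factor of $2$, which is absorbed in the $O(\cdot)$.

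There is no genuine obstacle here. The only point needing care is to perform the eigen-expansion in the weighted space $L^2(\mu)$, where $P$ is self-adjoint, rather than in the standard Euclidean space.
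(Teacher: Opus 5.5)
Your proof is correct: the paper gives no argument of its own, only calling the fact standard and pointing to the Levin--Peres text for background. Your route (expanding $h_t - 1$ in an eigenbasis of $L^2(\mu)$, passing from $\chi^2$ to total variation by Cauchy--Schwarz, and noting that $\frac{P+I}{2}$ has spectrum in $[0,1]$ so that $1-\lambda_* = \frac{1-\lambda_2(P)}{2}$) is exactly the standard proof being invoked, and it even yields the slightly sharper term $\frac{1}{2}\ln\frac{1}{\mu_{\min}}$.
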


\subsection{Approximate algorithms}
We formally define the concept of approximate algorithms.
For a quantity $Z \in \mathbb{R}$ and a tolerance $\eps \in (0, 1)$, we say $\wh{Z}$ is an $\eps$-approximation to $Z$ if $(1 - \eps) Z \le \wh{Z} \le (1 + \eps) Z$.
\begin{definition}[{$\FPTAS$ and $\FPRAS$}]
    For a function $Z$ from a family of instances to reals, a \emph{fully polynomial-time approximation scheme} ($\FPTAS$) is a deterministic algorithm such that, for every instance $\Phi$ and every tolerance $\eps \in (0, 1)$, it outputs an $\eps$-approximation to $Z(\Phi)$ in time $\poly{\abs{\Phi}, \eps^{-1}}$.
    Similarly, a \emph{fully polynomial-time randomized approximation scheme} ($\FPRAS$) is a randomized algorithm such that, for every instance $\Phi$ and every tolerance $\eps \in (0, 1)$, it outputs an $\eps$-approximation to $Z(\Phi)$ in time $\poly{\abs{\Phi}, \eps^{-1}}$ with probability at least $3/4$.
\end{definition}

\begin{definition}[$\FPAUS$]
    Fix an input $\Phi$ and an induced state space $\Omega = \Omega(\Phi)$.
    For a non-zero weight function $w : \Omega \to \mathbb{R}_{\ge 0}$, consider the distribution $\mu$ on $\Omega$ defined as $\mu(x) \propto w(x)$.
    A \emph{fully polynomial-time almost-uniform sampler} ($\FPAUS$) for $\mu$ outputs a sample $X$ from $\Omega$ randomly provided a tolerance $\eps \in (0, 1)$ such that $\TV{\mathsf{Law}(X)}{\mu} \le \eps$ and the running time is $\poly{\abs{\Phi}, \eps^{-1}}$.
\end{definition}

For a family of self-reducible problems,~\cite{JVV86} shows that an $\FPAUS$ can induce an $\FPRAS$.
For more details, refer~\cite{JVV86}.

\subsection{High-dimensional expander}
Given a finite ground set $U$, a \emph{$d$-dimensional pure simplicial complex} $(\SC, \pi)$ is a downward closed collection of subsets of $U$ with all maximal subsets having size $d$ and  equipped with a probability distribution $\pi$ over all maximal subsets.
We call a subset in $\SC$ a \emph{face} or \emph{facet}, and for every facet $S \in \SC$, its dimension $\dim{S}$ is defined as its cardinality $\abs{S}$ and the co-dimension $\codim{S}$ is defined by $\codim{S} = d - \dim{S}$.
For every integer $k = 0, 1, \ldots, d$, let $\SC(k)$ be the collection of facets of dimension $k$, \IE,
\begin{align*}
    \SC(k) \defeq \set{S \in \SC \mid \dim{S} = k}.
\end{align*}
For each $k = 0, 1, \ldots, d$, we induce a probability distribution $\pi_k$ according to the marginal distribution of $\pi$ on $\SC(k)$ as
\begin{align} \label{eq:simplicial-complex-weight-function}
    \forall S \in \SC(k), \quad \pi_k(S) = \frac{1}{\binom{d}{k}} \sum_{T \in \SC(d) : S \subseteq T} \pi(T).
\end{align}
Based on $\set{\pi_{k}}_{k = 0}^d$, we define random walks on $\SC$.
At first, we consider two kinds of operators $\pu_k \in \mathbb{R}^{\SC(k) \times \SC(k + 1)}$ for $k = 0, \ldots, d - 1$ and $\pd_k \in \mathbb{R}^{\SC(k) \times \SC(k - 1)}$ for $k = 1, \ldots, d$ defined as
\begin{align*}
    \pu_k(S, T) = \id{T \subseteq S} \cdot \frac{\pi_{k + 1}(T)}{(k + 1)\pi_k(S)}, \quad \pd_k(S, T) = \id{T \subseteq S} \frac{1}{k}.
\end{align*}
Then there are two natural random walks on the simplicial complexes $\SC$:
\begin{enumerate}
    \item the up-down walk $\pud_{k} = \pu_{k} \pd_{k + 1}$ for every $k = 0, 1, \ldots, d - 1$ and its non-lazy version $\pudnl_{k} = \frac{k + 2}{k + 1} \pud_{k} - \frac{1}{k + 1} \ID_{\SC(k)}$;
    \item the down-up walk $\pdu_{k} = \pd_{k} \pu_{k - 1}$ for every $k = 1, \ldots, d$ and its non-lazy version $\pdunl_{k} = \frac{k + 1}{k} \pdu_{k} - \frac{1}{k} \ID_{\SC(k)}$.
\end{enumerate}
It is direct to observe that these random walks are all reversible with respect to $\pi_k$.

An important sub-structure of a simplicial complex plays a crucial role is the \emph{links}.
For a facet $S \in \SC$, we define the link of $\SC$ on $S$ as
\begin{align*}
    \SC_S \defeq \set{T \mid T \cap S = \emptyset \land S \cup T \in \SC}.
\end{align*}
It is similar to define the distributions $\pi_{S, k}$ and random walks $\pu_{S, k}$, $\pd_{S, k}$, $\pud_{S, k}$, $\pdu_{S, k}$, $\pudnl_{S, k}$ and $\pdunl_{S, k}$ on $\SC_S$ as notations in $\SC$.

The concept of \emph{local-spectral expander} plays a crucial role in the analysis of mixing rate of random walks on a simplicial complex.
\begin{definition}[Local-spectral expander] \label{def:local-spectral-expander}
    For a $d$-dimensional simplicial complex $(\SC, \pi)$, we say it is a \emph{$(\gamma_0, \ldots, \gamma_{d - 2})$-local-spectral expander} if for every $k = 0, \ldots, d - 2$ and $S \in \SC(k)$, it holds that $\lambda_{2}(\pudnl_{S, 1}) \le \gamma_k$.
\end{definition}

When $\SC$ is a local-spectral expander, the spectral gap of the down-up walk $\pud_{d}$ is bounded by~\cite{AL20}.
\begin{theorem}[\cite{AL20}] \label{thm:local-to-global}
    For a $d$-dimensional simplicial complex $(\SC, \pi)$ which is a $(\gamma_0, \ldots, \gamma_{d - 2})$-local spectral expander, it holds that for every $k = 1, \ldots, d$,
    \begin{align*}
        \lambda_2(\pdu_k) \le 1 - \frac{1}{k} \prod_{i = 0}^{k - 2} (1 - \gamma_i).
    \end{align*}
\end{theorem}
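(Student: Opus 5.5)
The plan is the standard Garland-style local-to-global induction in the style of~\cite{AL20}. It rests on two ingredients. The first is a decomposition of the non-lazy up-down walk at level $k$ into link walks. The second is the observation that $\pdu_{k+1} = \pd_{k+1}\pu_k$ and $\pud_k = \pu_k \pd_{k+1}$ share their nonzero spectrum, so that $\lambda_2(\pdu_{k+1}) = \lambda_2(\pud_k)$ whenever the second eigenvalue is positive (otherwise the bound is trivial). Write $\rho_k \defeq 1 - \lambda_2(\pdu_k)$ for the spectral gap. We aim to prove $\rho_{k+1} \ge \frac{k}{k+1}(1-\gamma_{k-1})\rho_k$, with base case $\rho_1 = 1$: $\pdu_1 = \one\pi_1^\top$ is rank one, so $\lambda_2(\pdu_1)=0$. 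Unrolling the recursion gives $\rho_k \ge \frac{1}{k}\prod_{i=0}^{k-2}(1-\gamma_i)$, which is the claim.

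\textbf{Step 1 (Garland decomposition).} For $f:\SC(k)\to\mathbb{R}$ with $k\ge1$, we verify directly from the definitions of $\pi_k$ and $\pu_k,\pd_{k+1}$ the identity
\[
\langle f, \pudnl_k f\rangle_{\pi_k} = \mathbb{E}_{S\sim \pi_{k-1}} \langle f_S, \pudnl_{S,1} f_S\rangle_{\pi_{S,1}}, \qquad \langle f, f\rangle_{\pi_k} = \mathbb{E}_{S\sim\pi_{k-1}}\langle f_S,f_S\rangle_{\pi_{S,1}} .
\]
Here $f_S(x) \defeq f(S\cup\{x\})$ is the restriction to the link of $S$. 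Combinatorially, the non-lazy up-down step from $T$ adds some $y$ and removes some $x\in T$ with $x\neq y$. This is the same as choosing a uniform $S\subset T$ with $|S|=k-1$ and taking one non-lazy step $x\to y$ in the link $\SC_S$ at level $1$. Checking that the weights match is a routine computation with the binomial normalization in~\eqref{eq:simplicial-complex-weight-function}.

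\textbf{Step 2 (applying local expansion).} In each link with $S\in\SC(k-1)$, we split $f_S$ into its $\pi_{S,1}$-mean plus a mean-zero part. The walk $\pudnl_{S,1}$ is reversible with respect to $\pi_{S,1}$ and has second eigenvalue at most $\gamma_{k-1}$ by \Cref{def:local-spectral-expander}, so
\[
\langle f_S, \pudnl_{S,1} f_S\rangle \le \gamma_{k-1}\langle f_S,f_S\rangle + (1-\gamma_{k-1})\,\big(\mathbb{E}_{\pi_{S,1}} f_S\big)^2 .
\]
Averaging over $S\sim\pi_{k-1}$, the last term becomes $(1-\gamma_{k-1})\langle f, \pdu_k f\rangle_{\pi_k}$, because $\mathbb{E}_{\pi_{S,1}}f_S = (\pu_{k-1}f)(S)$ and $\pdu_k=\pd_k\pu_{k-1}$. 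Next we substitute $\pudnl_k = \frac{k+2}{k+1}\pud_k - \frac{1}{k+1}\ID$ and take $f\perp\one$ in $\ell^2(\pi_k)$, using $\langle f,\pdu_k f\rangle \le (1-\rho_k)\langle f,f\rangle$. This yields
\[
\frac{k+2}{k+1}\lambda_2(\pud_k) \le \frac{1}{k+1} + \gamma_{k-1} + (1-\gamma_{k-1})(1-\rho_k),
\]
hence $1-\lambda_2(\pud_k) \ge \frac{k+1}{k+2}(1-\gamma_{k-1})\rho_k$.

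\textbf{Step 3 (closing the induction).} Since $\lambda_2(\pdu_{k+1})=\lambda_2(\pud_k)$, Step 2 gives $\rho_{k+1}\ge \frac{k+1}{k+2}(1-\gamma_{k-1})\rho_k$. This is a slightly stronger factor than needed, but I should double-check the indexing against the claimed $\frac{1}{k}$. Telescoping gives a product of factors $\frac{i+1}{i+2}$. Together with the base case this produces exactly $\frac1k\prod_{i=0}^{k-2}(1-\gamma_i)$, after confirming that the normalizations of the $k$-th level (the $\frac{1}{k+1}$ in $\pu_k$ versus the dimension convention) line up.

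The main obstacle I expect is Step 1: verifying precisely that the weights of the non-lazy up-down walk equal the $\pi_{k-1}$-average of the link walks. The normalizations $\binom{d}{k}$ and the factors $\frac{1}{k+1}$ must cancel exactly, and that cancellation is what fixes the constants in the recursion. I would do this carefully by expressing both sides as sums over facets $F\in\SC(d)$ containing $T\cup\{y\}$.
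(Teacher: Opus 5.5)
Your plan is the Alev--Lau argument itself (the paper gives no proof and only cites \cite{AL20}). The recursion you announce at the start, $\rho_{k+1}\ge\frac{k}{k+1}(1-\gamma_{k-1})\rho_k$ with $\rho_1=1$, is the right one. Steps 2--3 do not prove it, however, because you used the normalization $\pudnl_k=\frac{k+2}{k+1}\pud_k-\frac{1}{k+1}\ID$ displayed in the preliminaries. That formula is the dimension-indexed one from \cite{AL20}. Here faces in $\SC(k)$ have $k$ elements and $\pd_{k+1}$ deletes one of $k+1$ elements uniformly. Hence $\pud_k(T,T)=\frac{1}{k+1}$, and the zero-diagonal walk is $\pudnl_k=\frac{k+1}{k}\pud_k-\frac1k\ID$. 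At level $1$ this is $2\pud_{S,1}-\ID$, the zero-diagonal link walk your Step 1 describes and the paper's trickle-down lemmas actually work with.

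With the displayed normalization, your Step 1 identity fails: the left side has diagonal entries $\frac{1}{(k+1)^2}$, while the non-lazy link walks have none. The resulting inequality $1-\lambda_2(\pud_k)\ge\frac{k+1}{k+2}(1-\gamma_{k-1})\rho_k$ is false already at $k=1$. Indeed $\pud_1=\frac12\ID+\frac12\pudnl_{\emptyset,1}$ exactly, so $1-\lambda_2(\pud_1)=\frac12\bigl(1-\lambda_2(\pudnl_{\emptyset,1})\bigr)$. This equals $\frac12(1-\gamma_0)<\frac23(1-\gamma_0)$ whenever $\gamma_0<1$ is attained; for example, take $K_m$ with uniform edge weights, $d=2$, $\gamma_0=-\frac{1}{m-1}$. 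Step 3's arithmetic is also off: $\prod_{i=1}^{k-1}\frac{i+1}{i+2}=\frac{2}{k+1}$, not $\frac1k$. So the ``slightly stronger factor'' is a symptom of the error, not slack.

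The repair is the computation you deferred. From the definitions,
$\pi_{S,1}(x)=\frac{\pi_k(S\cup\{x\})}{k\,\pi_{k-1}(S)}=\pu_{k-1}(S,S\cup\{x\})$ and
$\pi_{S,2}(\{x,y\})=\frac{2\,\pi_{k+1}(S\cup\{x,y\})}{k(k+1)\,\pi_{k-1}(S)}$.
So, with the correct $\pudnl_k$, both $\langle f,\pudnl_k f\rangle_{\pi_k}$ and $\mathbb{E}_{S\sim\pi_{k-1}}\langle f_S,\pudnl_{S,1}f_S\rangle_{\pi_{S,1}}$ equal
\[
\sum_{S\in\SC(k-1)}\ \sum_{x\neq y}\frac{\pi_{k+1}(S\cup\{x,y\})}{k(k+1)}\,f(S\cup\{x\})\,f(S\cup\{y\}).
\]
This holds because ordered pairs of distinct $k$-faces $T,T'$ with $|T\cap T'|=k-1$ correspond bijectively to triples $(S,x,y)$.

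Your local bound then gives, for $f\perp\one$,
$\langle f,\pud_k f\rangle_{\pi_k}=\frac{1}{k+1}\|f\|^2+\frac{k}{k+1}\langle f,\pudnl_k f\rangle_{\pi_k}\le\bigl(1-\frac{k}{k+1}(1-\gamma_{k-1})\rho_k\bigr)\|f\|^2$.
This is precisely the announced recursion, and it telescopes to $\frac1k\prod_{i=0}^{k-2}(1-\gamma_i)$.

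Finally, unrolling multiplies a lower bound on $\rho_k$ by $1-\gamma_{k-1}$, which needs $\gamma_{k-1}\le 1$. The statement tacitly assumes this; it can fail if two of the $\gamma_i$ exceed $1$, since the product then becomes positive and the bound negative. You should say so explicitly.
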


The following trickle-down theorem in~\cite{Oppenheim18} is the key ingredient.
\begin{lemma}[Oppenheim's trickle-down theorem~\cite{Oppenheim18}] \label{lem:trickling-down}
    For a $d$-dimensional pure simplicial complex $(\SC, \pi)$ and a facet $S \in \SC$, suppose that for every $v \in \SC_{S}(1)$, $\lambda_2(\pudnl_{S \cup \set{v}, 1}) \le \gamma < 1$ and $\pudnl_{S, 1}$ is connected.
    Then it holds that
    \[
        \lambda_2(\pudnl_{S, 1}) \le \frac{\gamma}{1 - \gamma}.
    \]
\end{lemma}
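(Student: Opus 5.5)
The plan is Garland's localization method. Write $P \defeq \pudnl_{S,1}$: a reversible walk on $\SC_S(1)$ with stationary distribution $\pi_{S,1}$, written $\pi_S$. For each $v \in \SC_S(1)$, write $P_v \defeq \pudnl_{S\cup\set{v},1}$: a walk on $\SC_{S\cup\set{v}}(1)$ with stationary distribution $\pi_v \defeq \pi_{S\cup\set{v},1}$. The structural facts I will verify first from \eqref{eq:simplicial-complex-weight-function} are:
\begin{align*}
    \pi_S(u) = \sum_{v} \pi_S(v)\,\pi_v(u), \qquad \pi_S(u)P(u,w) = \sum_{v} \pi_S(v)\,\pi_v(u)P_v(u,w).
\end{align*}
Both follow by writing all quantities as marginals of $\pi$ on facets containing $S \cup \set{u,w}$ or $S\cup\set{u,v,w}$ and counting orderings. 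In words, the edge measure of the link of $S$ is the $\pi_S$-average of the edge measures of the links of $S\cup\set{v}$. A consequence I will also record is $P(v,u) = \pi_v(u)$: the non-lazy walk from $v$ moves to a neighbour distributed as the vertex marginal of the link of $v$.

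Next I fix an eigenvector $f$ of $P$ with eigenvalue $\mu = \lambda_2(P)$, with $f \perp \one$ in $L^2(\pi_S)$. I may assume $\mu > 0$, since otherwise there is nothing to prove. Using the two identities above,
\begin{align*}
    \mu\norm{f}_{\pi_S}^2 = \langle f, Pf\rangle_{\pi_S} = \sum_v \pi_S(v)\,\langle f, P_v f\rangle_{\pi_v}, \qquad \norm{f}_{\pi_S}^2 = \sum_v \pi_S(v)\,\norm{f}_{\pi_v}^2 .
\end{align*}
In each link, I decompose $f$ into its $\pi_v$-mean plus a part orthogonal to constants. By the hypothesis $\lambda_2(P_v) \le \gamma$, this gives
\begin{align*}
    \langle f, P_v f\rangle_{\pi_v} \le \gamma\norm{f}_{\pi_v}^2 + (1-\gamma)\bigl(\mathbb{E}_{\pi_v} f\bigr)^2 .
\end{align*}
Here I must be careful that the bound holds even when $P_v$ has negative eigenvalues, which it does because only an upper bound is needed. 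Since $\pi_v = P(v,\cdot)$, I have $\mathbb{E}_{\pi_v} f = (Pf)(v) = \mu f(v)$.

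Averaging over $v \sim \pi_S$ then yields
\begin{align*}
    \mu\norm{f}^2 \le \gamma\norm{f}^2 + (1-\gamma)\mu^2\norm{f}^2, \qquad\text{i.e.}\qquad (1-\gamma)\mu^2 - \mu + \gamma \ge 0 .
\end{align*}
The quadratic factors as $(\mu - 1)\bigl((1-\gamma)\mu - \gamma\bigr) \ge 0$, so either $\mu \ge 1$ or $\mu \le \gamma/(1-\gamma)$. Connectedness of $P$ gives $\mu = \lambda_2(P) < 1$, which forces $\mu \le \gamma/(1-\gamma)$.

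The main obstacle is the bookkeeping in the first step. I need the exact normalizations of $\pu,\pd$, the non-lazy correction $\pudnl_{S,1}$ (whose diagonal vanishes), and the binomial factor $1/\binom{d}{k}$ in $\pi_k$ to combine so that $P(v,\cdot)$ coincides exactly with the vertex marginal $\pi_v$ of the link and the edge-measure decomposition holds with no stray constants. I would carry this out by expressing everything as $\Pr_{T\sim\pi}$ of containment events of a uniformly random ordering of $T\setminus S$.
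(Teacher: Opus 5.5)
Your proposal is correct and is the standard Garland-localization proof from Oppenheim's paper, which the paper cites without reproving. With the zero-diagonal non-lazy walk you intend (despite the paper's $\frac{k+2}{k+1}$ normalization), the identities $P(v,\cdot)=\pi_v$ and the edge-measure decomposition over the links of $S\cup\set{v}$ both check out, and the factorization $(\mu-1)((1-\gamma)\mu-\gamma)\ge 0$ then finishes the proof. One slip to fix: drop the step ``assume $\mu>0$, otherwise there is nothing to prove'', which is false when $\gamma<0$ but is never used, since the argument works for every $\mu=\lambda_2(P)<1$.
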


\subsection{Random regular bipartite graphs}
In this part, we introduce the model of random regular bipartite graphs.
Let $\RBG(n, \Delta)$ be the probabilistic model generating a graph from all $\Delta$-regular $2n$-vertex bipartite graphs uniformly at random.
For simplicity of analysis, we introduce the \emph{pairing model} to study random regular bipartite graphs.
\begin{definition}[Pairing model] \label{def:configuration-model}
    The pairing model generates a bipartite multi-graph $G = (V = V_L \cup V_R, E)$ in the following way:
    \begin{itemize}
        \item the vertex set is set by $V_L = V_R = [n]$;
        \item to generate $E$:
        \begin{itemize}
            \item define $\calV_L = V_L \times [\Delta]$ and $\calV_R = V_R \times [\Delta]$, and choose a uniformly random perfect matching between $\calV_L$ and $\calV_R$;
            \item for a pair $(u, i)$ and $(v, j)$ in the perfect matching, generate an edge $(u, v)$ in $E$.
        \end{itemize}
    \end{itemize}
\end{definition}
The following proposition builds a connection between $\RBG(n, \Delta)$ and $\PM(n, \Delta)$.
\begin{proposition}[\cite{Wormald99}] \label{prop:configuration-model-to-random-regular-bipartite-graph}
    For an event $\calE$, if $\Pr[\PM(n, \Delta)]{\calE} = o(1)$, then $\Pr[\RBG(n, \Delta)]{\calE} = o(1)$.
\end{proposition}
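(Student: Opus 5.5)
The plan is the standard conditioning argument for the pairing model, with $\Delta$ treated as a fixed constant. The claim reduces to two facts: (i) conditioned on the output of $\PM(n, \Delta)$ being a simple graph, it is uniform over all $\Delta$-regular bipartite simple graphs on $V_L = V_R = [n]$, i.e.\ it has law $\RBG(n, \Delta)$; and (ii) $\Pr[\PM(n, \Delta)]{G \text{ is simple}} \ge c(\Delta) > 0$ for all sufficiently large $n$. Granting these, for any event $\calE$,
\[
  \Pr[\RBG(n, \Delta)]{\calE} = \Pr[\PM(n,\Delta)]{\calE \mid G \text{ simple}} \le \frac{\Pr[\PM(n,\Delta)]{\calE}}{\Pr[\PM(n,\Delta)]{G \text{ simple}}} \le \frac{o(1)}{c(\Delta)} = o(1),
\]
which is the proposition.

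For (i), I will count the perfect matchings between $\calV_L$ and $\calV_R$ that project to a fixed simple $\Delta$-regular bipartite graph $H$. Since $H$ is simple, each edge $(u,v)$ of $H$ appears exactly once. A matching projects to $H$ exactly when, at every vertex $x$, the $\Delta$ points $\set{x} \times [\Delta]$ are assigned bijectively to the $\Delta$ edges of $H$ at $x$. Conversely, any choice of such bijections at all $2n$ vertices determines a unique matching that projects to $H$. Hence exactly $(\Delta!)^{2n}$ matchings project to each simple $H$. This number does not depend on $H$, and the matching is uniformly random, so the conditional law given simplicity is uniform.

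For (ii), let $X$ be the number of pairs of parallel edges in the multigraph; in the bipartite setting there are no loops. I would show by the method of factorial moments that $X$ converges in distribution to $\mathrm{Poisson}(\mu)$ with $\mu = (\Delta-1)^2/2$, which gives $\Pr{X = 0} \to \e^{-(\Delta-1)^2/2} > 0$. For the first moment, pick $u \in V_L$, $v \in V_R$, two points at $u$ and two at $v$, and one of the $2$ ways to pair them. The probability that both specified pairs appear in the matching is $1/((n\Delta)(n\Delta - 1))$. Summing gives
\[
  \E{X} = n^2 \binom{\Delta}{2}^2 \cdot \frac{2}{n\Delta(n\Delta-1)} \to \frac{(\Delta-1)^2}{2}.
\]
For the $k$-th factorial moment, I will sum over $k$ distinct double edges. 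When the $k$ double edges involve pairwise disjoint sets of points, each configuration contributes probability $\prod_{i<2k} (n\Delta - i)^{-1}$, so this part of the sum tends to $\mu^k$. Overlapping configurations (double edges sharing a vertex, or triple edges) span fewer vertices relative to the number of prescribed pairs, so their total contribution is $O(1/n)$.

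The main obstacle is this last $k$-th moment estimate, which needs a careful but routine classification of overlap patterns. If only a lower bound is needed, I would instead try a simpler route: a second-moment or switching argument showing that $\Pr{X = 0}$ is bounded below by a constant, which suffices for the conclusion.
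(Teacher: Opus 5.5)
Your argument is correct and is the standard proof behind the result the paper simply cites from \cite{Wormald99}: each simple $\Delta$-regular bipartite graph arises from exactly $(\Delta!)^{2n}$ matchings, and for fixed $\Delta$ the number of double edges is asymptotically Poisson with mean $(\Delta-1)^2/2$, so $\mathbb{P}(\text{simple}) \to \mathrm{e}^{-(\Delta-1)^2/2} > 0$. The only slip is in your fallback remark: a second-moment bound lower-bounds $\mathbb{P}(X>0)$, not $\mathbb{P}(X=0)$, so if you want to avoid factorial moments, the switching argument is the fallback that works.
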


For a bipartite graph $G = (V = V_L \cup V_R, E)$, define the matrix $A_G \in \mathbb{R}^{V_L \times V_R}$ as
\[
    A_G(u, v) = \begin{cases}
        1 & (u, v) \in E \\
        0 & (u, v) \notin E
    \end{cases}\;.
\]
We remark here the matrix $A_G$ plays a significant role in the spectral analysis of $G$, since the adjacent matrix of $G$ can be denoted by
\begin{align*}
    \Adj(G) = \begin{pmatrix}
        0 & A_G \\
        A_G^{\top} & 0
    \end{pmatrix}\;.
\end{align*}
\begin{proposition}[\cite{KOGT26}] \label{prop:RBG-spectrum}
    With high probability over $G \sim \RBG(n, \Delta)$, $\lambda_2(A_G) \le 3\sqrt{\Delta}$.
\end{proposition}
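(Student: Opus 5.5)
The plan is to read $\lambda_2(A_G)$ as the second largest singular value of $A_G$. Equivalently, it is the second largest eigenvalue of $\Adj(G)$, since the spectrum of $\Adj(G)$ is $\set{\pm \sigma_i(A_G)}$. Because $G$ is $\Delta$-regular, the top singular value is $\Delta$, attained by the pair $(\one_{V_L}/\sqrt n, \one_{V_R}/\sqrt n)$. So the claim is equivalent to
\[
\sup\set{\vecx^{\top} A_G \vecy \midc \norm{\vecx}_2 = \norm{\vecy}_2 = 1,\ \vecx \perp \one_{V_L},\ \vecy \perp \one_{V_R}} \le 3\sqrt{\Delta}.
\]
I would first prove this for $\PM(n,\Delta)$ and then transfer it to $\RBG(n,\Delta)$. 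The transfer works because $\RBG(n,\Delta)$ is $\PM(n,\Delta)$ conditioned on simplicity, an event of probability bounded away from $0$ for fixed $\Delta$; this is exactly the content of \Cref{prop:configuration-model-to-random-regular-bipartite-graph}.

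The quickest route is to invoke the bipartite Friedman-type theorem of Brito, Dumitriu and Harris (or Bordenave--Collins). It states that for fixed $\Delta$, with high probability the second eigenvalue of a random $\Delta$-regular bipartite graph is at most $2\sqrt{\Delta - 1} + o(1)$, which is below $3\sqrt\Delta$ for large $n$.

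If a self-contained argument is preferred, I would run the Kahn--Szemer\'edi scheme on the pairing model instead, in three steps.
\begin{enumerate}
\item \emph{Net reduction.} Replace the unit spheres orthogonal to $\one$ by $\eps$-nets $\calN_L,\calN_R$ of size $(C/\eps)^n$ made of vectors with coordinates in $(\eps/\sqrt n)\mathbb{Z}$. Losing a constant factor, it then suffices to bound $\vecx^{\top} A_G \vecy$ for $\vecx \in \calN_L$, $\vecy \in \calN_R$.
\item \emph{Light pairs.} For pairs $(u,v)$ with $\abs{x_u y_v} \le \sqrt{\Delta}/n$, I would write the contribution as a function of the random perfect matching between $\calV_L$ and $\calV_R$. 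Its mean is $O(\sqrt\Delta)$, using $\vecx\perp\one$ and $\vecy\perp\one$. A martingale-exposure argument with \Cref{lem:Freedman-inequality} gives deviations of order $\sqrt\Delta$ with probability $\e^{-cn\cdot(\text{const})}$, small enough to beat the net size in a union bound.
\item \emph{Heavy pairs.} These are handled deterministically from a discrepancy property. With high probability, for all $S\subseteq V_L$ and $T\subseteq V_R$, either $E_G(S,T)\le c\,\Delta\abs{S}\abs{T}/n$ or $E_G(S,T)\ln\frac{E_G(S,T)n}{\Delta\abs S\abs T}\le c\max(\abs S,\abs T)\ln\frac{n}{\max(\abs S,\abs T)}$. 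This follows from \Cref{lem:Chernoff-bound}-type bounds on edge counts in the pairing model together with a union bound over $(S,T)$. The standard dyadic bucketing of coordinates then bounds the heavy contribution by $O(\sqrt\Delta)$.
\end{enumerate}

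The main obstacle is the constant. Kahn--Szemer\'edi naturally gives $C\sqrt\Delta$ with an unspecified, fairly large $C$, whereas the statement demands $3$. For that reason my primary plan is to rely on the sharp $2\sqrt{\Delta-1}+o(1)$ theorem for random bipartite regular graphs and use the transfer proposition, with the discrepancy argument as a fallback. In the fallback, any $O(\sqrt\Delta)$ bound would suffice downstream after adjusting constants such as $C$ in \Cref{thm:random-regular-bipartite-Ising-FPRAS}.
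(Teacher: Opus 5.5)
Your primary route is correct and is essentially what the paper does. The paper gives no argument of its own and simply imports this bound from \cite{KOGT26}, and your justification is a sound derivation of that cited fact: you read $\lambda_2(A_G)$ as the second singular value, which is consistent with the later use of $\lambda_2(A_G)^2$ in $U_G(\lambda,\beta)$; you invoke the Brito--Dumitriu--Harris bound $2\sqrt{\Delta-1}+o(1)$ for the bipartite pairing model with $\Delta$ fixed; and you transfer to $\RBG(n,\Delta)$ via \Cref{prop:configuration-model-to-random-regular-bipartite-graph}. The Kahn--Szemer\'edi fallback is not needed, and as you note it would only give $O(\sqrt{\Delta})$ rather than the stated constant $3$. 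The parenthetical Bordenave--Collins alternative would also need an extra contiguity step, because the union of $\Delta$ random permutations conditioned on simplicity is not uniform over simple $\Delta$-regular bipartite graphs.
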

We also use the following result.
\begin{proposition}[{\cite[Lemma 3.7]{KOGT26}}] \label{prop:neighbor-intersection}
    With high probability over $G = (V, E) \sim \RBG(n, \Delta)$, for every $u, v \in V$, $\abs{N_G(u) \cap N_G(v)} \le 2$.
\end{proposition}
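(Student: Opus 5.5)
We treat the statement for distinct vertices $u \neq v$ and argue by a first-moment computation in the pairing model $\PM(n, \Delta)$, followed by a transfer to $\RBG(n, \Delta)$ via \Cref{prop:configuration-model-to-random-regular-bipartite-graph}.

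\emph{Reduction.} Since $G$ is bipartite, $N_G(u) \cap N_G(v) = \emptyset$ whenever $u$ and $v$ lie on opposite sides. So we only need $u, v$ on the same side, say $u, v \in V_L$ (the case $V_R$ is symmetric). The bad event is that there exist distinct $u, v \in V_L$ and three distinct $w_1, w_2, w_3 \in V_R$ such that all six edges $(u, w_i), (v, w_i)$ are present. In other words, $G$ contains a copy of $K_{2,3}$. Let $X$ be the number of such configurations. Under the pairing model, each such configuration is witnessed by six pairs of points that all belong to the uniform perfect matching between $\calV_L$ and $\calV_R$. We allow multigraph edges here, since common neighbours are counted as a set.

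\emph{First moment.} We bound $\E{X}$ by summing over witnesses:
\begin{itemize}
\item there are at most $n^2$ choices of $(u, v)$ and at most $n^3$ choices of $(w_1, w_2, w_3)$;
\item there are at most $\Delta^3$ choices of points of $u$ for the three edges at $u$, likewise $\Delta^3$ for $v$, and $\Delta^6$ for the six points used on the $w_i$ side;
\item for any fixed set of six disjoint point pairs, the probability that all six lie in a uniform perfect matching of $n\Delta$ versus $n\Delta$ points is $\frac{(n\Delta - 6)!}{(n\Delta)!} \le (n\Delta - 6)^{-6}$.
\end{itemize}
Multiplying these factors gives
\[
\E{X} \le n^5 \Delta^{12} (n\Delta - 6)^{-6} = O\ab(\frac{\Delta^6}{n}),
\]
which is $o(1)$ for fixed $\Delta$, or more generally whenever $\Delta^6 = o(n)$. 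Since $X$ is a nonnegative integer, \Cref{lem:Markov-inequality} (in its $X \ge 1$ form) gives $\Pr{X > 0} \le \E{X} = o(1)$ under $\PM(n, \Delta)$. Then \Cref{prop:configuration-model-to-random-regular-bipartite-graph} transfers this $o(1)$ bound to $\RBG(n, \Delta)$.

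\emph{Main point of care.} The only delicate step is the bookkeeping in the counting. Witnesses must use six distinct point pairs, so that the exact matching probability formula applies. The overcounting from orderings of the $w_i$ and of the chosen points only loosens the upper bound, so it is harmless. The argument needs no concentration beyond Markov's inequality. It does rely on $\Delta$ being constant (or $\Delta^6 = o(n)$); this is the regime of the theorem, where $\Delta$ is fixed and $n \to \infty$.
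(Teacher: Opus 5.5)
Your proof is correct. Note that the paper gives no argument of its own here: it imports the statement from \cite{KOGT26} (Lemma 3.7). Your first-moment count of $K_{2,3}$ in the pairing model, followed by the transfer to $\RBG(n, \Delta)$, is the standard argument behind a statement of this kind, and it makes the proposition self-contained.

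Several of your choices are necessary or check out. The statement must be read for distinct $u \neq v$, since for $u = v$ the intersection is $N_G(u)$, of size $\Delta \ge 3$. The opposite-side case is empty by bipartiteness. Your bound of at most $n^5 \Delta^{12}$ ordered witnesses checks out. Each witness consists of six point pairs, which are automatically disjoint, and lies in the matching with probability $\frac{(n\Delta - 6)!}{(n\Delta)!}$. Together these give $O(\Delta^6/n)$ in $\PM(n, \Delta)$.

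The one claim to tighten is the parenthetical ``or more generally whenever $\Delta^6 = o(n)$''. That bound holds in the pairing model for any $\Delta$. However, the transfer to $\RBG(n, \Delta)$ relies on the pairing being simple with probability bounded away from $0$; in the bipartite case this probability is asymptotically $e^{-(\Delta - 1)^2/2}$. So for growing $\Delta$ you would need $\Delta^6/n = o\bigl(e^{-(\Delta-1)^2/2}\bigr)$, or a different transfer argument. For fixed $\Delta$, which is the regime of the main theorem, your proof is complete.
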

\section{Ising Models on Random Regular Bipartite Graphs} \label{sec:random-regular-bipartite-Ising}

In this section, we capture the geometry of the configuration space of Ising models on random regular bipartite graphs.
The main challenge beyond the uniqueness threshold is the presence of strong dependencies among configurations. Instead of analyzing the full Gibbs measure directly, we partition the configuration space according to the number of occupied vertices on each side of the bipartite graph. Each part admits better expansion properties, which will later allow us to construct rapidly mixing Markov chains.

To describe the partition of configurations space $\Omega$, we introduce the concept of \emph{slices} in~\cite{KOGT26}.
For every natural number $s \ge 0$, we say a vertex subset $A \subseteq V_L$ of size $s$ is an $L$-side slice of size $s$, and say a vertex subset $B \subseteq V_R$ of size $s$ is an $R$-side slice of size $s$.
For two natural numbers $\ell, r \ge 0$, an $(\ell, r)$-two-side slice $S$ is a vertex subset of $V$ such that $\abs{S \cap V_L} = \ell$, $\abs{S \cap V_R} = r$ and $w(S) > 0$.
For simplicity, let $\Omega^{(\ell, r)}$ be the collection of all $(\ell, r)$-two-side slices.

We specify slices with their cardinalities.
For every natural number $s = 0, \ldots, n$, define the one-side partial partition function $\Z_{s}^{L}, \Z_{s}^{R}$ as
\begin{align*}
    \Z_{s}^{L} \defeq \sum_{A \subseteq V_L : \abs{A} = s} \sum_{B \subseteq V_R} w(A \cup B), \quad \Z_{s}^{R} \defeq \sum_{B \subseteq V_R : \abs{B} = s} \sum_{A \subseteq V_L} w(A \cup B).
\end{align*}
Similarly, for natural numbers $\ell = 0, 1, \ldots, n$ and $r = 0, 1, \ldots, n$, define the two-side partial partition function $\Z_{\ell, r}$ as
\begin{align*}
    \Z_{\ell, r} \defeq \sum_{S \in \Omega^{(\ell, r)}} w(S).
\end{align*}
The corresponding Gibbs distribution is denoted by
\begin{gather*}
    \mu_{s}^{L}(S) \defeq \mu\ab(S \mid S \subseteq V_L, \abs{S} = s), \quad
    \mu_{s}^{R}(S) \defeq \mu\ab(S \mid S \subseteq V_R, \abs{S} = s), \\
    \mu_{\ell, r}(S) \defeq \mu\ab(S \mid \abs{S \cap V_L} = \ell, \abs{S \cap V_R} = r).
\end{gather*}
For a vertex subset $A \subseteq V_L$, define
\begin{align*}
    W(A) \defeq \sum_{v \in V_R} \beta^{d_{A}(v)}.
\end{align*}
Similarly we can extend the definition of $W$ to every vertex subset of $V_R$.

Before stating properties for $\mu_{G; \lambda, \beta}$ when $G$ is generated from random regular bipartite graphs, we describe the main idea of the approximating algorithm to make our analysis more understandable.
To approximate $\Z_{G}(\lambda, \beta)$, we pick two proper parameters $\alpha, \rho \in (0, 1)$.
Then we define the quantity $\wh{Z}$ as
\begin{align*}
    \wh{Z} \defeq \sum_{\ell \le \alpha n, r \le \alpha n} \Z_{\ell, r} + \sum_{\alpha n < s \le \rho n} \Z_{s}^{L} + \sum_{\alpha n < s \le \rho n} \Z_{s}^{R}
\end{align*}
when $\alpha n < \rho n$ and
\begin{align*}
    \wh{Z} \defeq \sum_{\ell \le \rho n, r \le \rho n} \Z_{\ell, r}
\end{align*}
otherwise. The purpose of the one-side slices is to capture configurations where one side is moderately large while the other side remains small. \Cref{thm:approximation-to-partition-function} shows that configurations with both sides larger than $\alpha n$ have exponentially small contribution.

\begin{remark}
For a parameter $\theta \in (0, 1)$, we simply use $\theta n$ instead of $\floor{\theta n}$.
This takes little impact on our analysis but makes our statements much cleaner.
For simplicity when we are saying a fraction is in an interval, it means that we take values $\theta$ such that $\theta n$ is an integer.
\end{remark}

\subsection{One-side tail bound}
The first property we use is the \emph{tail bound} of $\mu$, which can be viewed as a generalization of~\cite[Lemma 1.11]{KOGT26}.
\begin{lemma} \label{lem:tail-bound}
    For a bipartite graph $G = (V = V_L \cup V_R, E)$ with $\abs{V_L} = \abs{V_R} = n$ and two real numbers $\lambda > 0, \beta \in [0, 1]$, the Ising model on $G$ at vertex interaction $\lambda$ and edge interaction $\beta$ satisfies that for every $\rho \in (0, 1)$,
    \begin{align*}
        \Pr[S \sim \mu]{\abs{S \cap V_L} > \rho n} \le \exp\ab(-n \KL{\rho}{\lambda/(1 + \lambda)}).
    \end{align*}
    The same tail bound holds for $V_R$ as well.
\end{lemma}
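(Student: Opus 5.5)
The plan is a stochastic-domination argument: show that the law of $S \cap V_L$ under $\mu$ is dominated by independent $\Ber{\lambda/(1+\lambda)}$ inclusions, and then invoke the Chernoff bound of \Cref{lem:Chernoff-bound}. Since $G$ is bipartite, every edge of $G[S]$ joins $A = S \cap V_L$ to $B = S \cap V_R$. Summing out $B$ therefore gives the marginal weight of $A$ in closed form:
\begin{align*}
    \sum_{B \subseteq V_R} w(A \cup B) = \lambda^{\abs{A}} \prod_{v \in V_R} \ab(1 + \lambda \beta^{d_A(v)}) \eqdef \lambda^{\abs{A}} f(A).
\end{align*}
Because $\beta \in [0,1]$, each factor $1 + \lambda\beta^{d_A(v)}$ is non-increasing in $A$. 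Hence $f$ is non-increasing under inclusion, and the marginal $\nu(A) \propto \lambda^{\abs{A}} f(A)$ is the product measure with fugacity $\lambda$ reweighted by a decreasing function.

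The key step is to bound the conditional inclusion probabilities. For $u \in V_L$ and any $A' \subseteq V_L \setminus \set{u}$, we have
\begin{align*}
    \Pr[\nu]{u \in A \mid A \setminus \set{u} = A'} = \frac{\lambda f(A' \cup \set{u})}{f(A') + \lambda f(A' \cup \set{u})} \le \frac{\lambda}{1 + \lambda},
\end{align*}
since $f(A' \cup \set{u}) \le f(A')$ and $x \mapsto \lambda x/(1+\lambda x)$ is increasing. Now expose the vertices of $V_L$ one at a time. Conditioned on any history, the next vertex is included with probability at most $p \defeq \lambda/(1+\lambda)$: this conditional probability is an average of the single-site conditionals above, each of which is bounded by $p$. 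A standard sequential coupling then shows that $\abs{A}$ is stochastically dominated by $\mathrm{Bin}(n, p)$.

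It remains to apply \Cref{lem:Chernoff-bound} with rate $p$. This gives $\Pr{\abs{A} \ge \rho n} \le \exp(-n\KL{\rho}{p})$ whenever $\rho > p$, and the event $\abs{A} > \rho n$ is contained in $\abs{A} \ge \rho n$. For $\rho \le p$ the stated bound as written would fail, so I will read the statement as intended for $\rho > p$; in the paper this is the only regime where it is used. The $V_R$ case follows by symmetry.

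The main obstacle is making the domination rigorous. To do so I will couple the sequential exposure with independent uniform variables $U_i$, including the $i$-th vertex iff $U_i$ is at most its conditional probability. Alternatively, I can avoid coupling entirely and bound the moment generating function directly: $\E{\e^{t\abs{A}}} \le (1 - p + p\e^t)^n$ for $t>0$, proved by induction by peeling off one vertex at a time with the conditional bound above. Optimizing over $t$ then yields exactly the KL expression.
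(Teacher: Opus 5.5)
Your proposal is correct and follows the paper's proof. The paper likewise sums out $V_R$ to get the marginal $\lambda^{|A|}\prod_{v\in V_R}(1+\lambda\beta^{d_A(v)})$, observes that adding a vertex to $A$ multiplies this weight by at most $\lambda$ (equivalent to your single-site bound $\lambda/(1+\lambda)$), asserts domination of $|S\cap V_L|$ by independent Bernoulli variables with rate $\lambda/(1+\lambda)$, and then applies the Chernoff bound.

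Your sequential-coupling/MGF argument spells out the domination step that the paper leaves implicit. Your restriction to $\rho > \lambda/(1+\lambda)$ is also correct: the bound fails for smaller $\rho$, for instance at $\beta = 1$. That restriction matches the hypothesis of the paper's Chernoff lemma and the only place this lemma is applied.
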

\begin{proof}
    For a vertex subset $A \subseteq V_L$, since random variables $\set{\id{v \in S}}_{v \in V_R}$ are independent conditioned on $S \cap V_L = A$, it holds that
    \begin{align*}
        \Pr{S \cap V_L = A} \propto \lambda^{\abs{A}} \prod_{v \in V_R} \ab(1 + \lambda \beta^{d_A(v)}).
    \end{align*}
    Therefore, for every $A\subseteq V_L$ and every $u \in V_L \setminus A$, since $\beta \le 1$,
    \begin{align*}
        \frac{\Pr{S \cap V_L = A \cup \set{u}}}{\Pr{S \cap V_L = A}} = \lambda \prod_{v \in V_R} \frac{1 + \lambda \beta^{d_{A \cup \set{u}}(v)}}{1 + \lambda \beta^{d_A(v)}} \le \lambda.
    \end{align*}
    Then we obtain that the random variable $X = \abs{S \cap V_L}$ is dominated by $n$ independent Bernoulli random variables $\Ber{\lambda/(1 + \lambda)}$.
    By the Chernoff bound~\Cref{lem:Chernoff-bound}, we conclude the lemma.
\end{proof}

The following lemma gives a proper approximation to $\Z_{G}(\lambda, \beta)$.
\begin{lemma} \label{thm:approximation-to-partition-function}
    For two parameters $\alpha, \rho \in (0, 1/2]$, assume that $\rho > \frac{\lambda}{1 + \lambda}$ and for every vertex subset $A \subseteq V_L$ of size $\alpha n$,
    \begin{align} \label{eq:one-side-weight-condition}
        \lambda W(A) \le (1 - \kappa) \abs{A}.
    \end{align}
    Define the quantity
    \begin{align*}
        \wh{Z} = \begin{cases}
            \sum_{\ell \le \alpha n, r \le \alpha n} \Z_{\ell, r} + \sum_{\alpha n < s \le \rho n} \Z_{s}^{L} + \sum_{\alpha n < s \le \rho n} \Z_{s}^{R} & \alpha \le \rho \\
             \sum_{\ell \le \rho n, r \le \rho n} \Z_{\ell, r} & \text{otherwise}
        \end{cases}\;.
    \end{align*}
    Then there is a constant $C = C(\kappa) = -\ln{(1 - \kappa)} - \kappa$ such that
    \begin{align*}
        \abs{\frac{\wh{Z}}{\Z_{G}(\lambda, \beta)} - 1} \le 2\e^{-n\KL{\rho}{\lambda/(1 + \lambda)}} + \e^{-C\alpha n}.
    \end{align*}
\end{lemma}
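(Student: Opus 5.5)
Since $\wh{Z}\le \Z_G(\lambda,\beta)$ (every term counts distinct configurations: the two-side slices with both sides $\le\alpha n$, plus configurations with exactly one side in $(\alpha n,\rho n]$, which are disjoint from the first sum and from each other only if the other side is small; overlap is handled below), it suffices to bound the missing mass $1-\wh{Z}/\Z_G$ as a probability under $\mu$. In the case $\alpha>\rho$, the missing configurations are those with $\abs{S\cap V_L}>\rho n$ or $\abs{S\cap V_R}>\rho n$. By \Cref{lem:tail-bound} applied to each side and a union bound, this is at most $2\e^{-n\KL{\rho}{\lambda/(1+\lambda)}}$, which gives the claim.

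For $\alpha\le\rho$, I would first check how $\wh{Z}$ relates to $\Z_G$. A configuration $S$ with $\ell=\abs{S\cap V_L}$ and $r=\abs{S\cap V_R}$ is counted
\begin{itemize}
\item once if $\ell,r\le\alpha n$;
\item once via $\Z^L_\ell$ if $\alpha n<\ell\le\rho n$ and $r\le\alpha n$ (and symmetrically via $\Z^R_r$);
\item twice if both $\ell,r\in(\alpha n,\rho n]$;
\item zero times if $\ell>\rho n$ and $r>\alpha n$, or symmetrically.
\end{itemize}
Configurations with $\ell\in(\alpha n,\rho n]$ and $r>\rho n$ are counted once via $\Z^L_\ell$, which is fine. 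So $\wh{Z}/\Z_G - 1$ lies between $-\Pr{\max(\ell,r)>\rho n}$ and $+\Pr{\ell>\alpha n,\ r>\alpha n}$. In absolute value it is therefore at most $2\e^{-n\KL{\rho}{\lambda/(1+\lambda)}}$ (by \Cref{lem:tail-bound}) plus $\Pr{\ell>\alpha n, r>\alpha n}$, whichever sign dominates. It remains to show that $\Pr{\abs{S\cap V_L}>\alpha n,\ \abs{S\cap V_R}>\alpha n}\le \e^{-C\alpha n}$.

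This is the main step. Condition on $A=S\cap V_L$ with $\abs{A}>\alpha n$. Given $A$, the vertices of $V_R$ are independent with $v$ included with probability $p_v=\lambda\beta^{d_A(v)}/(1+\lambda\beta^{d_A(v)})\le\lambda\beta^{d_A(v)}$. Hence $r$ is a sum of independent Bernoullis with mean at most $\lambda W(A)$. Since $\beta\le1$, $W$ is monotone decreasing under adding vertices, so choosing any $A'\subseteq A$ of size exactly $\alpha n$ and applying \eqref{eq:one-side-weight-condition} gives mean at most $(1-\kappa)\alpha n$. Then apply the MGF form of \Cref{lem:Chernoff-bound}: $M(t)\le\exp((\e^t-1)(1-\kappa)\alpha n)$, so
\[
\Pr{r>\alpha n\mid A}\le\exp\bigl(\alpha n[(\e^t-1)(1-\kappa)-t]\bigr).
\]
Choosing $t=-\ln(1-\kappa)$ gives exponent $\alpha n[\kappa+\ln(1-\kappa)]=-C\alpha n$. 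Averaging over $A$ yields the bound.

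The only subtlety I expect is the overcounting bookkeeping: $\wh{Z}$ is not a lower bound on $\Z_G$, so the two error directions must be separated as above. The monotonicity of $W$ is what lets the size-$\alpha n$ hypothesis cover all larger $A$.
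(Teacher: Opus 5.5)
Your proposal is correct and follows the paper's proof. The $\rho$-truncation error comes from \Cref{lem:tail-bound} on each side. The doubly counted region is handled the same way: condition on $S\cap V_L=A$, use monotonicity of $W$ to reduce to $\abs{A}=\alpha n$, then apply a Chernoff bound at $t=-\ln(1-\kappa)$. Your explicit MGF step for non-identical Bernoullis and your sign-separated bookkeeping are, if anything, more careful than the paper's.

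There is one small slip in your case list. A configuration with $\ell>\rho n$ and $\alpha n<r\le\rho n$ is counted once (via $\Z^R_r$), not zero times. The case counted zero times is $\ell>\rho n,\ r\le\alpha n$ (or $\ell, r > \rho n$). Since every zero-count configuration still has $\max(\ell,r)>\rho n$, your two-sided bound is unaffected.
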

\begin{proof}
    When $\alpha > \rho$, by~\Cref{lem:tail-bound},
    \begin{align*}
        \abs{\frac{\wh{Z}}{\Z_{G}(\lambda, \beta)} - 1} &\le \Pr{\abs{S \cap V_L} > \rho n} + \Pr{\abs{S \cap V_R} > \rho n} \le 2\e^{-n\KL{\rho}{\lambda/(1 + \lambda)}}.
    \end{align*}
    When $\alpha \le \rho$, it holds that by~\Cref{lem:tail-bound},
    \begin{align*}
        \abs{\frac{\wh{Z}}{\Z_{G}(\lambda, \beta)} - 1} &\le \Pr{\abs{S \cap V_L} > \rho n} + \Pr{\abs{S \cap V_R} > \rho n} \\
        &+ \Pr{\alpha n < \abs{S \cap V_L} \le \rho n, \alpha n < \abs{S \cap V_R} \le \rho n} \\
        &\le 2\e^{-n\KL{\rho}{\lambda/(1 + \lambda)}} + \Pr{\alpha n < \abs{S \cap V_L} \le \rho n, \alpha n < \abs{S \cap V_R} \le \rho n}.
    \end{align*}
    For a vertex subset $A \subseteq V_L$ of size $> \alpha n$, conditional on $S \cap V_L = A$, it holds that the random variables $\set{\id{v \in S}}_{v \in V_R}$ are independent.
    Therefore, $\id{v \in S}$ is a Bernoulli random variable with rate
    \[
        p_v \defeq \Pr{v \in S \mid S \cap V_L = A} = \frac{\lambda \beta^{d_A(v)}}{1 + \lambda \beta^{d_A(v)}}.
    \]
    Then
    \[
        \E{\abs{S \cap V_R} \mid S \cap V_L = A} = \sum_{v \in V_R} p_v \le \lambda W(A).
    \]
    Pick a subset $B$ of $A$ of size $\alpha n$ and by monotonicity ($\beta \le 1$), $W(A) \le W(B)$.
    Then by Chernoff bound (\Cref{lem:Chernoff-bound}),
    \begin{align*}
        \Pr{\abs{S \cap V_R} > \alpha n \mid S \cap V_L = A} \le \exp(-\alpha n(-\ln{(1 - \kappa)} - \kappa)).
    \end{align*}
    By law of total expectation,
    \begin{align*}
        \mathbb{P}(\alpha n < \abs{S \cap V_L} &\le \rho n, \alpha n < \abs{S \cap V_R} \le \rho n) \\
        &\le \sum_{A \subseteq V_L : \abs{A} > \alpha n} \Pr{S \cap V_L = A} \Pr{\abs{S \cap V_R} > \alpha n \mid S \cap V_L = A} \\
        &\le \sum_{A \subseteq V_L : \abs{A} > \alpha n} \Pr{S \cap V_L = A} \e^{-\alpha n(-\ln{(1 - \kappa) - \kappa})} \\
        &\le \e^{-\alpha n(-\ln{(1 - \kappa) - \kappa})}.
    \end{align*}
    Substituting it into the upper inequality, we conclude that
    \begin{align*}
         \abs{\frac{\wh{Z}}{\Z_{G}(\lambda, \beta)} - 1} \le 2\e^{-n\KL{\rho}{\lambda/(1 + \lambda)}} + \e^{-\alpha(-\ln{(1 - \kappa)} - \kappa)n}
    \end{align*}
    which is the desired upper bound.
\end{proof}

\subsection{Concentration phenomena}
In this part, we reveal some concentration phenomena in the probability distribution $\mu_{G; \lambda, \beta}$.
For simplicity of analysis, we consider the pairing model $\PM(n, \Delta)$.
Recall the definition of $W(A)$,
By symmetry, the normalized expectation of $W(A)$ is defined by
\begin{align*}
    \mean(\theta) &= \frac{1}{n} \E[\PM(n, \Delta)]{W(A)} \\
    &= \frac{1}{n} \sum_{v \in V_R} \E[\PM(n, \Delta)]{\beta^{d_{A}(v)}} \\
    &= \frac{1}{\binom{\Delta n}{\Delta}} \sum_{j = 0}^{\Delta} \binom{\Delta \theta n}{j} \binom{\Delta(1 - \theta)n}{\Delta - j} \beta^j.
\end{align*}
Before all, we discuss the following approximation to $\mean(\theta)$ which could give some intuition to our discuss.
\begin{lemma}[Approximation to normalized expectation] \label{lem:mean-approximation}
    It holds that $\mean(\theta) = (1 - (1 - \beta)\theta)^{\Delta} + o_n(1)$.
\end{lemma}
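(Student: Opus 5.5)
The formula for $\mean(\theta)$ is a hypergeometric expectation. Fix $v \in V_R$ with its $\Delta$ half-edges. In $\PM(n,\Delta)$ these are matched to a uniformly random $\Delta$-subset of the $\Delta n$ left half-edges, of which $\Delta\theta n$ belong to $A$. So $d_A(v) \sim \mathrm{Hyp}(\Delta n, \Delta\theta n, \Delta)$, and $\mean(\theta) = \E{\beta^{X}}$ for such an $X$. The plan is to compare this with $\E{\beta^{Y}}$ for $Y \sim \mathrm{Bin}(\Delta,\theta)$. That binomial expectation is exactly $\sum_j \binom{\Delta}{j}\theta^j(1-\theta)^{\Delta-j}\beta^j = (1-\theta+\beta\theta)^\Delta = (1-(1-\beta)\theta)^\Delta$.

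The first step is to show that each hypergeometric probability converges to the binomial one, with $\Delta$ and $\theta$ fixed and $n\to\infty$. For each $0\le j\le \Delta$,
\begin{align*}
\frac{\binom{\Delta\theta n}{j}\binom{\Delta(1-\theta)n}{\Delta-j}}{\binom{\Delta n}{\Delta}} = \binom{\Delta}{j}\frac{(\Delta\theta n)_j\,(\Delta(1-\theta)n)_{\Delta-j}}{(\Delta n)_\Delta},
\end{align*}
where $(x)_k$ denotes the falling factorial. Each falling factorial of fixed length $k\le\Delta$ satisfies $(xn)_k = (xn)^k(1+O(k^2/(xn)))$. Hence the ratio equals $\binom{\Delta}{j}\theta^j(1-\theta)^{\Delta-j}(1+O(\Delta^2/(\theta(1-\theta) n)))$. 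For $\theta\in\{0,1\}$ both sides coincide exactly.

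The second step sums over $j$. Since $0\le\beta^j\le1$, the total error is at most $\sum_j \binom{\Delta}{j}\theta^j(1-\theta)^{\Delta-j}\cdot O(\Delta^2/n) = O(\Delta^2/n)$. A cleaner route is a coupling: sampling without versus with replacement, where the $\Delta$ draws with replacement are all distinct with probability at least $1-\binom{\Delta}{2}/(\Delta n)$. This gives $\mathrm{d}_{TV}(X,Y)\le \Delta/(2n)$, uniformly in $\theta$. Because $\beta^{(\cdot)}\in[0,1]$, it follows that $\abs{\mean(\theta)-(1-(1-\beta)\theta)^\Delta}\le \Delta/n = o_n(1)$.

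I will present the coupling version, since it avoids the boundary cases $\theta$ near $0$ or $1$ and gives uniformity in $\theta$ and $\beta$. The only mild obstacle is making sure the error does not depend on $\theta$, and the total variation coupling handles this directly. The coupling works as follows: draw $\Delta$ indices uniformly with replacement; conditioned on all of them being distinct, they form a uniform $\Delta$-subset. So the with-replacement count $Y$ and the without-replacement count $X$ agree except on the collision event, whose probability is at most $\binom{\Delta}{2}/(\Delta n)\le \Delta/(2n)$.
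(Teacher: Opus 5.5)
Your proposal is correct and is essentially the paper's proof: identify $\mean(\theta)$ with $\E{\beta^X}$ for a hypergeometric $X$, couple $X$ with $Y \sim \Bin{\Delta, \theta}$ by sampling with versus without replacement, bound $\Pr{X \neq Y} \le \binom{\Delta}{2}/(\Delta n)$ by a union bound over collisions, and use that $\beta^{(\cdot)}$ takes values in $[0,1]$. As the paper does, state explicitly that on the collision event $X$ is redrawn from a fresh uniform $\Delta$-subset, so that its marginal is exactly hypergeometric; this also gives the error $(\Delta - 1)/(2n)$, slightly better than your stated $\Delta/n$.
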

\begin{proof}
    Let $X$ be a random variable denoting the number of marked elements in a $\Delta$-size sample drawn from $\Delta n$ elements containing $\Delta \theta n$ marked ones without replacement.
    Then by symmetry, $\mean(\theta) = \E{\beta^X}$.
    Let $Y$ be a random variable denoting the number of marked elements from $\Delta$ independent samples in the same population.
    We construct a coupling between $X$ and $Y$.
    For a process of $Y$, when the draws are distinct, we use it as $X$;
    otherwise, we resample $X$ from all distinct $\Delta$ samples uniformly at random.
    Then by union bound,
    \[
        \Pr{X \neq Y} \le \binom{\Delta}{2} \frac{1}{\Delta n} = \frac{\Delta - 1}{2n}.
    \]
    Since $\beta \in [0, 1]$,
    \[
        \E{\beta^X - \beta^Y} \le \Pr{X \neq Y} \le \frac{\Delta - 1}{2n}.
    \]
    It is clear that $Y \sim \Bin{\Delta, \theta}$.
    Then we conclude that
    \[
        \mean(\theta) = \E{\beta^Y} + o_n(1) = (1 - \theta + \theta \beta)^{\Delta} + o_n(1).
    \]
\end{proof}

We investigate the following concentration for $W(A)$. 
For simplicity, we provide the proof of it in~\Cref{sec:concentration-proof}.
\begin{restatable}{lemma}{oneSideWeightConcentration} \label{lem:one-side-weight-concentration}
    For a vertex subset $A \subseteq V_L$ with $\abs{A} \le n/2$, there exists an absolute constant $C > 0$ such that with high probability over the pairing model $\PM(n, \Delta)$, for every $t > 0$,
    \begin{align*}
        \Pr{\abs{W(A) - \E{W(A)}} \ge t} \le 2\exp\ab(-\frac{t^2}{2(1 - \beta)(2n + 2t/3)}).
    \end{align*}
    Specially, for the case $\beta = 1$, $W(A)$ is deterministic.
\end{restatable}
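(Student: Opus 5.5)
We will prove the bound with a Doob martingale over the random perfect matching, and then apply Freedman's inequality (\Cref{lem:Freedman-inequality}). Fix $A \subseteq V_L$ with $\abs{A} = a \le n/2$. In the pairing model $\PM(n, \Delta)$, $W(A) = \sum_{v \in V_R} \beta^{d_A(v)}$ depends only on which right half-edges are matched to the $\Delta a$ half-edges of $A$. We reveal these matches one at a time, in a fixed order of the half-edges of $A$, and let $\calF_i$ be the $\sigma$-algebra generated by the first $i$ revealed partners. Setting $Z_i = \E{W(A) \mid \calF_i}$ gives a martingale with $Z_0 = \E{W(A)}$ and $Z_{\Delta a} = W(A)$. (The case $\beta = 1$ is trivial, since then $W(A) = n$.)

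\textbf{Bounded increments.} When a half-edge of $A$ is matched into $v$, the exponent $d_A(v)$ increases by one, so $W(A)$ changes by $\beta^{k+1} - \beta^k = -(1-\beta)\beta^k$, where $k$ is the current value of $d_A(v)$. To bound $\abs{Z_i - Z_{i-1}}$ we use the standard switching coupling. Condition on $\calF_{i-1}$ and compare the two outcomes in which the $i$-th half-edge goes to right half-edge $x$ versus $y$. Swapping $x$ and $y$ in the remaining uniform matching is a measure-preserving bijection, and it changes the final $W(A)$ only through the degrees of the (at most two) endpoints involved. This gives $\abs{Z_i - Z_{i-1}} \le B$ with $B = O(1-\beta)$, plausibly $B \le 2(1-\beta)$ or even $(1-\beta)$ after a careful pairing.

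\textbf{Conditional variance.} The harder part is to show that the predictable quadratic variation satisfies $V_{\Delta a} \le v$ with $v = (1-\beta)\cdot 2n$, up to the constant absorbed into the statement. The point is that each step's variance should be of order $(1-\beta)^2\beta^{2k}$ times a probability. By Popoviciu's inequality (\Cref{prop:Popoviciu-inequality}), the variance at each step is at most $B^2/4$, but that alone gives $\Delta a (1-\beta)^2$, which carries an unwanted factor of $\Delta$. So we refine the estimate. Given $\calF_{i-1}$, the increment is essentially $-(1-\beta)\beta^{k_V}$ minus its conditional mean, where $V$ is a random right vertex chosen proportionally to its free half-edges. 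Hence
\[
\E{(Z_i - Z_{i-1})^2 \mid \calF_{i-1}} \lesssim (1-\beta)^2 \, \E{\beta^{2k_V} \mid \calF_{i-1}}.
\]
Summing over steps and reorganizing the sum by right vertex $v$, the contributions at $v$ form a telescoping-like sum $\sum_k (1-\beta)^2 \beta^{2k} \le (1-\beta)^2/(1-\beta^2) \le 1-\beta$ per vertex. After accounting for the rest of the matching (the factor of $2$, coming from $a \le n/2$ and the normalization of the selection probabilities), this yields $V \le 2(1-\beta)n$ deterministically. Establishing this deterministic bound on $V$ is the main obstacle. The issue is that the "future expected effect" terms in $Z_i$ also fluctuate, so the step variance is not exactly the local one; we must control the effect of the swap on the expected future contributions, using that all remaining matches are exchangeable.

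Finally, we plug $B \le 1-\beta$ (or $2(1-\beta)$ with the constant absorbed) and $v = 2(1-\beta)n$ into \Cref{lem:Freedman-inequality}. This gives
\[
2\exp\ab(-\frac{t^2}{2\bigl(2(1-\beta)n + (1-\beta)t/3\bigr)}),
\]
which is at most the claimed bound $2\exp\ab(-\frac{t^2}{2(1-\beta)(2n + 2t/3)})$.
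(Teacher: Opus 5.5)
Your martingale setup, the switching coupling, the bound $B = O(1-\beta)$, and the final use of Freedman's inequality all match the paper. The variance step, which you correctly flag as the crux, has a genuine gap.

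Your per-vertex telescoping $\sum_k (1-\beta)^2\beta^{2k} \le 1-\beta$ bounds the \emph{realized} sum $\sum_i \beta^{2k_{V_i}}$ over the vertices actually hit. Along any path, each right vertex passes through each degree level at most once, so that sum is controlled. Freedman's inequality instead needs the \emph{predictable} sum $\sum_i \E{\beta^{2k_V} \mid \calF_{i-1}}$, and this is not deterministically $O(n)$. Take a history in which the earlier half-edges have completely filled a few right vertices. Then almost all free half-edges sit at vertices of current degree $0$, so $\E{\beta^{2k_V} \mid \calF_{i-1}} \approx 1$ at every step. Your bound on $V_{\Delta\abs{A}}$ then becomes of order $(1-\beta)^2\Delta\abs{A}$, which reintroduces exactly the factor $\Delta$ you were trying to remove. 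So ``$V \le 2(1-\beta)n$ deterministically'' does not follow from a per-step estimate in terms of the \emph{current} degree. You would need a separate high-probability bound on $V$, and that adds an error term not present in the statement.

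The missing idea is that the Doob increment is governed by the \emph{final} exponent $d_A(v)$, and the remaining steps push that exponent up uniformly over all histories.
\begin{itemize}
\item In the paper, with $x$ at right vertex $u$ and $y$ at $v$, the coupling gives $\abs{f(x)-f(y)} \le (1-\beta)\bigl(\E{\beta^{z_u+Z_u}} + \E{\beta^{z_v+Z_v}}\bigr)$. Here $Z_u$ counts hits on $u$ among the $\Delta\abs{A}-i$ future selections.
\item Maclaurin's inequality gives the hypergeometric bound $\E{\beta^X} \le \e^{-(1-\beta)KM/N}$.
\item Combining this with $\beta^{z} \le \e^{-(1-\beta)z}$ and $M \le N$ yields $\E{\beta^{z_u+Z_u}} \le \exp\bigl(-(1-\beta)(\Delta\abs{A}-i)/(2n)\bigr)$, whatever the value of $z_u$.
\item Hence the conditional range at step $i$ is at most $2(1-\beta)\e^{-(1-\beta)(\Delta\abs{A}-i)/(2n)}$ for every history.
\item Popoviciu's inequality plus a geometric sum over the number of remaining steps then give $V_{\Delta\abs{A}} \le (1-\beta)^2/(1-\e^{-(1-\beta)/n}) \le 2(1-\beta)n$ deterministically.
\item Freedman's inequality with $B = 2(1-\beta)$ then gives the stated bound exactly.
\end{itemize}
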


For every positive integer $n > 0$, define the function $h_n : [0, 1] \times \mathbb{R}_{> 0} \to \mathbb{R}$ as
\begin{align*}
    h_n(\theta, u) \defeq H(\theta) + \frac{u + 3\ln{(n + 1)}}{n}
\end{align*}
where $H(x) = -x\ln{x} - (1 - x)\ln{(1 - x)}$ is the binary entropy function.
For convenience, we set the error function as
\[
    \Error_{\beta}(\theta, u) = 4\ab((1 - \beta)h_n(\theta, u) + \sqrt{(1 - \beta)h_n(\theta, u)}).
\]
The following corollary, which is a direct result from~\Cref{lem:one-side-weight-concentration}, takes significant impact on our choice of the parameter $\alpha$.
\begin{corollary} \label{cor:weight-concentration}
    Fix a factor $u > 0$.
    With probability at least $1 - 2\e^{-u}$ over $\PM(n, \Delta)$, for every vertex subset $A \subseteq V_L$ or $A \subseteq V_R$ of size $\theta n \le n/2$,
    \begin{align} \label{eq:weight-concentration}
        \abs{\frac{W(A)}{n} - \mean(\theta)} \le \Error_{\beta}(\theta, u).
    \end{align}
    Moreover, taking $u = 2\ln{n}$,~\eqref{eq:weight-function} holds for every vertex subset $A \subseteq V_L$ or $A \subseteq V_R$ of size $\theta n \le n/2$ with probability $1 - o_n(1)$.
\end{corollary}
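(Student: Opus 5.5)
The plan is a union bound over all subsets $A$, using \Cref{lem:one-side-weight-concentration} for each fixed $A$ and choosing the deviation threshold so that the failure probabilities sum to at most $2\e^{-u}$.

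\textbf{Step 1 (per-subset bound).} Fix $\theta$ with $\theta n \le n/2$ an integer, and a subset $A \subseteq V_L$ of size $\theta n$. By symmetry of the pairing model, $\E{W(A)} = n\,\mean(\theta)$. Set $t = n\,\Error_{\beta}(\theta,u)$ and abbreviate $h = h_n(\theta,u)$. Then \Cref{lem:one-side-weight-concentration} gives
\[
\Pr{\abs{W(A)/n - \mean(\theta)} \ge \Error_\beta(\theta,u)} \le 2\exp\ab(-\frac{t^2}{2(1-\beta)(2n+2t/3)}).
\]
I want the exponent to be at least $nh$. Write $x = (1-\beta)h$, so $t = 4n(x+\sqrt{x})$. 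It suffices to show
\[
16n^2(x+\sqrt x)^2 \ge 2nx\,\ab(2n + \tfrac{8}{3}n(x+\sqrt x)),
\]
since $(1-\beta)h = x$ makes the right-hand side equal to $2(1-\beta)(2n+2t/3)\cdot nh$. After dividing by $n^2$, this reads $16(x+\sqrt x)^2 \ge 4x + \tfrac{16}{3}x(x+\sqrt x)$. It holds because $16(x+\sqrt x)^2 \ge 16x$ and $16(x+\sqrt x)^2 \ge 16x(x+\sqrt x)$; averaging the two bounds dominates the right-hand side. When $\beta = 1$, $W(A)$ is deterministic and equal to $n\,\mean(\theta)$, so there is nothing to prove. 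Hence each fixed $A$ fails with probability at most $2\e^{-nH(\theta) - u - 3\ln(n+1)}$.

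\textbf{Step 2 (union bound).} There are $\binom{n}{\theta n} \le \e^{nH(\theta)}$ subsets of $V_L$ of size $\theta n$, and likewise for $V_R$. The statement for $A \subseteq V_R$ follows identically by the left–right symmetry of $\PM(n,\Delta)$. There are at most $n+1$ admissible values of $\theta$. Summing over both sides, all sizes and all subsets gives total failure probability at most
\[
2\cdot(n+1)\cdot 2\e^{-u}(n+1)^{-3} \le 2\e^{-u}
\]
for $n \ge 1$. This is exactly why $3\ln(n+1)$ appears in $h_n$. Taking $u = 2\ln n$ gives failure probability $2/n^2 = o_n(1)$, which yields the \emph{moreover} part. (The reference there to \eqref{eq:weight-function} is clearly meant to be \eqref{eq:weight-concentration}.)

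\textbf{Main obstacle.} The only delicate point is the algebra matching the Bernstein-type denominator $2n + 2t/3$ with the form of $\Error_\beta$. The linear term $x$ in $\Error_\beta$ absorbs the $Bt/3$ part of the denominator, and the $\sqrt x$ term absorbs the variance part; the constant $4$ gives enough slack for both. One caveat is the phrase ``with high probability over the pairing model'' in \Cref{lem:one-side-weight-concentration}. I read the displayed probability there as being over $\PM(n,\Delta)$ itself, so no extra exceptional event needs to be added.
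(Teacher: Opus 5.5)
Your proof is correct and follows essentially the same route as the paper: apply \Cref{lem:one-side-weight-concentration} with $t = n\,\Error_\beta(\theta,u)$, then take a union bound over both sides and all sizes using $\binom{n}{k}\le \e^{nH(k/n)}$. The paper states a per-subset failure probability of $2\e^{-2nh_n(\theta,u)}$ without showing the algebra; this is valid, since your computation actually gives an exponent of at least $3nh_n(\theta,u)$, but you verify and use the weaker $2\e^{-nh_n(\theta,u)}$, which already suffices thanks to the $3\ln(n+1)$ term in $h_n$.
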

\begin{proof}
    When $\beta = 1$, $W(A)$ is a deterministic statistic and there remains nothing to do.
    For $\beta \in [0, 1)$, apply~\Cref{lem:tail-bound} with the choice
    \begin{align*}
        t = 4n\ab((1 - \beta)h_{n}(\theta, u) + \sqrt{(1 - \beta)h_n(\theta, u)}).
    \end{align*}
    It holds that for a subset $A$ of size $\theta n$, the failure probability is at most $2\e^{-2nh_n(\theta, u)}$.
    By the union bound and $\binom{n}{k} \le \e^{nH(k/n)}$ for $k \le n/2$, the total failure probability is at most
    \begin{align*}
        2 \sum_{k = 0}^{n/2} \binom{n}{k} 2\e^{-2nh_n(k/n, u)} &\le 4(n + 1)^{-6} \sum_{k = 0}^{n/2} \e^{-nH(k/n) - 2u} \le 2\e^{-u}.
    \end{align*}
    Therefore we conclude the lemma.
\end{proof}

Another important concentration phenomenon is the distribution of \emph{degree profiles}.
For a subset $A \subseteq V_L$ of size $\theta n$, its degree profile $p_A \in \mathbb{R}^{\Delta + 1}$ is a normalized vector defined as
\begin{align*}
    \forall j = 0, \ldots, \Delta, \quad p_A(j) = \frac{1}{n} \abs{\set{v \in V_R \mid d_{A}(v) = j}}
\end{align*}
and set its `expectation' profile as
\begin{align*}
    \forall j = 0, \ldots, \Delta, \quad \Phi_{\theta}(j) = \binom{\Delta}{j} \theta^j (1 - \theta)^{\Delta - j}.
\end{align*}
Let $\Profile(\theta)$ be all admissible degree profiles $p$ for a subset $A \subseteq V_L$ of size $\theta n$.
Define the residual factor $r_n(\cdot)$ as
\begin{align*}
    r_n(u) = \frac{1}{n}\ab((\Delta + 5) \ln{(n + 1)} + u + \ln{2(\Delta + 1)}).
\end{align*}
\begin{restatable}{proposition}{degreeProfileConcentration} \label{prop:degree-profile-concentration}
    For every factor $u > 0$, with probability $1 - \e^{-u}$ over $\PM(n, \Delta)$, for every vertex subset $A \subseteq V_L$ or $A \subseteq V_R$, it holds that
    \begin{align} \label{eq:degree-profile-concentration}
        \KL{p_A}{\Phi_{\theta}} \le H(\theta) + r_n(u)
    \end{align}
    where $\theta = \abs{A} / n$.
\end{restatable}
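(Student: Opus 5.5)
We prove this by a first-moment (union bound) argument over the pairing model. We count, for each fixed $A \subseteq V_L$ of size $\theta n$ and each admissible profile $p \in \Profile(\theta)$, the exact probability that $p_A = p$. Then we sum over all $A$ and over all ``bad'' profiles violating~\eqref{eq:degree-profile-concentration}.

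\textbf{Step 1: exact probability of a profile.} Fix $A$ with $|A| = \theta n$. The $\Delta\theta n$ half-edges of $A$ are matched into $\calV_R$ by a uniformly random injection, since the perfect matching is uniform. A degree profile $p$ specifies how many right vertices receive exactly $j$ half-edges from $A$, namely $n p(j)$ of them. The number of matchings realizing $p$ is
\begin{align*}
\binom{n}{(np(j))_{j}} \prod_{j} \binom{\Delta}{j}^{np(j)} \cdot (\Delta\theta n)! \, (\Delta(1-\theta)n)!,
\end{align*}
out of $(\Delta n)!$ total. Here the first factor chooses which right vertices get which degree, and the product chooses which of their half-edges are used. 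So
\begin{align*}
\Pr{p_A = p} = \binom{n}{(np(j))_j} \prod_j \binom{\Delta}{j}^{np(j)} \Big/ \binom{\Delta n}{\Delta \theta n},
\end{align*}
for $p$ consistent with $\sum_j j p(j) = \Delta\theta$.

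\textbf{Step 2: Stirling/type-class estimates.} Use the standard method-of-types bound $\binom{n}{(np(j))_j} \le \e^{n H(p)}$. Then
\begin{align*}
\prod_j \binom{\Delta}{j}^{np(j)} = \exp\Big(n\sum_j p(j)\ln\binom{\Delta}{j}\Big).
\end{align*}
Since $\sum_j j p(j) = \Delta\theta$, we have
\begin{align*}
\sum_j p(j) \ln \Phi_\theta(j) = \sum_j p(j)\ln\binom{\Delta}{j} + \Delta(\theta\ln\theta + (1-\theta)\ln(1-\theta)).
\end{align*}
Hence the numerator is at most $\exp\big(n(-\KL{p}{\Phi_\theta}) + n\Delta H(\theta)\big)$. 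For the denominator, use $\binom{\Delta n}{\Delta\theta n} \ge (\Delta n+1)^{-1}\e^{\Delta n H(\theta)}$. The $\Delta H(\theta)$ terms cancel, giving
\begin{align*}
\Pr{p_A = p} \le (\Delta n + 1)\, \e^{-n\KL{p}{\Phi_\theta}}.
\end{align*}
This cancellation is the key computation. It is where the main care is needed: one has to keep the polynomial factors explicit so they fit inside $r_n(u)$, which has budget $(\Delta+5)\ln(n+1) + \ln 2(\Delta+1)$.

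\textbf{Step 3: union bound.} The number of profiles is at most $(n+1)^{\Delta+1}$. The number of sets $A$ of size $\theta n$ is at most $\e^{nH(\theta)}$. There are at most $n+1$ sizes and two sides, and we also absorb $\Delta n + 1 \le (\Delta+1)(n+1)$. So the probability that some $A$ has $\KL{p_A}{\Phi_\theta} > H(\theta) + r_n(u)$ is at most
\begin{align*}
2(n+1)\cdot(n+1)^{\Delta+1}\cdot(\Delta+1)(n+1)\cdot \e^{nH(\theta)}\e^{-nH(\theta) - n r_n(u)}.
\end{align*}
By the definition of $r_n(u)$, this is at most $\e^{-u}$, with spare powers of $(n+1)$ available if needed. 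The case $A \subseteq V_R$ is symmetric.

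The main obstacle is Step 2's exact bookkeeping of polynomial factors. The combinatorics is routine once the exact formula in Step 1 is written down. One should double-check that the matching count correctly accounts for the orderings of half-edges, so that the ratio of factorials reduces to the multinomial over $\binom{\Delta n}{\Delta\theta n}$ with nothing left over.
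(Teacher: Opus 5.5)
Your proposal is correct and follows essentially the same route as the paper. Both compute the exact probability of a profile as a multinomial times $\prod_j \binom{\Delta}{j}^{np(j)}$ divided by $\binom{\Delta n}{\Delta\theta n}$, use the method-of-types bounds and the identity $\KL{p}{\Phi_{\theta}} = \Delta H(\theta) - H(p) - \sum_j p(j)\ln\binom{\Delta}{j}$ to get $\Pr{p_A = p} \le (\Delta n + 1)\e^{-n\KL{p}{\Phi_{\theta}}}$, and finish with a union bound over profiles, sets, sizes and sides that fits inside $r_n(u)$. Your accounting is if anything slightly more careful than the paper's: you count $n+1$ sizes rather than $n$ and leave a spare factor $(n+1)^{-2}$.
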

The proof of~\Cref{prop:degree-profile-concentration} is deferred to~\Cref{sec:concentration-proof}.
Hence for every real number $r \ge 0$, define
\[
    \Profile(\theta, r) \defeq \set{p \in \Profile(\theta) \;\colon\; \KL{p}{\Phi_{\theta}} \le H(\theta) + r},
    % \Profile(\theta, r) \defeq \set{p \in \mathbb{R}_{\ge 0}^{\Delta + 1} \;\colon\; \sum_{j = 0}^{\Delta} p(j) = 1, \sum_{j = 0}^{\Delta} jp(j) = \Delta \theta, \KL{p}{\Phi_{\theta}} \le H(\theta) + r},
\]
and
\[
    s_-(\theta, r) \defeq \inf_{p \in \Profile(\theta, r)} \sum_{j = 0}^{\Delta} p(j) \beta^j, \quad
    s_+(\theta, r) \defeq \sup_{p \in \Profile(\theta, r)} \sum_{j = 0}^{\Delta} p(j) \beta^j.
\]
Then consider the following bounds:
\[
    \sLower(\theta) \defeq \max\set{\mean(\theta) - \Error_{\beta}(\theta, u), s_-(\theta, r_n(u))}, \quad
    \sUpper(\theta) \defeq \min\set{\mean(\theta) + \Error_{\beta}(\theta, u), s_+(\theta, r_n(u))}.
\]
% By~\Cref{lem:one-side-weight-concentration,prop:degree-profile-concentration}, with probability at least $1 - 3\e^{-u}$, for every vertex subset $A \subseteq V_L$ or $A \subseteq V_R$ of size at most $n/2$,
% \begin{align} \label{eq:one-side-weight-approximation}
%     \sLower(\abs{A}/n) \le \frac{W(A)}{n} \le \sUpper(\abs{A}/n).
% \end{align}
\begin{corollary} \label{cor:one-side-weight-concentration}
    Fix $u > 0$.
    For the pairing model $G = (V = V_L \cup V_R, E) \sim \PM(n, \Delta)$, with probability at least $1 - 3\e^{-u}$, for every vertex subset $A \subseteq V_L$ or $A \subseteq V_R$ of size at most $n/2$,
    \[
        \sLower(\abs{A}/n) \le \frac{W(A)}{n} \le \sUpper(\abs{A}/n).
    \]
\end{corollary}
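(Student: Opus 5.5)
This corollary follows by intersecting the two high-probability events already established and then reading off each side of the claimed inequality from one of them. Let $\calE_1$ be the event of \Cref{cor:weight-concentration} with the fixed factor $u$: for every $A \subseteq V_L$ or $A \subseteq V_R$ of size $\theta n \le n/2$,
\[
\abs{W(A)/n - \mean(\theta)} \le \Error_{\beta}(\theta,u).
\]
It holds with probability at least $1 - 2\e^{-u}$. Let $\calE_2$ be the event of \Cref{prop:degree-profile-concentration} with the same $u$: every such $A$ satisfies $\KL{p_A}{\Phi_\theta} \le H(\theta) + r_n(u)$. It holds with probability at least $1 - \e^{-u}$. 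A union bound gives $\Pr{\calE_1 \cap \calE_2} \ge 1 - 3\e^{-u}$. It therefore suffices to show deterministically that, on $\calE_1 \cap \calE_2$, every $A$ of size $\theta n \le n/2$ satisfies both bounds.

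Fix such an $A$ and put $\theta = \abs{A}/n$. First, $\calE_1$ immediately gives
\[
\mean(\theta) - \Error_\beta(\theta,u) \le W(A)/n \le \mean(\theta) + \Error_\beta(\theta,u).
\]
Second, grouping the vertices of $V_R$ (or $V_L$) by their number of neighbours in $A$ rewrites $W(A)$ through the degree profile:
\[
W(A)/n = \frac{1}{n}\sum_{v} \beta^{d_A(v)} = \sum_{j=0}^{\Delta} p_A(j)\beta^j .
\]
Since $A$ has size $\theta n$, $p_A$ is an admissible profile, so $p_A \in \Profile(\theta)$. Under $\calE_2$ it also satisfies the KL bound, hence $p_A \in \Profile(\theta, r_n(u))$. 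By the definitions of the infimum and supremum,
\[
s_-(\theta, r_n(u)) \le W(A)/n \le s_+(\theta, r_n(u)).
\]

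Combining the two lower bounds, $W(A)/n$ is at least their maximum, which is $\sLower(\theta)$. Combining the two upper bounds, $W(A)/n$ is at most their minimum, which is $\sUpper(\theta)$. This is exactly the claimed inequality.

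There is essentially no obstacle. The only points to check are two. First, the two cited results use the same parameter $u$ and both cover subsets of $V_L$ and $V_R$; for \Cref{prop:degree-profile-concentration} the size restriction $\theta \le 1/2$ is simply not needed. Second, $p_A$ really lies in the admissible set $\Profile(\theta)$, which holds by definition since it is the degree profile of an actual subset of size $\theta n$.
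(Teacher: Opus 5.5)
Your proposal is correct and follows the paper's argument. The paper's proof is the single line that the corollary is a direct consequence of the one-side weight concentration (in its union-bounded form, \Cref{cor:weight-concentration}, with probability $1-2\e^{-u}$) and \Cref{prop:degree-profile-concentration} (with probability $1-\e^{-u}$); you have simply spelled out the union bound and the identity $W(A)/n=\sum_j p_A(j)\beta^j$, which places $W(A)/n$ between $s_-(\theta,r_n(u))$ and $s_+(\theta,r_n(u))$.
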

\begin{proof}
    The corollary is a direct result of~\Cref{lem:one-side-weight-concentration,prop:degree-profile-concentration}.
\end{proof}

The following lemma is also useful in our verfication of parameters.
\begin{lemma} \label{lem:useful-bound}
    On the event of~\Cref{cor:one-side-weight-concentration}, if $(1 - \beta)\theta\Delta \ge 6$, it holds that for all $A \subseteq V_L$ of size $\theta n$,
    \[
        \frac{W(A)}{n} \le \e^{-(1 - \beta)\theta\Delta/2} + \frac{3\ab(H(\theta) + r_n(u))}{(1 - \beta)\theta\Delta/2 - 1}.
    \]
\end{lemma}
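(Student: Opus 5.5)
We use the profile half of \Cref{cor:one-side-weight-concentration}: on its event, $W(A)/n \le s_+(\theta, r_n(u))$. So it suffices to bound $\sum_j p(j)\beta^j$ for every profile $p$ with $\KL{p}{\Phi_\theta} \le H(\theta) + r$, where $r = r_n(u)$. For this we use the Donsker--Varadhan (Gibbs) variational inequality. For any measurable $f$ and $\eta > 0$,
\[
    \eta\,\E[p]{f} \le \KL{p}{\Phi_\theta} + \ln \E[\Phi_\theta]{\e^{\eta f}} .
\]
We cannot apply it to $f(j) = \beta^j$ directly with a good constant, so we split the mass into small and large degrees.

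\textbf{Step 1 (split).} Fix the threshold $m = (1-\beta)\theta\Delta/2$, which is half the mean of $\Bin{\Delta,\theta}$ scaled by $(1-\beta)$. Write
\[
    \sum_j p(j)\beta^j \le \sum_{j \ge m'} p(j)\beta^j + p(\{j < m'\}),
\]
where $m'$ is chosen with $\beta^{m'} \le \e^{-(1-\beta)m'} \le \e^{-(1-\beta)\theta\Delta/2}$; that is, $m' = \theta\Delta/2$. This uses $\beta \le \e^{-(1-\beta)}$. The first term is at most $\e^{-(1-\beta)\theta\Delta/2}$.

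\textbf{Step 2 (lower-tail mass).} It remains to bound $q = p(\{j < \theta\Delta/2\})$. By the data-processing inequality applied to the indicator of the event $E = \{j < \theta\Delta/2\}$,
\[
    \KL{q}{\Phi_\theta(E)} \le H(\theta) + r .
\]
Also $\Phi_\theta(E) \le \e^{-\theta\Delta/8}$ by the Chernoff lower tail. For $q$ much larger than $\Phi_\theta(E)$ we have
\[
    \KL{q}{\Phi_\theta(E)} \ge q\ln\frac{q}{\Phi_\theta(E)} - q - \ldots \gtrsim q\,(\ln(1/\Phi_\theta(E)) - 1),
\]
which gives $q \le (H(\theta)+r)/(\ln(1/\Phi_\theta(E)) - 1)$. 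The target denominator is $(1-\beta)\theta\Delta/2 - 1$ with numerator constant $3$. This suggests using the variational inequality with $f = -\ln$-type weighting instead: take $f(j) = -(1-\beta)j$ (so $\e^{f} \ge \beta^j$ roughly). Then
\[
    \ln \E[\Phi_\theta]{\e^{\eta f}} = \Delta \ln(1 - \theta + \theta \e^{-\eta(1-\beta)}),
\]
which is explicit.

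\textbf{Step 2, cleaner route.} Apply the inequality to the indicator $g = \mathbf 1_E$ with $\eta = (1-\beta)\theta\Delta/2 - 1$ (positive, since $(1-\beta)\theta\Delta \ge 6$). We need
\[
    \ln\!\bigl(1 + (\e^{\eta}-1)\Phi_\theta(E)\bigr) \le \e^{\eta}\Phi_\theta(E) \le 2(H(\theta)+r).
\]
Since $\Phi_\theta(E) \le \e^{-\theta\Delta/8}$, this needs a comparison between $\e^{(1-\beta)\theta\Delta/2}$ and $\e^{\theta\Delta/8}$, which fails for $\beta$ near $0$. So we re-choose the threshold $m'$ as $c\theta\Delta$ with a $\beta$-dependent $c$, balancing $\beta^{m'}$ against the tail. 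Alternatively we handle the whole sum $\sum p(j)\beta^j$ by the variational inequality with $f = \beta^j$ and $\eta = (1-\beta)\theta\Delta/2 - 1$. In that case
\[
    \E[\Phi_\theta]{\e^{\eta\beta^j}} \le 1 + \e^{\eta}\,\E[\Phi_\theta]{\beta^j} = 1 + \e^{\eta}(1-(1-\beta)\theta)^\Delta \le 1 + \e^{\eta - (1-\beta)\theta\Delta} \le 1 + \e^{-(1-\beta)\theta\Delta/2}.
\]
Dividing the variational inequality by $\eta$ then yields
\[
    \sum_j p(j)\beta^j \le \frac{H(\theta)+r+\e^{-(1-\beta)\theta\Delta/2}}{\eta}.
\]
The first inequality above uses $\e^{\eta x} \le 1 + x\e^{\eta}$ for $x \in [0,1]$.

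\textbf{Step 3 (combine).} If we use the second route, the extra term $\e^{-(1-\beta)\theta\Delta/2}/\eta$ is at most $\e^{-(1-\beta)\theta\Delta/2}$, because $\eta \ge 2$. The $H+r$ part then has constant $1 \le 3$. The slack factor $3$ and the $-1$ in the denominator are absorbed. We still need the $\Delta\theta$ mean constraint, but it is not used. This second route seems to give the lemma directly.

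\textbf{Main obstacle.} The main risk is the MGF bound, in particular the step $\E[\Phi_\theta]{\beta^j} = (1-(1-\beta)\theta)^\Delta \le \e^{-(1-\beta)\theta\Delta}$. This step is routine, so the main check is simply the correctness of the convexity inequality $\e^{\eta x} \le 1 + x(\e^{\eta}-1)$ for $x \in [0,1]$.
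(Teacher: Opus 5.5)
Your final route is correct and proves the lemma. It takes a genuinely different, if dual, path from the paper's. You apply the Donsker--Varadhan inequality directly to $p_A$ versus $\Phi_\theta$ with test function $f(j)=\beta^j$ and tilt $\eta=c/2-1$, where $c=(1-\beta)\theta\Delta$. The chord bound $\e^{\eta x}\le 1+x(\e^{\eta}-1)$ on $[0,1]$ and the binomial identity $\mathbb{E}_{\Phi_\theta}[\beta^j]=(1-(1-\beta)\theta)^{\Delta}\le \e^{-c}$ then give
\[
\frac{W(A)}{n}\le \frac{H(\theta)+r_n(u)+\e^{-c/2-1}}{c/2-1}.
\]
This is stronger than the statement, so the factor $3$ is pure slack.

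The paper argues differently. It pushes $p_A$ and $\Phi_\theta$ forward through the kernel $j\mapsto \mathrm{Ber}(\beta^j)$ to obtain $\KL{W(A)/n}{(1-(1-\beta)\theta)^{\Delta}}\le H(\theta)+r_n(u)$. It then splits on whether $W(A)/n\le \e^{-c/2}$. In the remaining case it uses monotonicity of the Bernoulli divergence in its second argument to replace the reference by $\e^{-c}$, and finishes with the elementary lower bound $\KL{x}{\e^{-c}}\ge x(c/2-1)-\e^{-c}$.

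The two arguments are Legendre duals. Your chord bound says exactly that the moment generating function of a $[0,1]$-valued variable is dominated by that of a Bernoulli with the same mean, which is the variational form of the paper's data-processing step. Your fixed tilt $\eta$ replaces the paper's case split at $\e^{-c/2}$. Your version avoids the case split and the monotonicity argument and gives better constants. The paper's version stays entirely within one-dimensional Bernoulli divergence calculus.

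For the write-up, delete the abandoned attempts. You correctly observed that the indicator route fails for $\beta$ near $0$. State that the only inputs are $\KL{p_A}{\Phi_\theta}\le H(\theta)+r_n(u)$ from \Cref{prop:degree-profile-concentration} and the identity $W(A)/n=\sum_j p_A(j)\beta^j$; the mean constraint $\sum_j j\,p(j)=\Delta\theta$ is genuinely unnecessary. Finally, the ``main obstacle'' you flag is not one: both the binomial identity and the chord bound for the convex map $x\mapsto\e^{\eta x}$ are immediate.
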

\begin{proof}
    Consider the stochastic kernel $\mathscr{K}$ from $\set{0, \ldots, \Delta}$ to $\set{0, 1}$ defined by
    \[
        \mathscr{K}(j, 0) = 1 - \beta^j, \quad \mathscr{K}(j, 1) = \beta^j.
    \]
    Then by data processing,
    \[
        \KL{W(A)/n}{(1 - (1 - \beta)\theta)^{\Delta}} = \KL{p_A \mathscr{K}}{\Phi_{\theta} \mathscr{K}} \le H(\theta) + r_n(u).
    \]
    If $W(A)/n \le \e^{-(1 - \beta)\theta\Delta/2}$, there is nothing to prove.
    Otherwise,
    \[
        \KL{W(A)/n}{\e^{-(1 - \beta)\theta\Delta}} \le \KL{W(A)/n}{(1 - (1 - \beta)\theta)^{\Delta}} \le H(\theta) + r_n(u).
    \]
    By the inequality $\KL{W(A)/n}{\e^{-(1 - \beta)\theta\Delta}} \ge W(A)/n \ab((1 - \beta)\theta\Delta/2 - 1) - \e^{-(1 - \beta)\theta\Delta}$, we conclude the lemma.
\end{proof}
\section{Trickle-down Theorems for Slices} \label{sec:trickle-theorems}

The previous section shows that typical slices of the configuration space have
a regular geometric structure. In this section, we exploit this structure to
prove rapid mixing of local walks.

We state trickle-down theorems for two-side slices and one-side ones respectively.
Set parameters $\eta \in (0, 1)$ and $C_{\TD} > 4$.
Given $\lambdaUpperBound$ and $\ParameterCondition$ for some $\lambda_0$ such that $(1 + \eta)\lambda_0/(1 + \lambda_0) \le 1/2$ and $C > 0$, we take
\begin{align} \label{eq:upper-bound-choice}
    \rho_{*} \defeq (1 + \eta) \frac{\lambda}{1 + \lambda}
\end{align}
and
\begin{align} \label{eq:overlap-choice}
    \alpha_{*} \defeq \max\set{\theta \in [0, \rho_{*}] : \sLower(\theta) - \theta \ge C_{\TD} (1 - \beta) \cdot 3\sqrt{\Delta} \theta}.
\end{align}
\begin{remark}
    Note that $\alpha_{*}$ is always a valid value and we can further assume that $\alpha_* n$ is a integer.
\end{remark}

\subsection{Two-side trickle-down theorem}
First we discuss the trickle-down theorem for two-side slices.
For integers $\ell = 0, \ldots, n$ and $r = 0, \ldots, n$, the simplicial complex $(\SC_{\ell, r}, \mu_{\ell, r})$ induced by $\mu_{\ell, r}$ is constructed as:
\begin{itemize}
    \item $\SC_{\ell, r}$ is the downward closed collection of $\Omega^{(\ell, r)}$;
    \item the weight function of $\SC$ is defined as~\eqref{eq:simplicial-complex-weight-function} with $\pi = \mu_{\ell, r}$.
\end{itemize}
\begin{lemma} \label{lem:two-side-trickle-down}
    Suppose that the statements in~\Cref{prop:RBG-spectrum,cor:one-side-weight-concentration} holds.
    Then for every $\ell \le \alpha_{*}n$ and $r \le \alpha_{*}n$, the facet $S$ in $(\SC = \SC_{\ell, r}, \pi = \mu_{\ell, r})$ induced by $\mu_{\ell, r}$ of codimension $2$ satisfies that
    \begin{align*}
        \lambda_2(\pudnl_{S, 1}) \le \frac{1}{2(\ell + r)}.
    \end{align*}
    for all sufficiently large positive integer $n$.
\end{lemma}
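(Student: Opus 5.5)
The plan is to classify the codimension-$2$ faces of $\SC_{\ell,r}$ by where the two missing vertices lie, write the link walk $\pudnl_{S,1}$ explicitly as a weighted graph, and bound its second eigenvalue by a rank-one-plus-perturbation argument. Let $S$ have $\ell+r-2$ vertices, $S_L = S\cap V_L$ and $S_R = S\cap V_R$. For $x,y\notin S$ with $S\cup\{x,y\}\in\Omega^{(\ell,r)}$ we have
\[
w(S\cup\{x,y\}) = w(S)\lambda^2\, q(x)q(y)\,\beta^{\id{x\sim y}},
\]
where $q(x)\defeq \beta^{d_{S_R}(x)}$ for $x\in V_L$ and $q(y)\defeq\beta^{d_{S_L}(y)}$ for $y\in V_R$. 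The walk $\pudnl_{S,1}$ moves from $x$ to $y$ with probability proportional to this weight, so it is the random walk on the weighted link graph. There are three cases: both missing vertices lie in $V_L$, both lie in $V_R$, or there is one on each side.

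\textbf{Same-side case.} Suppose both missing vertices are in $V_L$; then $\abs{S_L}=\ell-2$, $\abs{S_R}=r$, and the case $V_R$ is symmetric. Since $x,y\in V_L$ are never adjacent, the weight matrix is $qq^\top-\mathrm{diag}(q^2)$ on $V_L\setminus S_L$. Write $Q\defeq\sum_{x\in V_L\setminus S_L}q(x)$.
\begin{itemize}
\item The normalized matrix $D^{-1/2}(qq^\top-\mathrm{diag}(q^2))D^{-1/2}$, with $D=\mathrm{diag}(q(Q-q))$, is a rank-one matrix minus a nonnegative diagonal matrix. By interlacing, $\lambda_2\le \max_x q(x)/(Q-q(x))\le 1/(Q-1)$, using $q\le 1$.
\item We have $Q\ge W(S_R)-\abs{S_L}$. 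Corollary~\ref{cor:one-side-weight-concentration} applies since $\abs{S_R}\le\alpha_*n\le n/2$, and $\sLower$ is monotone in $\theta$, so $Q\ge n\sLower(\alpha_*)-\alpha_* n$.
\item By the defining inequality~\eqref{eq:overlap-choice} of $\alpha_*$, this is at least $3C_{\TD}(1-\beta)\sqrt{\Delta}\,\alpha_*n$. This needs $\beta<1$; for $\beta=1$ one has $q\equiv 1$ and $Q\ge n-\alpha_*n$ directly.
\item It remains to check that this quantity is $\gg 4\alpha_* n\ge 2(\ell+r)$. If the factor $(1-\beta)\sqrt\Delta$ is too small for this, I would instead use $\sLower(\alpha_*)\ge \mean(\alpha_*)-\Error_\beta$, which is close to $(1-(1-\beta)\alpha_*)^\Delta$. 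I expect this to be $\Omega(1)\gg\alpha_*$ in that regime, and checking it is a small side computation.
\end{itemize}

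\textbf{Cross case.} Now $x\in V_L\setminus S_L$, $y\in V_R\setminus S_R$, and the weight matrix is $W_{xy}=q(x)q(y)\bigl(1-(1-\beta)A_G(x,y)\bigr)$. Restricting to the relevant rows and columns, write $W=q_Lq_R^\top-(1-\beta)\,\mathrm{diag}(q_L)A'\,\mathrm{diag}(q_R)$, where $A'$ is a submatrix of $A_G$. The walk is bipartite, so $\lambda_2(\pudnl_{S,1})$ equals the second singular value of $D_L^{-1/2}WD_R^{-1/2}$. The steps are as follows.
\begin{enumerate}
\item Bound the degrees: $D_L(x)=q(x)\bigl(Q_R-(1-\beta)\sum_{y\sim x}q(y)\bigr)\ge q(x)(Q_R-\Delta)$, and symmetrically for $D_R$.
\item Apply Weyl's inequality for singular values: the rank-one part contributes only to the top singular value, so $\sigma_2\le (1-\beta)\,\norm{D_L^{-1/2}\mathrm{diag}(q_L)A'\mathrm{diag}(q_R)D_R^{-1/2}}_2$.
\item Since $q\le 1$, this is at most $(1-\beta)\norm{A'}_2/\sqrt{(Q_L-\Delta)(Q_R-\Delta)}$.
\item The top singular value of $A'$ can be as large as $\Delta$, which is too big. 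So I would compare against the exact rank-one matrix $\frac{\Delta}{n}\one\one^\top$ and use $\lambda_2(A_G)\le 3\sqrt\Delta$ from Proposition~\ref{prop:RBG-spectrum}. Writing $A_G=\frac{\Delta}{n}\one\one^\top+E$ with $\norm{E}_2\le3\sqrt\Delta$, the term $(1-\beta)\frac{\Delta}{n}\,\mathrm{diag}(q_L)\one\one^\top\mathrm{diag}(q_R)$ is rank one. Absorbing it into the main rank-one term raises the total rank to two, so I would use Weyl once more with the bound $\sigma_3\le\norm{\cdot}$ together with an orthogonality argument. Alternatively, I would note that the combined rank-one part is still a positive multiple of $q_Lq_R^\top$, since $\one$ restricted and weighted by $q$ is $q$ itself. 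In that case it merges into a single rank-one term and only $E$ remains.
\item This yields $\sigma_2\le 3(1-\beta)\sqrt\Delta/\sqrt{(Q_L-\Delta)(Q_R-\Delta)}$. As in the same-side case, $Q_L,Q_R\ge n\sLower(\alpha_*)-\alpha_*n\ge 3C_{\TD}(1-\beta)\sqrt\Delta\,\alpha_*n$, and for large $n$ the $-\Delta$ terms are negligible. Hence $\sigma_2\lesssim 1/(C_{\TD}\alpha_*n)\le 1/(4\alpha_*n)\le 1/(2(\ell+r))$, using $C_{\TD}>4$.
\end{enumerate}

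The main obstacle is step 4 of the cross case. One must show that the dominant part of $-(1-\beta)A'$ is exactly collinear with $q_Lq_R^\top$ after normalization, which requires care because rows and columns of $A_G$ indexed by $S$ are deleted. The natural fix is to extend $W$ to all of $V_L\times V_R$ with zero-weight rows and columns and treat the deleted entries as a perturbation of norm $O(\sqrt{\Delta})$. The perturbation is also not exactly diagonal-similar, since $D_L$ is not proportional to $q_L$. I would handle this by bounding $D_L/(q_LQ_R)\in[1-\Delta/Q_R,1]$ and absorbing the resulting $O(\Delta/Q)$ multiplicative distortion into the final constant.
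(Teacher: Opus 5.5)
\textbf{Gap: the same-side case.} There the normalized matrix is $vv^\top-\Lambda$, with $v_x=\sqrt{q(x)/(Q-q(x))}$ and $\Lambda=\mathrm{diag}(q(x)/(Q-q(x)))\succeq 0$. Weyl's inequality therefore gives $\lambda_2(\pudnl_{S,1})\le\lambda_2(vv^\top)-\lambda_{\min}(\Lambda)\le 0$. Your bound $\max_x q(x)/(Q-q(x))$ is the Weyl bound for $vv^\top+\Lambda$, i.e.\ the sign is flipped.

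This slip breaks your plan, because the weaker bound $1/(Q-1)$ forces you to prove $Q\gtrsim 4\alpha_*n$, and that fails in part of the admissible range. When $\alpha_*<\rho_*$, maximality in~\eqref{eq:overlap-choice} forces $\mean(\alpha_*)$ to be at most $(1+3C_{\TD}(1-\beta)\sqrt{\Delta})\alpha_*$ plus concentration error terms. The same then holds for $W(S_R)/n$ whenever $|S_R|=\alpha_*n$. So your expectation that $\sLower(\alpha_*)=\Omega(1)\gg\alpha_*$ when $(1-\beta)\sqrt{\Delta}$ is small is false in general.

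For instance, take $1-\beta=\Delta^{-3/4}$ with a fixed $\lambda\le\lambda_0$; this satisfies the hypotheses of the main theorem. Then $(1-(1-\beta)\theta)^{\Delta}\approx e^{-\Delta^{1/4}\theta}$ gives $\alpha_*=\Theta(\Delta^{-1/4}\ln\Delta)<\rho_*$. For large $\Delta$ it also gives $W(S_R)\le(1+o(1))\alpha_*n$ whenever $|S_R|=\alpha_*n$. Since $Q\le W(S_R)$, your bound $1/(Q-1)$ then exceeds $1/(2(\ell+r))$ at $\ell=r=\alpha_*n$, even for the true value of $Q$.

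The missing observation is exactly what the paper uses. The same-side link is a complete graph with product weights $q(x)q(y)$, so $\lambda_2(\pudnl_{S,1})\le 0$ outright and no lower bound on $Q$ is needed.

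\textbf{The cross case is fine.} It is the paper's argument, and the merged rank-one alternative in your step 4 already completes it. Restricting $\frac{\Delta}{n}\mathbf{1}\mathbf{1}^\top$ to $(V_L\setminus S_L)\times(V_R\setminus S_R)$ is still a multiple of the all-ones matrix. Hence the weight matrix equals $(1-(1-\beta)\Delta/n)\,q_Lq_R^\top-(1-\beta)\,\mathrm{diag}(q_L)E'\,\mathrm{diag}(q_R)$, where $E'$ is a submatrix of $E$ and $\|E'\|_2\le 3\sqrt{\Delta}$. Weyl's inequality for singular values needs nothing about how $D_L$ relates to $q_L$, because $D_L^{-1/2}q_Lq_R^\top D_R^{-1/2}$ is rank one for any diagonal $D_L,D_R$. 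The remaining factors are at most $\max_x\sqrt{q(x)/R_x}$, with $R_x$ the corresponding row (or column) sum. So the zero-padding and distortion concerns in your last paragraph are unnecessary.

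One small repair is still needed: you have not justified that $\sLower$ is monotone in $\theta$. Use monotonicity of $W$ under inclusion instead, as the paper does. Extend $S_R$ to a superset $A'$ of size $\alpha_*n\le n/2$, so that $W(S_R)\ge W(A')\ge n\sLower(\alpha_*)$.
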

\begin{proof}
    When $\beta = 1$, the result holds trivially.
    We suppose that $\beta < 1$.
    For every facet $S$ of codimension $2$, let $A = S \cap V_L$ and $B = S \cap V_R$.
    If $\abs{A} = \ell$ or $\abs{B} = r$, the link of $S$ is a complete graph with positive vertex weights.
    It is not hard to see $\lambda_2(\pudnl_{S, 1}) \le 0$.
    When $\abs{A} = \ell - 1$ and $\abs{B} = r - 1$,
    \begin{align*}
        \pudnl_{S, 1} = \begin{pmatrix}
            0 & P \\
            Q & 0
        \end{pmatrix}
    \end{align*}
    where $P \in \mathbb{R}^{(V_L \setminus A) \times (V_R \setminus B)}, Q \in \mathbb{R}^{(V_R \setminus B) \times (V_L \setminus A)}$ are defined by
    \begin{align*}
        P(u, v) &= \frac{\beta^{d_A(v)} \beta^{\id{(u, v) \in E}}}{ \sum_{z \in V_R \setminus B} \beta^{d_A(z)} \beta^{\id{(u, z) \in E}}}, \\
        Q(v, u) &= \frac{\beta^{d_B(u)} \beta^{\id{(u, v) \in E}}}{ \sum_{z \in V_L \setminus A} \beta^{d_B(z)} \beta^{\id{(z, v) \in E}}}.
    \end{align*}
    For each $u \in V_L \setminus A$ and $v \in V_R \setminus B$, define
    \[
        p_u = \beta^{d_B(u)}, \quad q_v = \beta^{d_A(v)}.
    \]
    Consider the matrix $M \in \mathbb{R}^{(V_L \setminus A) \times (V_R \setminus B)}$
    \[
        M(u, v) \defeq p_u q_v \beta^{\id{(u, v) \in E}} = p_u(1 - (1 - \beta)A_G(u, v))q_v.
    \]
    Moreover, define diagonal matrices $D_p = \diag{p_u : u \in V_L \setminus A}$, $D_q = \diag{p_v : v \in V_R \setminus B}$ and
    \begin{align*}
        \forall u \in V_L \setminus A, \quad D_L(u, u) &= p_u \sum_{v \in V_R \setminus B} q_v (1 - (1 - \beta) A_G(u, v)), \\
        \forall v \in V_R \setminus B, \quad D_R(v, v) &= q_u \sum_{u \in V_L \setminus A} p_u (1 - (1 - \beta) A_G(u, v)).
    \end{align*}
    We represent $\pudnl_{S, 1}$ in the form of
    \begin{align*}
        \begin{pmatrix}
            0 & D_{L}^{-1} M \\
            D_R^{-1} M^{\top} & 0
        \end{pmatrix}
    \end{align*}
    and $M = D_p(\one_{V_L \setminus A} \one_{V_R \setminus B}^{\top} - (1 - \beta)A_G) D_q$.
    By mononicity of $\beta^{d_A(v)}$, the choice of $\alpha_{*}$ and~\Cref{cor:one-side-weight-concentration}, it holds that
    \begin{align*}
        \sum_{v \in V_R \setminus B} q_v(1 - (1 - \beta) A_G(u, v)) \ge n\sLower(\alpha_{*}) - \alpha_{*} n\ge C_{\TD}(1 - \beta) 3\sqrt{\Delta} \alpha_{*}n - \Delta.
    \end{align*}
    Similar bound holds for $\sum_{u \in V_L \setminus A} p_u(1 - (1 - \beta)A_G(u, v))$.
    Therefore, by~\Cref{prop:RBG-spectrum}, it holds that
    \begin{align*}
        \lambda_2(\pudnl_{S, 1}) &\le \frac{(1 - \beta)\lambda_2(A_G)}{C_{\TD}(1 - \beta) 3\sqrt{\Delta} (\alpha_{*}n - \Delta)} \\
        &\le \frac{1 + o_n(1)}{C_{\TD} \alpha_{*}n} \\
        &\le \frac{1}{2(\ell + r)}
    \end{align*}
    for all sufficiently large positive integer $n$.
\end{proof}

\subsection{One-side trickle-down theorem}
Now we discuss the trickle-down theorem on one-side slices.
When $\beta = 0$,~\cite{KOGT26} proposes a way to derive the trickle-down theorem.
We show how to extend their method to adapt the case $\beta > 0$.

By symmetry, we only need to consider the left-side slices $\mu_{s}^{L}$.
Fix a size $s = 0, \ldots, n$.
Construct the simplicial complex $(\SC = \SC_s^L, \pi = \mu_s^L)$ as
\begin{itemize}
    \item $\SC_s^L$ is the downward closed collection of $\binom{V_L}{s}$;
    \item the weight function of $\SC$ is defined as~\eqref{eq:simplicial-complex-weight-function} with $\pi = \mu_s^L$.
\end{itemize}
Recall that for every vertex subset $A \in \binom{V_L}{s}$,
\[
    \mu_{s}^{L}(A) \propto W(A) = \prod_{v \in V_R} \ab(1 + \lambda \beta^{d_{A}(v)}).
\]
Define the sequences $\set{\varphi_j}_{j = 0}^{\Delta}, \set{\phi_j}_{j = 0}^{\Delta - 1}$ and $\set{\delta_j}_{j = 0}^{\Delta - 2}$ as
\[
    \varphi_j \defeq 1 + \lambda \beta^j, \quad
    \phi_j \defeq \ln\frac{\varphi_{j}}{\varphi_{j + 1}}, \quad
    \delta_j \defeq \frac{\varphi_j \varphi_{j + 2}}{\varphi_{j + 1}^2} - 1 = \frac{\lambda(1 - \beta)^2 \beta^j}{(1 + \lambda \beta^{j + 1})^2}.
\]
Now given a facet $A$ of codimension $2$, for a vertex $x \in V_L \setminus A$, define
\begin{align*}
    q_A(x) = \prod_{v \in N_G(x)} \frac{\varphi_{d_A(v) + 1}}{\varphi_{d_A(v)}}.
\end{align*}
We define the off-diagonal matrix $K \in \mathbb{R}^{(V_L \setminus A) \times (V_L \setminus A)}$ as
\begin{align*}
    \forall x \neq y \in V_L \setminus A, \quad K(x, y) = \prod_{v \in N_G(x) \cap N_G(y)} (1 + \delta_{d_A(v)})
\end{align*}
with convention that the empty product is $1$.
Then for every $x \in V_L \setminus A$, set
\[
    Z_{A}(x) \defeq \sum_{y \in V_L \setminus A} q_A(y) K(x, y).
\]
Therefore, a direct calculation gives that for $x \neq y \in V_L \setminus A$,
\begin{align*}
    \pudnl_{S, 1}(x, y) = \frac{q_A(y) K(x, y)}{Z_{A}(x)}.
\end{align*}
In addition, the average free energy of $p_A$ is defined as
\begin{align*}
    \averageEnergy_A = -\frac{1}{n - \abs{A}} \sum_{x \in V_L \setminus A} \ln{q_A(x)} = \frac{1}{n - \abs{A}} \sum_{j = 0}^{\Delta} n p_A(j) (\Delta - j) \phi_j.
\end{align*}

\begin{lemma} \label{lem:one-side-trickle-down}
    Assume that statements in~\Cref{prop:RBG-spectrum,prop:neighbor-intersection} holds.
    Set
    \[
        U_G(\lambda, \beta) = \lambda(1 - \beta)^{2} \lambda_2(A_G)^2 + \lambda^2(1 - \beta)^4 - 1.
    \]
    If $U_G(\lambda, \beta) \le 0$, it holds that $\lambda_2(\pudnl_{S, 1}) \le 0$.
    Otherwise,
    \[
        \lambda_2(\pudnl_{S, 1}) \le \frac{U_G(\lambda, \beta) \e^{\averageEnergy_A}}{n - \abs{A} - \e^{\overline{L}_A}}.
    \]
\end{lemma}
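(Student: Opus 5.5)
The plan is to write the link walk as a similarity transform of a symmetric matrix and bound its second eigenvalue by a Rayleigh quotient. The Rayleigh quotient uses the $q_A$-weighted inner product.

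\textbf{Step 1 (reformulation).} By the formula for $\pudnl_{S,1}(x,y)$, we have $\pudnl_{S,1} = D_Z^{-1}\widetilde K D_q$. Here $D_q = \diag{q_A(x)}$ and $D_Z = \diag{Z_A(x)}$, and $\widetilde K$ is $K$ with zero diagonal. The walk is reversible with respect to $\pi(x)\propto q_A(x)Z_A(x)$. By Courant--Fischer,
\[
\lambda_2(\pudnl_{S,1}) = \max_{g\perp \text{top}}\frac{g^{\top} D_q\widetilde K D_q g}{g^{\top} D_q D_Z g}.
\]
Since one direction may be discarded, I will bound this quotient over the codimension-one subspace $\{g : q_A^{\top} g = 0\}$. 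The aim is to show the numerator is at most $U_G(\lambda,\beta)\, g^{\top}D_q^2 g$. It then follows that $\lambda_2\le 0$ when $U_G\le 0$.

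\textbf{Step 2 (expanding $K$).} By~\Cref{prop:neighbor-intersection}, any two distinct $x,y$ share at most two neighbours $a,b$. Hence
\[
K(x,y)=1+\sum_{v\in N(x)\cap N(y)}\delta_{d_A(v)}+\delta_a\delta_b\,\id{|N(x)\cap N(y)|=2}.
\]
This gives
\[
\widetilde K = J - \ID + \bigl(B\,\mathrm{D}_\delta B^{\top} - \diag{\textstyle\sum_{v\in N(x)}\delta_{d_A(v)}}\bigr) + R,
\]
where $B$ is the restriction of $A_G$ to rows $V_L\setminus A$, $\mathrm{D}_\delta=\diag{\delta_{d_A(v)}}$, and $R$ is the second-order correction supported on pairs with two common neighbours. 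Every $\delta_j\le \lambda(1-\beta)^2$. On the subspace $q_A^{\top}g=0$ the term $J$ contributes $(q_A^{\top}g)^2=0$, and $-\ID$ contributes $-g^{\top}D_q^2g$. For the $B\mathrm D_\delta B^{\top}$ term I will use $\mathrm D_\delta\preceq\lambda(1-\beta)^2\ID$ together with the spectral decomposition of $A_GA_G^{\top}$: its top eigenvalue $\Delta^2$ has the all-ones eigenvector, and all other eigenvalues are at most $\lambda_2(A_G)^2$. The top direction should be absorbed, together with $J$, into a rank-one piece that is killed on the chosen subspace, leaving at most $\lambda(1-\beta)^2\lambda_2(A_G)^2\, g^{\top}D_q^2g$. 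The correction $R$ has entries at most $\lambda^2(1-\beta)^4$, and each $x$ has at most $O(\Delta^2)$ partners $y$ sharing two neighbours. I will aim to bound it by $\lambda^2(1-\beta)^4\, g^{\top}D_q^2 g$, possibly after using the subtracted diagonal term to cancel degree factors. Summing, the numerator is at most $U_G(\lambda,\beta)\,g^{\top}D_q^2 g$.

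\textbf{Step 3 (denominator).} Since $\beta\le1$, the sequence $\varphi_j$ is nonincreasing, so $q_A(x)\le1$ and hence $g^{\top}D_q^2g\le g^{\top}D_qg$. Also $K(x,y)\ge1$ for $y\ne x$, so $Z_A(x)\ge \sum_{y\ne x}q_A(y)\ge\sum_y q_A(y)-1$. By AM--GM, $\sum_y q_A(y)\ge (n-|A|)\exp(-\averageEnergy_A)$. Therefore
\[
g^{\top}D_qD_Zg\ge\bigl((n-|A|)\e^{-\averageEnergy_A}-1\bigr)\,g^{\top}D_qg .
\]
Dividing gives the claimed bound $U_G\e^{\averageEnergy_A}/(n-|A|-\e^{\averageEnergy_A})$.

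\textbf{Main obstacle.} The delicate part is Step 2: controlling the rank-one top direction of $B\mathrm D_\delta B^{\top}$, and bounding the second-order term $R$, with exactly the constants appearing in $U_G$. The difficulty is that the weights $\mathrm D_\delta$ and $D_q$ are non-uniform, so the all-ones eigenvector of $A_GA_G^{\top}$ is no longer aligned with the direction $q_A$ removed by the constraint $q_A^{\top}g=0$. My first attempt is to write $\mathrm D_\delta=\lambda(1-\beta)^2\ID-(\text{PSD})$ and drop the PSD part. This reduces matters to the unweighted matrix $BB^{\top}$, whose deviation from $\frac{\Delta^2}{n}J$ is controlled by $\lambda_2(A_G)^2$ via~\Cref{prop:RBG-spectrum}. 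The leftover $J$-multiple then merges with the existing $J$ term, which is exactly the direction removed by $q_A^{\top}g=0$.
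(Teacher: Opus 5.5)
Your plan has the same structure as the paper's proof. The paper uses the splitting $K = J + BD_\delta B^{\top} - \mathrm{diag}(1+s_x) + R$ with $s_x=\sum_{v\in N(x)}\delta_{d_A(v)}$. It then dominates $D_\delta\preceq\lambda(1-\beta)^2 I$ and uses $BB^{\top}\preceq\frac{\Delta^2}{n}J+\lambda_2(A_G)^2 I$. Its ``rank-one subtraction'' is exactly your Courant--Fischer restriction to $q_A^{\top}g=0$. The denominator bounds $q_A\le 1$ and $Z_A(x)\ge(n-|A|)e^{-\overline{L}_A}-1$ are also the same.

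The alignment issue you flag is not an obstacle. Once $D_\delta$ is dominated by a scalar, every rank-one piece is a multiple of $J$ on $V_L\setminus A$. Then $g^{\top}D_qJD_qg=(q_A^{\top}g)^2$ vanishes on your subspace, however non-uniform $q_A$ is. Your ``first attempt'' is the right argument.

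The genuine gap is the correction $R$, which you leave at ``I will aim to''. The hypotheses give only that distinct partners $y$ of $x$ use distinct pairs in $N(x)$. This holds because two right vertices have at most two common neighbours, by Proposition~\ref{prop:neighbor-intersection}. The row-sum bound then yields $R\preceq\mathrm{diag}(\tfrac12 s_x^2)$, which controls $\lambda_{\max}(R)$ only up to about $\binom{\Delta}{2}\lambda^2(1-\beta)^4$. Cancelling against the dropped term $-\mathrm{diag}(s_x)$ works only when $s_x\le 2$. The lemma does not assume this: for $A=\emptyset$, $s_x=\Delta\lambda(1-\beta)^2/(1+\lambda\beta)^2$.

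The paper does not close this step either. It handles the term in one line, deducing $M_2\preceq\lambda^2(1-\beta)^4 I$ from the entrywise bound. That inference is valid only when $M_2$ has at most one nonzero entry per row, so copying it will not complete your Step 2.

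What does close it is an extra with-high-probability property of the random graph: no vertex lies on two distinct $4$-cycles. The union of two such cycles is connected and has more edges than vertices, so for fixed $\Delta$ its expected number of copies is $O(1/n)$. Under this property each $x$ has at most one $y$ with $|N(x)\cap N(y)|=2$. Then $R$ is a weighted matching with eigenvalues $\pm m_{xy}$ (and $0$), so $R\preceq\lambda^2(1-\beta)^4 I$. Consequently $g^{\top}D_qRD_qg\le\lambda^2(1-\beta)^4\,g^{\top}D_q^2g$. With this added hypothesis, or a correspondingly weakened $U_G$, the rest of your argument goes through as written.
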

\begin{proof}
    Define a matrix $H \in \mathbb{R}^{(V_L \setminus A) \times V_R}$ as $H(x, v) = \id{(x, v) \in E}$.
    Set the diagonal matrix $D_{\delta} = \diag{\delta_{d_A(v)} : v \in V_R}$ and an off-diagonal matrix $M_2 \in \mathbb{R}^{(V_L \setminus A) \times (V_L \setminus A)}$ as
    \begin{align*}
        \forall x \neq y \in V_L \setminus A, \quad M_2(x, y) = \begin{cases}
            \delta_{d_A(v_1)} \delta_{d_A(v_2)}, & N_G(x) \cap N_G(y) = \set{v_1, v_2} \\
            0 & \text{otherwise}
        \end{cases}\;.
    \end{align*}
    By~\Cref{prop:neighbor-intersection}, it holds that
    \[
        K = \one_{V_L \setminus A} \one_{V_L \setminus A}^{\top} + H D_{\delta} H^{\top} - \diag{1 + \sum_{v \in N_G(x)} \delta_{d_A(v)} : x \in V_L \setminus A} + M_2.
    \]
    Recall the definition of $\set{\delta_j}_{j = 0}^{\Delta - 2}$, it always holds that $\delta_j \le \lambda(1 - \beta)^2$.
    Therefore $D_{\delta} \preceq \lambda(1 - \beta)^2 \ID$.
    For $M_2$, note that $M_2$ is symmetric and of the form
    \[
        \begin{pmatrix}
            0 & \ab(m_{xy})_{x, y} \\
            \ab(m_{xy})_{x, y}^{\top} & 0
        \end{pmatrix}, \quad m_{xy} \le \lambda^2(1 - \beta)^4.
    \]
    Therefore $M_2 \preceq \lambda^2(1 - \beta)^4 \ID$.
    Consequently,
    \[
        K \preceq \one_{V_L \setminus A} \one_{V_L \setminus A}^{\top} + \lambda(1 - \beta)^2 H H^{\top} + (\lambda^2(1 - \beta)^4 - 1)\ID.
    \]
    Note that $H$ is a principal compression of $A_G$.
    Therefore by regularity,
    \begin{align*}
        H H^{\top} \preceq A_G A_G^{\top} \preceq \frac{\Delta^2}{n} \one_{V_L} \one_{V_R}^{\top} + \lambda_2(A_G) \ID.
    \end{align*}
    Therefore
    \[
        K \preceq \ab(1 + \frac{\lambda(1 - \beta)^2 \Delta^2}{n}) \one \one^{\top} + U_G(\lambda, \beta) \ID.
    \]
    A similar argument in~\cite{KOGT26} subtracts the rank-one part for $\pudnl_{S, 1}$.
    Hence we only need to give bounds to $q_A(x)$ and $Z_A(x)$.
    For $q_A(x)$, it is clear to see $q_A(x) \le 1$ since $\beta \le 1$.
    For $Z_A(x)$, by Jensen's inequality,
    \begin{align*}
        Z_{A}(x) &\ge \sum_{y \neq x \in V_L \setminus A} q_A(y) \\
        &\ge \sum_{y \in V_L \setminus A} q_A(y) - 1 \\
        &\ge (n - \abs{A}) \e^{-\averageEnergy_A} - 1.
    \end{align*}
    Hence $1/Z_{A}(x) \le \e^{\averageEnergy_A}/(n - \abs{A} - \e^{\averageEnergy_A})$.
    Combining all things together, we conclude the lemma.
\end{proof}

\section{Design of Our Approximate Algorithm} \label{sec:algorithm-design}
In this section, we give our algorithm to approximate $\Z_{G}(\lambda, \beta)$ on random regular bipartite graphs and therefore prove~\Cref{thm:random-regular-bipartite-Ising-FPRAS}.

Given $\lambdaUpperBound$ and $\ParameterCondition$ for some $\lambda_0$ and $C$, we show how to set parameters to apply the above approximation and trickle-down theorems.
For some sufficiently large constant $K > 0$, define $\alpha_0$ be the unique solution of the following equation on $[0, 1/2]$:
\[
    (1 - (1 - \beta) \alpha)^{\Delta} = K\ab(1 + 3(1 - \beta)\sqrt{\Delta}) \alpha.
\]
Pick
\begin{align} \label{eq:parameter-settings}
    \rho = (1 + \eta) \frac{\lambda}{1 + \lambda}, \quad \alpha = \min\set{\alpha_0, \rho}.
\end{align}
\begin{remark}
    Given $K > 0$, we only need a largest $\alpha' \le \alpha_0$ such that $\alpha'n$ is an integer.
    Therefore it costs polynomial time to determine the value of $\alpha'$ and takes little impact on our analysis.
    For simplicity, we employ the solution $\alpha_0$ in following proofs.
\end{remark}

\begin{lemma} \label{lem:clean-weight-concentration}
    Fix a sufficiently large constant $K > 0$.
    For $\beta \in (0, 1)$ and sufficiently large factor $\Delta > 0$, it holds that
    \begin{align*}
        H(\alpha_0) \le \frac{(1 + 3(1 - \beta) \sqrt{\Delta})^2 \alpha_0^2}{1 - \beta}.
    \end{align*}
    Then on the event of~\Cref{cor:one-side-weight-concentration}, for every $\theta \le \alpha_0$,
    \[
        \frac{3}{4}(1 - (1 - \beta)\theta)^{\Delta} \le \frac{W(A)}{n} \le \frac{5}{4} (1 - (1 - \beta)\theta)^{\Delta}.
    \]
\end{lemma}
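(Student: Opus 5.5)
Write $\gamma \defeq 1 - \beta$ and $c \defeq K(1 + 3\gamma\sqrt{\Delta})$. By definition of $\alpha_0$ we have $(1 - \gamma\alpha_0)^{\Delta} = c\alpha_0$, so the claimed entropy bound reads $H(\alpha_0) \le (c\alpha_0)^2/(K^2\gamma) = (1 - \gamma\alpha_0)^{2\Delta}/(K^2\gamma)$. I would prove this by first locating $\alpha_0$ and then comparing. The left side of the defining equation is decreasing and the right side is increasing, so a comparison at a test point gives the location. Plugging $\theta = t/(\gamma\Delta)$ with $t = \ln(\gamma\Delta/c)$ shows that $\gamma\Delta\alpha_0 = \Theta(\ln(\gamma\Delta/c))$ when $\gamma\Delta \gg c$. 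In the opposite regime, where $\gamma\Delta = O(c)$, we get $\alpha_0 = \Theta(1/c)$ and $(1 - \gamma\alpha_0)^\Delta = \Theta(1)$. Since $\lambda(1-\beta)\lesssim\Delta^{-1/2}$ is not assumed here, I would treat the two regimes separately.

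\textbf{Entropy step.} Use $H(\alpha) \le \alpha\ln(\e/\alpha)$.
\begin{itemize}
\item In the regime $\gamma\Delta \le c$, we have $\alpha_0 \approx 1/c$. The target becomes $\ln(\e c)/c \lesssim 1/(K^2\gamma)$, i.e.\ $\gamma\ln(\e c) \lesssim c/K^2 = (1 + 3\gamma\sqrt{\Delta})/K$. This holds when $\gamma$ is small. When $\gamma$ is not small, the right side is of order $\gamma\sqrt{\Delta}/K$, which beats $\gamma\ln(K\gamma\sqrt\Delta)$ for $\Delta$ large.
\item In the regime $\gamma\Delta \gg c$, the target becomes $\alpha_0\ln(\e/\alpha_0) \le c^2\alpha_0^2/\gamma$, i.e.\ $\gamma\ln(\e/\alpha_0) \le c^2\alpha_0$. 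Here $c^2\alpha_0 \approx c^2\ln(\gamma\Delta/c)/(\gamma\Delta)$ and $c^2 \ge 9K^2\gamma^2\Delta$, so the right side is at least $9K^2\gamma\ln(\gamma\Delta/c)$. The left side is about $\gamma\ln(\gamma\Delta)$. Since $\ln(\gamma\Delta/c)$ and $\ln(\gamma\Delta)$ are comparable up to a $\ln c \approx \ln(K\gamma\sqrt\Delta)$ correction, the inequality holds once $K$ is a large constant.
\end{itemize}
The main obstacle is this bookkeeping across regimes, especially the boundary where $\gamma\Delta/c$ is moderate. There I would lower-bound $\alpha_0 \ge \min\{1/(2c), 1/(2\gamma\Delta)\}$ directly from the equation and check each piece.

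\textbf{Weight bound via the error term.} On the event of \Cref{cor:one-side-weight-concentration}, we have $\mean(\theta) - \Error_\beta(\theta, u) \le W(A)/n \le \mean(\theta) + \Error_\beta(\theta, u)$. By \Cref{lem:mean-approximation}, $\mean(\theta) = (1-\gamma\theta)^\Delta + o_n(1)$, so it suffices to show $\Error_\beta(\theta,u) + o(1) \le \tfrac14(1-\gamma\theta)^\Delta$ for $\theta \le \alpha_0$. Take $u = 2\ln n$, so that $h_n(\theta,u) = H(\theta) + o(1)$.

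Since $H$ is increasing on $[0,1/2]$ and $(1-\gamma\theta)^\Delta$ is decreasing, it is enough to bound $\gamma H(\alpha_0) + \sqrt{\gamma H(\alpha_0)} \le (1-\gamma\alpha_0)^\Delta/16$. By the first part, $\sqrt{\gamma H(\alpha_0)} \le (1-\gamma\alpha_0)^\Delta/K$ and $\gamma H(\alpha_0) \le (1-\gamma\alpha_0)^{2\Delta}/K^2 \le (1-\gamma\alpha_0)^\Delta/K^2$. For $K \ge 32$ the sum is at most $(1-\gamma\alpha_0)^\Delta/16$.

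This bound only controls things at $\alpha_0$, while $\theta$ ranges below it. For $\theta \le \alpha_0$, monotonicity gives $\Error(\theta) \le \Error(\alpha_0)$ and $(1-\gamma\theta)^\Delta \ge (1-\gamma\alpha_0)^\Delta$, so the bound at $\alpha_0$ suffices. The case of very small $\theta$, where $H(\theta)$ is dominated by $3\ln(n+1)/n$, is absorbed by the $o_n(1)$ term as long as $(1-\gamma\alpha_0)^\Delta = c\alpha_0$ is bounded below by a constant depending on $\Delta$.
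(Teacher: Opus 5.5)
Your overall route is the paper's: bound $H(\alpha_0)\le\alpha_0\ln(\e/\alpha_0)$, split according to the size of $(1-\beta)\Delta$, then reduce the weight bound to $\theta=\alpha_0$ by monotonicity using $\sqrt{\gamma H(\alpha_0)}\le c\alpha_0/K$. Your regime $\gamma\Delta\le c$, the boundary regime, and the second half are fine.

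\textbf{The gap: a dropped factor $K^{-2}$.} It sits in the regime $\gamma\Delta\gg c$, which contains the main case $3\gamma\sqrt{\Delta}\ge 1$. There you write the target as $\alpha_0\ln(\e/\alpha_0)\le c^2\alpha_0^2/\gamma$. The actual claim, as you wrote yourself at the start, is $H(\alpha_0)\le c^2\alpha_0^2/(K^2\gamma)$. Your closing step ("the right side is at least $9K^2\gamma\ln(\gamma\Delta/c)$, so it holds once $K$ is large") relies entirely on the spurious $K^2$. With the correct target and $(1+3\gamma\sqrt\Delta)^2\ge 9\gamma^2\Delta$, what you need is $\ln(\e/\alpha_0)\le 9y$ with $y=\gamma\Delta\alpha_0$. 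This is a comparison with an absolute constant, and enlarging $K$ only hurts: a larger $c$ makes $y$ smaller and $\ln(1/\alpha_0)$ larger. Moreover, the weaker inequality you argued for only yields $\sqrt{\gamma H(\alpha_0)}\le(1-\gamma\alpha_0)^{\Delta}$. That is useless for the $\tfrac34,\tfrac54$ bound your second half depends on.

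\textbf{How to close it.} You have to track constants, and the $\Theta(\ln(\gamma\Delta/c))$ localization of $y$ is too coarse for that. The reason is that $\ln(\e/\alpha_0)=1+\ln c+\ln(\gamma\Delta/c)-\ln y$. When $3\gamma\sqrt\Delta\ge1$, the term $\ln c\approx\ln(3K\gamma\sqrt\Delta)$ is of the same order as $\ln(\gamma\Delta/c)$. The paper proceeds as follows.
\begin{enumerate}
\item From the defining equation and $(1-x)^\Delta\ge\e^{-2x\Delta}$ for $x\le1/2$, one gets $\ln(1/\alpha_0)\le\ln c+2y$ and $\e^{-2y}\le cy/(\gamma\Delta)$.
\item If $3\gamma\sqrt\Delta\ge1$, then $c\le 6K\gamma\sqrt\Delta$, so $\sqrt\Delta\le 6Ky\e^{2y}$. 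Hence $\ln(1+3\gamma\sqrt\Delta)\le\ln(24K)+3y$ and $\ln(\e/\alpha_0)\le 1+2\ln K+\ln 24+5y\le 6y\le 9y$. This holds once $\Delta$ is large enough, in terms of $K$, that $y\ge 1+2\ln K+\ln24$.
\item If $3\gamma\sqrt\Delta<1$ and $\gamma\Delta\ge L(K)$, then $\gamma\Delta\le 2Ky\e^{2y}$ forces $y\ge1+\ln(2K)$. This gives $\ln(\e/\alpha_0)\le1+\ln(2K)+2y\le3y$.
\end{enumerate}
So this regime is closed by taking $\Delta$ large relative to $K$, with the absolute factor $9$ beating roughly $5$--$6$, not by taking $K$ large.
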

\begin{proof}
    Take
    \[
        y = (1 - \beta)\Delta \alpha_0, \quad t = (1 - \beta)\sqrt{\Delta}, \quad Q = \frac{(1 + 3(1 - \beta)\sqrt{\Delta})^2 \alpha_0}{1 - \beta}.
    \]
    It is sufficient to show that $\ln{(\e/\alpha_0)} \le Q$ and by the inequality $H(\alpha_0) \le \alpha_0 \ln{(\e/\alpha_0)}$ we conclude the lemma.

    Recall the choice $\alpha_0$ and we calculate:
    \[
        \ln{(1/\alpha_0)} = \ln{\ab(K\ab(1 + 3(1 - \beta)\sqrt{\Delta}))} - \Delta \ln{\ab(1 - \frac{y}{\Delta})}.
    \]
    A standard estimation to $-\ln{(1 - x)}$ on interval $[0, 1)$ and $y/\Delta \le \alpha_0 \le 1/2$ gives
    \[
        y \le -\Delta \ln{\ab(1 - \frac{y}{\Delta})} \le \frac{y}{1 - y/\Delta} \le 2y.
    \]
    Therefore
    \[
         \ln{\ab(K\ab(1 + 3(1 - \beta)\sqrt{\Delta}))} + y \le \ln{(1/\alpha_0)} \le  \ln{\ab(K\ab(1 + 3(1 - \beta)\sqrt{\Delta}))} + 2y.
    \]

    When $3(1 - \beta)\sqrt{\Delta} \ge 1$, it holds that $t \ge 1/3$ and
    \[
        3t \le 1 + 3(1 - \beta)\sqrt{\Delta} \le 6t, \quad Q \ge 9y.
    \]
    By the equation determining $\alpha_0$, we can also have the following inequalities:
    \begin{align*}
        \begin{cases}
        \e^{-2y} \le K(1 + 3(1 - \beta)\sqrt{\Delta}) \frac{y}{(1 - \beta)\Delta} \\
        \sqrt{\Delta} \le 6Ky\e^{2y}
        \end{cases}\;.
    \end{align*}
    By a crude inequality $1 + 3(1 - \beta)\sqrt{\Delta} \le 4\sqrt{\Delta}$, we upper bound the term $\ln{(1 + 3(1 - \beta)\sqrt{\Delta})}$ as
    \begin{align*}
        \ln{(1 + 3(1 - \beta)\sqrt{\Delta})} \le \ln{4} + \frac{1}{2}\ln{\Delta} \le \ln{24K} + 3y
    \end{align*}
    and we obtain an additional bound $y \ge (1 + 2\ln{K} + \ln{24})$ if
    \[
        \Delta \ge 36K^2(1 + 2\ln{K} + \ln{24})^2 \e^{4(1 + 2\ln{K} + \ln{24})}.
    \]
    Therefore
    \[
        \ln{(\e/\alpha_0)} \le 1 + 2\ln{K} + \ln{24} + 5y \le 6y \le Q.
    \]

    When $3(1 - \beta)\sqrt{\Delta} < 1$, then it holds that $t < 1/3$ and thus $1 + 3(1 - \beta)\sqrt{\Delta} < 2$.
    We bound $Q$ as
    \[
        Q = \ab(\frac{1 + 3(1 - \beta)\sqrt{\Delta}}{t})^2 y > 9y.
    \]
    The term $\ln{(\e/\alpha_0)}$ can be upper bounded by $1 + \ln{2K} + 2y$.
    Together with $(1 - \beta)\Delta \le 2Ky\e^{2y}$ by the equation determining $\alpha_0$, if $(1 - \beta)\Delta \ge L$ where $L = 2K(1 + \ln{2K}) \e^{2(1 + \ln{2K})}$, it holds that
    \[
        \ln{(\e/\alpha_0)} \le 3y \le Q.
    \]
    Otherwise, when $\Delta \ge 2L$, it holds that
    \[
        K(1 + 3(1 - \beta)\sqrt{\Delta}) \alpha_0 = (1 - (1 - \beta)\alpha_0)^{\Delta} \ge (1 - L/\Delta)^{\Delta} \ge \e^{-2L}.
    \]
    Thus $\alpha \ge \alpha_K = \e^{-2L}/(K(1 + 3(1 - \beta)\sqrt{\Delta}))$.
    For sufficiently large $\Delta$ depending only on $K$, it holds that
    \[
        \ln{(\e/\alpha_0)} \le \ln{(\e/\alpha_K)} \le \frac{\alpha_K}{1 - \beta} \le Q.
    \]
    Therefore we prove that $H(\alpha_0) \le \alpha_0 Q$.

    On the event of~\Cref{cor:one-side-weight-concentration}, observe that we only need to show the case $\theta = \alpha_0$.
    Pick $K$ as a sufficient large positive number and $\Delta$ is sufficiently large with respect to $K$, $H(\alpha_0)$ is at most $(1 - (1 - \beta)\alpha_0)^{\Delta}/4$ by the equation defining $\alpha_0$.
    Thus we conclude the full lemma.
\end{proof}

The following verification for $\alpha$ and $\rho$ allows us to apply the approximation to the partition function $\Z_{G}(\lambda, \beta)$ (\Cref{thm:approximation-to-partition-function}) and trickle-down theorems (\Cref{lem:two-side-trickle-down,lem:one-side-trickle-down}) safely.
\begin{proposition} \label{prop:parameter-verfication-alpha}
    For sufficiently large constant $K > 0$, recall that the parameter $\alpha_0 \in [0, 1/2]$ is defined by
    \[
        (1 - (1 - \beta)\alpha_0)^{\Delta} = K\ab(1 + 3(1 - \beta)\sqrt{\Delta}) \alpha_0
    \]
    and $\alpha = \min\set{\alpha_0, \rho}$.
    There exist constants $\lambda_0 > 0$ and $C > 0$ such that on the event of all results in~\Cref{lem:clean-weight-concentration} and the assumption $\lambdaUpperBound$ and $\ParameterCondition$, it holds that $\alpha \le \alpha_*$ and at least one of the following holds:
    \begin{enumerate}
        \item $\alpha = \rho$; or
        \item for all $A \subseteq V_L$ of size $\alpha n$, it holds that
        \[
            \lambda W(A) \le \frac{1}{2} \abs{A}.
        \]
    \end{enumerate}
\end{proposition}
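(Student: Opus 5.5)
The plan is to prove the two claims separately, both resting on the two-sided estimate of \Cref{lem:clean-weight-concentration}: for every $\theta \le \alpha_0$ and every $A$ of size $\theta n$, $\frac{3}{4}(1-(1-\beta)\theta)^{\Delta} \le W(A)/n \le \frac{5}{4}(1-(1-\beta)\theta)^{\Delta}$. The cases $\beta = 1$ and $\beta = 0$ are handled directly; for $\beta=1$ the defining equation of $\alpha_*$ reads $\sLower(\theta) \ge \theta$, which holds trivially up to $\rho$.

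\textbf{Step 1: $\alpha \le \alpha_*$.} Since $\alpha \le \rho = \rho_*$, it suffices to check that $\theta = \alpha$ satisfies the defining inequality~\eqref{eq:overlap-choice}, namely $\sLower(\alpha) - \alpha \ge C_{\TD}(1-\beta)3\sqrt{\Delta}\,\alpha$. Here I am reading $\alpha_*$ as the largest admissible $\theta$, and using that the condition is monotone in $\theta$ because the left side decreases and the right side increases; I would state this monotonicity explicitly. The lower half of \Cref{lem:clean-weight-concentration} provides the lower bound $\sLower(\theta) \ge \frac34 (1-(1-\beta)\theta)^\Delta$ for $\theta\le\alpha_0$. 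If the paper's $\sLower$ is defined through the same error terms, this follows directly from the proof of that lemma, where $\Error_\beta$ is at most a quarter of the mean. Then, since $\alpha \le \alpha_0$ and $(1-(1-\beta)\theta)^\Delta$ is decreasing, the defining equation of $\alpha_0$ gives
\[
\tfrac34(1-(1-\beta)\alpha)^\Delta \ge \tfrac34 K(1+3(1-\beta)\sqrt\Delta)\,\alpha \ge (1 + C_{\TD}3(1-\beta)\sqrt\Delta)\,\alpha .
\]
This holds as soon as $\frac34 K \ge C_{\TD}$, which is where $K$ is chosen large.

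\textbf{Step 2: the dichotomy.} If $\alpha_0 \ge \rho$ then $\alpha = \rho$ and we are done, so assume $\alpha = \alpha_0 < \rho$. For $|A| = \alpha_0 n$ the upper half of \Cref{lem:clean-weight-concentration} together with the defining equation give
\[
\lambda W(A) \le \tfrac54 \lambda n K(1+3(1-\beta)\sqrt\Delta)\alpha_0 = \tfrac54 K\lambda(1+3(1-\beta)\sqrt\Delta)\,|A| .
\]
So it suffices that $\lambda(1+3(1-\beta)\sqrt\Delta) \le \frac{2}{5K}$. Using $\lambda \le \lambda_0$ and $\lambda(1-\beta)\sqrt\Delta \le C$, the left side is at most $\lambda_0 + 3C$. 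Choosing $\lambda_0, C \le \frac{1}{20K}$ closes the argument, after fixing $K$ first from Step 1 and \Cref{lem:clean-weight-concentration}, and also ensuring $(1+\eta)\lambda_0/(1+\lambda_0)\le 1/2$.

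\textbf{Main obstacle.} The expected difficulty is the order of quantifiers and the consistency of constants. $K$ must be large enough for Step 1 and for \Cref{lem:clean-weight-concentration}, which in turn forces $\Delta_0$ to be large in terms of $K$, while $\lambda_0$ and $C$ must be small in terms of $K$. The second delicate point is justifying that $\sLower$, and not merely $W(A)/n$, is bounded below by $\frac34(1-(1-\beta)\theta)^\Delta$, since $\alpha_*$ is defined via $\sLower$. I would first try to show that $\Error_\beta(\theta,u)$ is at most a quarter of the mean, extracted from the proof of \Cref{lem:clean-weight-concentration} using its bound on $H(\alpha_0)$. Failing that, I would replace $\sLower$ by $W(A)/n$, which is all that \Cref{lem:two-side-trickle-down} actually uses.
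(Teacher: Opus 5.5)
Your proposal is correct and follows the paper's proof. In both, $\alpha\le\alpha_*$ comes from $(1-(1-\beta)\theta)^{\Delta}\ge K(1+3(1-\beta)\sqrt{\Delta})\,\theta$ for $\theta\le\alpha_0$ together with a large choice of $K$, and the dichotomy comes from the $\frac54$ upper bound of \Cref{lem:clean-weight-concentration} combined with choosing $\lambda_0$ and $C$ small relative to $K$, exactly as you do. Your explicit condition $\frac34 K\ge C_{\TD}$, and your point that $\Error_\beta$ (hence $\sLower$), not just $W(A)/n$, must stay within a quarter of the mean, fill in what the paper leaves as ``by choosing $K$ large enough''. The monotonicity claim is unnecessary and unjustified for the $s_-$ part of $\sLower$: since $\alpha\le\rho_*$ satisfies the defining inequality, $\alpha\le\alpha_*$ follows directly from $\alpha_*$ being a maximum.
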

\begin{proof}
    We show that
    \[
        \lambda \sUpper(\alpha) \le \frac{1}{2} \alpha
    \]
    and thus by~\Cref{cor:one-side-weight-concentration},
    \[
        \lambda W(A) \le \lambda n \sUpper(\alpha) \le \frac{1}{2} \abs{A}.
    \]
    To verify this, note that for all $\theta \le \alpha_0$, it holds that $(1 - (1 - \beta) \theta)^{\Delta} \ge K(1 + 3(1 - \beta)\sqrt{\Delta}) \theta$.1
    By choosing $K$ large enough, we can make $\alpha \le \alpha_{*}$.

    When $\alpha = \rho$, there is nothing to do.
    Otherwise, by~\Cref{lem:clean-weight-concentration}, we obtain that
    \[
        \frac{\lambda \sUpper(\alpha)}{\alpha} \le \ab(\frac{5}{4} + o_n(1)) \lambda K(1 + 3(1 - \beta)\sqrt{\Delta}) \le 2K (\lambda + 3\lambda(1 - \beta)\sqrt{\Delta}).
    \]
    Choose $\lambda_0$ and $C$ as small constants to make $2K(\lambda_0 + C) \le 1/2$.
    Then we conclude the proposition.
\end{proof}

\begin{proposition} \label{prop:parameter-verify-rho}
    For sufficiently large constant $K$ as in~\Cref{lem:clean-weight-concentration}, there exist $\lambda_0, C > 0$, a positive integer $\Delta_0$ such that the following holds.
    Fix $\Delta \ge \Delta_0$ and
    \[
        \lambdaUpperBound, \quad \ParameterCondition.
    \]
    For all sufficiently large positive integer $n$, we take $u = 2\ln{n}$ and suppose the events of~\Cref{prop:RBG-spectrum,prop:neighbor-intersection},~\Cref{cor:one-side-weight-concentration,prop:degree-profile-concentration} hold.
    In addition, assume that $\floor{\alpha n} < \floor{\rho n}$ (otherwise, there is nothing to do).
    Then for every $s \in \set{\floor{\alpha n} + 1, \ldots, \floor{\rho n}}, s \ge 2$, it holds that for every facet $A \in \binom{V_L}{s - 2}$ of codimension $2$, by setting $\theta = \abs{A}/n$,
    \[
        \theta \cdot \max\set{U_G(\lambda, \beta), 0} \cdot \e^{\averageEnergy_{A}} \le \begin{cases}
            O\ab(C\frac{\ln{\Delta}}{\Delta^{3/8}}), & (1 - \beta)\theta \Delta \le \ln{\Delta} \\
            O(C^2), & (1 - \beta)\theta \Delta \ge \ln{\Delta}
        \end{cases}\;.
    \]
    and therefore,
    \[
        1 - \theta - \frac{\e^{\averageEnergy_A}}{n} > 0, \quad \frac{2(\theta + 2/n) \e^{\averageEnergy_A} \max\set{U_G(\lambda, \beta), 0}}{1 - \theta - \e^{\averageEnergy_A}/n} \le \frac{1}{2}.
    \]
    and thus
    \[
        \lambda_2(\pudnl_{A, 1}) \le \frac{1}{2s}.
    \]
\end{proposition}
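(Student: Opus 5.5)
We bound the three factors $\theta$, $\max\{U_G(\lambda,\beta),0\}$ and $\e^{\averageEnergy_A}$ separately, using the parameter window $s\le\rho n$. Once the product bound is in hand, \Cref{lem:one-side-trickle-down} gives the spectral bound directly.

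\textbf{Step 1: the factor $U_G$.} By \Cref{prop:RBG-spectrum}, $\lambda_2(A_G)^2\le 9\Delta$. Hence
\[
U_G(\lambda,\beta)\le 9\lambda(1-\beta)^2\Delta+\lambda^2(1-\beta)^4-1\le 9\bigl(\lambda(1-\beta)\sqrt\Delta\bigr)^2/\lambda+\lambda_0^2 .
\]
Using $\lambda(1-\beta)\sqrt\Delta\le C$, this gives $\max\{U_G,0\}\le 9C(1-\beta)\sqrt\Delta+\lambda_0^2$. Equivalently, $\max\{U_G,0\}\le 9C^2/\lambda+O(1)$. Multiplying by $\theta\le\rho\le 2\lambda$ cancels the $1/\lambda$, so
\[
\theta\max\{U_G,0\}\le 2\lambda\cdot 9\lambda(1-\beta)^2\Delta+O(\lambda)=O(C^2)+O(\lambda_0).
\]
This already yields the second case, provided $\e^{\averageEnergy_A}=O(1)$ there.

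\textbf{Step 2: the free energy $\averageEnergy_A$.} We have
\[
\averageEnergy_A=\frac{n}{n-|A|}\sum_j p_A(j)(\Delta-j)\phi_j, \qquad \phi_j=\ln\frac{1+\lambda\beta^j}{1+\lambda\beta^{j+1}}\le \lambda\beta^j(1-\beta).
\]
Therefore
\[
\averageEnergy_A\le \frac{\lambda(1-\beta)\Delta}{1-\theta}\sum_j p_A(j)\beta^j=\frac{\lambda(1-\beta)\Delta}{1-\theta}\cdot\frac{W(A)}{n}.
\]
We now split into the two regimes of the statement.
\begin{itemize}
\item If $(1-\beta)\theta\Delta\ge\ln\Delta$ (and in particular $\ge6$), \Cref{lem:useful-bound} gives
\[
\frac{W(A)}{n}\le \Delta^{-1/2}+O\!\left(\frac{H(\theta)+r_n}{(1-\beta)\theta\Delta}\right).
\]
Then $\averageEnergy_A\le 2\lambda(1-\beta)\sqrt\Delta+O\bigl(\lambda H(\theta)/\theta\bigr)\le 2C+O(\lambda\ln(\e/\theta))$. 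Since $\theta\le\rho\le 2\lambda$ and $\theta\ge\alpha$ is not too small, the function $\lambda\ln(\e/\theta)$ stays bounded. This is exactly where $\theta\ge\alpha n/n$ and the lower bound on $\alpha_0$ from \Cref{lem:clean-weight-concentration} will be needed. The conclusion is $\e^{\averageEnergy_A}=O(1)$.
\item If $(1-\beta)\theta\Delta\le\ln\Delta$, use the trivial bound $W(A)/n\le1$, so $\averageEnergy_A\le 2\lambda(1-\beta)\Delta$. This bound is too crude on its own, so we pair it with $\theta$. We have $\theta\max\{U_G,0\}\lesssim\theta\lambda(1-\beta)^2\Delta\le \lambda(1-\beta)\ln\Delta\le C\ln\Delta/\sqrt\Delta$. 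It remains to control $\e^{\averageEnergy_A}$. Bounding $\lambda(1-\beta)\Delta$ directly by $C\sqrt\Delta$ is bad, so instead we use that $\theta\ge\alpha$ is above the point where $W$ decays. Combining the degree-profile bound of \Cref{prop:degree-profile-concentration} with $\mean(\theta)\approx\e^{-(1-\beta)\theta\Delta}$, we get $\lambda(1-\beta)\Delta\,W(A)/n\le \tfrac18\ln\Delta+O(1)$ whenever $\lambda(1-\beta)\sqrt\Delta\le C$ is small. This gives $\e^{\averageEnergy_A}\le\Delta^{1/8}$, and hence the product $O(C\ln\Delta/\Delta^{3/8})$.
\end{itemize}
This second regime is the main obstacle. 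One must track that $(1-\beta)\Delta W(A)/n$ is at most about $\sqrt\Delta\cdot\e^{-(1-\beta)\theta\Delta}\cdot\Delta^{1/2}$, and use $\theta\ge\alpha_0$, that is, $\e^{-(1-\beta)\theta\Delta}\lesssim K(1+3(1-\beta)\sqrt\Delta)\theta$, to absorb the factors. I would first try substituting the defining equation of $\alpha_0$ and monotonicity in $\theta$.

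\textbf{Step 3: conclusion.} Since $\theta\le1/2$ and $\e^{\averageEnergy_A}\le\Delta^{1/8}=o(n)$, we get $1-\theta-\e^{\averageEnergy_A}/n\ge 1/3$. Both product bounds are at most a small constant once $C$ is small and $\Delta\ge\Delta_0$, and $\theta+2/n\le 2\theta$ for $s\ge2$ large. This gives the ratio $\le1/2$. Finally, plug into \Cref{lem:one-side-trickle-down}:
\[
\lambda_2\le\frac{U_G\e^{\averageEnergy_A}}{n(1-\theta-\e^{\averageEnergy_A}/n)}\le\frac{1}{4(\theta+2/n)n}=\frac1{4s}\le\frac1{2s}.
\]
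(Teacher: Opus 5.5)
Most of your plan matches the paper's proof, but there is a genuine gap in the regime $(1-\beta)\theta\Delta\ge\ln\Delta$.

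\textbf{What matches.} You use the same bound $\averageEnergy_A\le\frac{\lambda(1-\beta)\Delta}{1-\theta}\cdot\frac{W(A)}{n}$ and the same split at $(1-\beta)\theta\Delta=\ln\Delta$. You apply \Cref{lem:useful-bound} in the large regime. In the small regime you plan to use monotonicity of $W$ with the defining equation of $\alpha_0$ and the $\frac54$ bound of \Cref{lem:clean-weight-concentration}. That sketch is what the paper does. The paper also pads $A$ with two vertices, at cost $2(1-\beta)\Delta$ in $W$, since $|A|=s-2$ may sit just below $\alpha_0 n$. It then gets $\averageEnergy_A\le 12K(\lambda+\lambda(1-\beta)\sqrt{\Delta})\ln\Delta\le\frac18\ln\Delta$.

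\textbf{The gap.} You claim that $\lambda\ln(\e/\theta)$ stays bounded because $\theta\ge\alpha$, so $\e^{\averageEnergy_A}=O(1)$. This is false uniformly in $\Delta$, and uniformity is what you need: $\lambda_0$ and $C$ are fixed, while $\Delta\ge\Delta_0$ is arbitrary.

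Take $\lambda=\lambda_0$ and $1-\beta=(C/\lambda_0)\Delta^{-1/2}$; both hypotheses hold. Then $\alpha_0=\Theta(\ln\Delta/\sqrt{\Delta})$, while $\rho$ is a constant. The large regime contains $\theta=(\lambda_0/C)\ln\Delta/\sqrt{\Delta}$, where $\lambda\ln(\e/\theta)\approx\frac{\lambda_0}{2}\ln\Delta$. So \Cref{lem:useful-bound} only yields $\e^{\averageEnergy_A}\le\e^{2C+o_n(1)}(\e/\theta)^{c\lambda}$, which grows polynomially in $\Delta$ there. No lower bound on $\theta$ helps, because $\alpha_0$ itself is that small. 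In Step 1 you already spent the whole factor $\theta\le2\lambda$ to get $\theta\max\{U_G,0\}=O(C^2)$, so nothing is left to absorb this growth.

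\textbf{The fix (the paper's argument).} Do not decouple $\theta$ from $\e^{\averageEnergy_A}$. With $c\lambda_0<1$, the map $\theta\mapsto\theta^{1-c\lambda}$ is increasing, and $\theta\le\rho\le2\lambda$, so
\[
\theta\,\e^{\averageEnergy_A}\le \e^{2C+c\lambda+o_n(1)}\,\theta^{1-c\lambda}\le \e^{2C+c\lambda+o_n(1)}\,(2\lambda)^{1-c\lambda}.
\]
Since $\lambda^{-c\lambda}\le\e^{c/\e}$ and $\max\{U_G,0\}\le 9\lambda(1-\beta)^2\Delta$, the product is $O(\lambda^2(1-\beta)^2\Delta)=O(C^2)$. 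No lower bound on $\theta$ is used in this regime; the bound $\theta\ge\alpha_0$ matters only in the regime $(1-\beta)\theta\Delta\le\ln\Delta$.

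With this change your Step 3 goes through. There $\e^{\averageEnergy_A}$ is still bounded independently of $n$, since $\theta\ge\ln\Delta/\bigl((1-\beta)\Delta\bigr)$. Hence $1-\theta-\e^{\averageEnergy_A}/n\ge 1/3$ for large $n$.
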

\begin{proof}
    Since $\beta \in (0, 1)$, it holds that
    \[
        0 \le \phi_j \le \lambda(1 - \beta) \beta^j.
    \]
    By definition of $\averageEnergy_A$, it holds that
    \[
        \averageEnergy_A \le \frac{\lambda(1 - \beta) \Delta}{1 - \theta} \frac{W(A)}{n}.
    \]
    To bound $W(A)$, we extend $A$ to $A'$ by adding two extra vertices and thus $\abs{A'} \ge \alpha n$.
    This operation changes the value of $W(A)$ by at most $2(1 - \beta)\Delta$, meaning that
    \[
        W(A) \le W(A') + 2(1 - \beta) \Delta.
    \]
    When $(1 - \beta) \theta \Delta \le \ln{\Delta}$, by~\Cref{lem:clean-weight-concentration},
    \[
        \averageEnergy_A \le 4K\lambda(1 - \beta)\Delta(1 + 3(1 - \beta)\sqrt{\Delta})\alpha_0 \le 12K(\lambda + \lambda(1 - \beta)\sqrt{\Delta}) \ln{\Delta}.
    \]
    Pick $\lambda_0$ and $C$ as sufficiently small constants such that $12K(\lambda_0 + C) \le 1/8$.
    Therefore $\e^{\averageEnergy_{A}} \le \Delta^{1/8}$.
    In addition, by~\Cref{prop:RBG-spectrum}, $\max\set{U_G(\lambda, \beta), 0} \le 100\lambda(1 - \beta)^2 \Delta$.
    Therefore,
    \begin{align} \label{eq:parameter-rho}
        \theta \e^{\averageEnergy_{A}} \max\set{U_G(\lambda, \beta), 0} \le 100 \lambda(1 - \beta) (1 - \beta)\theta \Delta \e^{1/8} \le \frac{100C\ln{\Delta}}{\Delta^{3/8}}.
    \end{align}

    When $(1 - \beta)\theta\Delta > \ln{\Delta}$, by~\Cref{lem:useful-bound} and $\theta \le 1/2$,
    \begin{align*}
        \averageEnergy_A &\le \frac{\lambda(1 - \beta)\Delta}{1 - \theta} \ab(\e^{-(1 - \beta)\theta\Delta/2} + \frac{3(H(\theta) + r_n(u))}{(1 - \beta)\theta\Delta/2 - 1}) \\
        &\le 2\lambda(1 - \beta)\Delta \e^{-(1 - \beta)\theta\Delta/2} + \frac{6\lambda(1 - \beta)\Delta(H(\theta) + r_n(u))}{(1 - \beta)\theta\Delta/2 - 1} \\
        &\le 2\lambda(1 - \beta)\sqrt{\Delta} + 18\lambda \frac{H(\theta) + r_n(u)}{\theta}.
    \end{align*}
    We take $u = 2\ln{n}$ and obtain that $r_n(u) = O_{\Delta}(\ln{n}/n)$.
    Then we have
    \[
        \averageEnergy_A \le 2C + 18\lambda\ln{(\e/\theta)} + o_n(1).
    \]
    Take $\lambda_0$ and $C$ sufficiently small.
    Since $\theta \le \rho = O(\lambda)$, we can obtain that
    \begin{align*}
        \theta \e^{-\averageEnergy_A} \max\set{U_G(\lambda, \beta), 0} &\le 10000\lambda(1 - \beta)^2 \Delta \theta^{1 - 10000\lambda} \\
        &\le O\ab(\lambda^2(1 - \beta)^2 \Delta \lambda^{-10000\lambda}) \\
        &\le O(C^2).
    \end{align*}
    To show~\eqref{eq:parameter-rho}, we set $n_0$ sufficient large and for all $n \ge n_0$, the full lemma holds.
\end{proof}

Now we are ready to design our algorithm and thus prove~\Cref{thm:random-regular-bipartite-Ising-FPRAS}.
\begin{proof}[Proof of~\Cref{thm:random-regular-bipartite-Ising-FPRAS}]
    When $\beta = 0$, employ the main result of~\cite{KOGT26}.
    When $\beta = 1$, it is direct to see
    $$
        \Z_G(\lambda, 1) = (1 + \lambda)^{\abs V}
    $$
    and we can calculate it precisely in polynomial time.

    Assume that $\beta \in (0, 1)$.
    Take $u = 2\ln{n}$.
    Assume that~\Cref{prop:RBG-spectrum,prop:neighbor-intersection},~\Cref{cor:one-side-weight-concentration,prop:degree-profile-concentration} hold.
    By union bound, these events hold with probability $1 - o_n(1)$.
    Pick $\alpha$ and $\rho$ as~\eqref{eq:parameter-settings}.
    By~\Cref{prop:parameter-verfication-alpha,prop:parameter-verify-rho},~\Cref{thm:approximation-to-partition-function},~\Cref{lem:two-side-trickle-down} and~\Cref{lem:one-side-trickle-down} holds.
    When $\eps \le 4(2\e^{-n\KL{\rho}{\lambda/(1 + \lambda)}} + \e^{-\alpha(\ln{2} - 1/2)n})$, we just enumerate all configurations by brute force and calculate the total weights.
    This terminates in time $\poly{n, \eps^{-1}}$.

    Otherwise, when $\floor{\alpha n} = \floor{\rho n}$, we set $\wh{Z} \defeq \sum_{\ell \le \rho n, r \le \rho n} \Z_{\ell, r}$ and for a fixed $\ell, r \le \rho n$, by~\Cref{lem:two-side-trickle-down,lem:trickling-down} and thus~\Cref{thm:local-to-global}, there is a rapid sampler to sample from $\mu_{\ell, r}$.
    However, note that the instance is not self-reducible.
    Instead, by a similar argument in~\cite{KOGT26}, the random walks on the simplicial complex is self-reducible.
    Therefore, there is a randomized algorithm to output an $(\eps/5)$-approximation to $\Z_{\ell, r}$ in time $\poly{n, \eps^{-1}}$, leading to an $(\eps/5)$-approximation to $\wh{Z}$.
    Thus we can design an $\FPRAS$ for $\Z_{G}(\lambda, \beta)$.

    When $\floor{\alpha n} = \floor{\rho n}$, we set
    \[
        \wh{Z} \defeq \sum_{\ell \le \alpha n, r \le \alpha n} \Z_{\ell, r} + \sum_{\alpha n < s \le \rho n} \Z_{s}^{L} + \sum_{\alpha n < s \le \rho n} \Z_{s}^{R}
    \]
    By above, we only need to design an efficient sampler for $\mu_{\ell, r}$, $\mu_{s}^{L}$ and $\mu_{s}^R$.
    By~\Cref{prop:parameter-verfication-alpha,prop:parameter-verify-rho}, we apply~\Cref{lem:two-side-trickle-down,lem:one-side-trickle-down} and thus by~\Cref{lem:trickling-down,thm:local-to-global}, for every $\ell \le \alpha n, r \le \alpha n$ and $s \le \beta$, the down-up walks on all of $\mu_{\ell, r}, \mu_{s}^{L}$ and $\mu_{s}^R$ exhibit polynomial-time mixing.
    Therefore, we can output an $(\eps/5)$-approximation to $\wh{Z}$ in time $\poly{n, \eps^{-1}}$ with probability at least $3/4$, which is an $\eps$-approximation to $\Z_{G}(\lambda, \beta)$.
\end{proof}

\bibliographystyle{alpha}
\bibliography{refs}

\appendix
\section{Equivalence of Models} \label{sec:model-equivalence}
We discuss here the equivalence between subgraph-world models and spin systems on regular graphs.
Fix a $\Delta$-regular graph $G = (V, E)$.
For the physical Ising model on $G$ with inverse temperature $J$ and external field $h$, define the partition function
$$
    \Z_G^\spin(h, J) = \sum_{\sigma \in \set{-, +}^V} \exp\ab(-J\sum_{(u, v) \in E} \sigma_u \sigma_v + h \sum_{v \in V} \sigma_v).
$$
Since $G$ is a $\Delta$-regular graph, it holds that
$$
    \Z_G^\spin(h, J) = \e^{-J\Delta n - 2hn} \Z_G(\lambda, \beta)
$$
where $\beta = \e^{-4J}$ and $\lambda = \e^{2J + 2h}$.
Therefore it suffice to consider the Ising model defined by the subgraph world when the underlying graph is regular.

\section{Proofs of Concentration Inequalities} \label{sec:concentration-proof}
We state deferred proofs for concentration inequalities in~\Cref{sec:random-regular-bipartite-Ising} here.

\oneSideWeightConcentration*
\begin{proof}
    When $\beta = 1$, $W(A) = n$ is deterministic.
    So we assume that $\beta \in [0, 1)$.
    We apply the following inequality: assume that there are $N$ elements with $K$ marked.
    Define the random variable $X$ as the number of marked elements in a picked uniform subset of size $M$.
    Then
    \begin{align} \label{eq:sampling-bound}
        \E{\beta^X} \le \e^{-\frac{(1 - \beta)KM}{N}}.
    \end{align}
    To show~\eqref{eq:sampling-bound}, for every element $i$, define $w_i = \beta$ if $i$ is marked and otherwise $w_i = 1$.
    Then by the Maclaurin's inequality,
    \[
        \E{\beta^X} = \frac{\sum_{1 \le i_1 < \ldots < i_M \le N} w_{i_1} \cdots w_{i_M}}{\binom{N}{M}} \le \ab(\frac{1}{N} \sum_{i = 1}^N w_i)^{M} \le \ab(1 - \frac{(1 - \beta)K}{N})^{M} \le \e^{-\frac{(1 - \beta)KM}{N}}.
    \]
    
    Now we turn to the lemma.
    Define $X$ be the $\Delta \abs{A}$ selected right half-edges and we uniformly give a random order to $X_1, \ldots, X_{\Delta\abs{A}}$.
    Define the Doob martingale
    \[
        \forall i = 1, \ldots, \Delta\abs{A}, \quad Z_i = \E{W(A) \mid X_1, \ldots, X_{i - 1}}.
    \]
    Now we fix a step $i = 1, \ldots, \Delta\abs{A}$ and a history $\calF_{i - 1} = \sigma(X_1, \ldots, X_{i - 1})$.
    Let $U$ be the remaining half-edges.
    For a possible next half-edge $y \in U$, define $f(y) \defeq \E{W(A) \mid \calF_{i - 1}, X_i = y}$.
    For two distinct possible half-edges $x, y \in U$, we bound $\abs{f(x) - f(y)}$ by construct a coupling of their remaining selected half-edges.
    When $X_i = x$, generate $F_x$ uniformly from the $(\abs{A} - i)$-size subset of $U \setminus \set{x}$.
    We construct $F_y$ from $F_x$ as
    \[
        F_y \defeq \begin{cases}
            F_x & y \notin F_x \\
            (F_x \setminus \set{y}) \cup \set{x} & y \in F_x
        \end{cases}\;.
    \]
    It is trivial to see $(F_x, F_y)$ is a valid coupling of the respective uniform distributions of possible remaining selected half-edges conditional on $X_i = x$ and $X_i = y$.
    When $y \in F_x$, it holds that $F_x \cup \set{x} = F_y \cup \set{y}$, meaning that the final selected subsets coincide.
    Therefore, they share the same value of $W(A)$.
    For $y \notin F_x$, by coupling, $F_x = F_y$ and thus the final selected subsets differ only at $x$ and $y$.
    In this case, the subset $F \defeq F_x = F_y$ is an uniform $(\Delta\abs{A} - i)$-size subset of $U \setminus \set{x, y}$.

    When $x, y$ are incident to a same vertex, the two cases show a same value of $W(A)$.
    Otherwise, suppose that $x$ is incident to $u$ and $y$ is incident to $v$.
    Let $z_u, z_v$ be the number of half-edges incident to $u$ and $v$ respectively hitting $A$ in the history profile $X_1, \ldots, X_{i - 1}$ and $Z_u, Z_v$ be the number of half-edges incident to $u$ and $v$ respectively hitting $A$ in the future profile $F$.
    The difference of two future values of $W(A)$ is at most
    \begin{align*}
        \abs{\ab(\beta^{z_u + 1 + Z_u} + \beta^{z_v + Z_v}) - \ab(\beta^{z_u + Z_u} + \beta^{z_v + 1 + Z_v})} &= (1 - \beta) \abs{\beta^{z_u + Z_u} - \beta^{z_v + Z_v}} \\
        &\le (1 - \beta)\ab(\beta^{z_u + Z_u} + \beta^{z_v + Z_v}).
    \end{align*}
    By~\eqref{eq:sampling-bound} with $N = \Delta n - i - 1$, $K = \Delta - z_u - 1$ and $M = \Delta\abs{A} - i$,
    \begin{align*}
        \E{\beta^{z_u + Z_u}} &= \beta^{z_u} \E{\beta^{Z_u}} \\
        &\le \beta^{z_u} \exp\ab(-(1 - \beta) \frac{(\Delta - z_u - 1)(\Delta\abs{A} - i)}{\Delta n - i - 1}) \\
        &\le \exp\ab(-(1 - \beta)\ab(z_u + \frac{\Delta - z_u - 1}{\Delta n - i - 1}\ab(\Delta \abs{A} - i))) \\
        &\le \exp\ab(-(1 - \beta)\frac{\Delta\abs{A} - i}{2n}).
    \end{align*}
    The same bound also holds for $v$.
    Therefore,
    \begin{align*}
        \abs{f(x) - f(y)} \le 2(1 - \beta)\e^{-(1 - \beta)(\Delta\abs{A} - i)/(2n)}.
    \end{align*}
    Then by Popoviciu's inequality on variances (\Cref{prop:Popoviciu-inequality}) and $1 - \e^{-x} \ge x/2$ on $[0, 1]$,
    \begin{align*}
        V_{\Delta\abs{A}} &\defeq \sum_{i = 1}^{\Delta\abs{A}} \E{(Z_i - Z_{i - 1})^2 \mid \calF_{i - 1}} \\
        &\le \frac{1}{4} \sum_{i = 0}^{\Delta\abs{A} - 1} 4(1 - \beta)^2\e^{-(1 - \beta)i/n} \\
        &\le \frac{(1 - \beta)^2}{1 - \e^{-(1 - \beta)/n}} \\
        &\le 2(1 - \beta)n.
    \end{align*}
    Trivially, $\abs{Z_i - Z_{i - 1}} \le 2(1 - \beta)$.
    Using~\Cref{lem:Freedman-inequality} with $v = 2(1 - \beta)n$ and $B = 2(1 - \beta)$, it holds that
    \begin{align*}
        \Pr{\abs{W(A) - \E{W(A)}} \ge t} &= \Pr{\abs{W(A) - \E{W(A)}} \ge t, V_{\Delta\abs{A}} \le 2(1 - \beta)n} \\
        &\le 2\exp\ab(-\frac{t^2}{2(1 - \beta)(2n + 2t/3)}).
    \end{align*}
    Thus we conclude the lemma.
\end{proof}

\degreeProfileConcentration*
\begin{proof}
    For a subset $A \subseteq V_L$ of size $\theta n$, the probability of an admissible degree profile $p$ is
    \[
        P \defeq \frac{\binom{n}{np(0), \ldots, np(\Delta)} \prod_{j = 0}^{\Delta} \binom{\Delta}{j}^{n p(j)}}{\binom{\Delta n}{\Delta \theta n}}.
    \]
    We use the following crude bound
    \[
        \binom{n}{np_0, \ldots, np_{\Delta}} \le \e^{nH(p)}, \quad \binom{\Delta n}{\Delta \theta n} \ge \frac{\e^{\Delta n H(\theta)}}{\Delta n + 1}
    \]
    and obtain that
    \[
        P \le (\Delta n + 1) \e^{n H(p) - \Delta n H(\theta)} \prod_{j = 0}^{\Delta} \binom{\Delta}{j}^{np(j)}.
    \]
    On the other side, for every admissible $p$, it holds that $\sum_{j} p(j) = 1$ and $\sum_{j} j p(j) = \Delta \theta$.
    Therefore,
    \begin{align*}
        \KL{p}{\Phi_{\theta}} &= \sum_{j = 0}^{\Delta} p(j) \ln{\frac{p(j)}{\binom{\Delta}{j} \theta^j (1 - \theta)^{\Delta - j}}} \\
        &= \Delta H(\theta) - H(p) - \sum_{j = 0}^{\Delta} p(j) \ln{\binom{\Delta}{j}}.
    \end{align*}
    Substituting the identity into the upper bound, it holds that
    \begin{align*}
        \Pr{p_A = p} \le (\Delta n + 1) \e^{-n\KL{p}{\Phi_{\theta}}}.
    \end{align*}
    By union bound and the definition of $r_n(u)$,
    \begin{align*}
        \Pr{\exists A~\text{violates~\eqref{eq:degree-profile-concentration}}} &\le 2 \sum_{k = 0}^n \binom{n}{k} (n + 1)^{\Delta + 1} (\Delta n + 1) \exp\ab(-n\ab(H(k/n) + r_n(u))) \\
        &\le 2n(\Delta n + 1) (n + 1)^{\Delta + 1} \exp\ab(-n r_n(u)) \\
        &\le \e^{-u}.
    \end{align*}
    Thus we conclude the lemma.
\end{proof}

\end{document}